\newtheorem{theorem}{Theorem}[section]
\newtheorem{lemma}[theorem]{Lemma}
\newtheorem{proposition}[theorem]{Proposition}
\newtheorem{cor}[theorem]{Corollary}
\theoremstyle{definition}
\newtheorem{definition}[theorem]{Definition}
\theoremstyle{remark}
\newtheorem{remark}[theorem]{Remark}
\numberwithin{equation}{section}
\newcommand{\R}{\mathbb{R}}  % The real numbers.
\newcommand{\N}{\mathbb{N}}
\newcommand{\de}{\mbox{ d}}
\newcommand{\norm}[1]{\left\|#1\right\| }
\newcommand{\desude}[2]{\frac{\partial #1}{\partial #2}}
\newcommand{\desu}[2]{\frac{d #1}{d #2}}
\newcommand{\fara}{\mathcal{F}}
\newcommand{\snabla}{\slashed{\nabla}}
\newcommand{\sdelta}{\slashed{\Delta}}
\newcommand{\lbar}{{\underline{L}}}
\newcommand{\curl}{\slashed{\mbox{curl }}}
\newcommand{\dive}{\slashed{\mbox{div }}}
\newcommand{\alphabar}{{\underline{\alpha}}}
\newcommand{\gbar}{\slashed{g}}
\newcommand{\fdual}{{^\star}F}
\newcommand{\svol}{\slashed{\varepsilon}}
\newcommand{\duu}{\mathfrak{D}_{u_1}^{u_2}}
\newcommand{\ddue}{{\slashed{\mathcal{D}}}^\star_2}
\newcommand{\duno}{{\slashed{\mathcal{D}}}^\star_1}
\newcommand{\daiv}{\text{div}\,}
\newcommand{\cherl}{\text{curl}\,}
\newcommand{\phibar}{\underline{\phi}}
\newcommand{\desphere}{\de \mathbb{S}^2}
\newcommand{\spheq}{\stackrel{\mathbb{S}^2}{=}}
\newcommand{\slie}{\slashed{\mathcal{L}}}
\newcommand{\ttildeu}{\tilde{\tilde u}_n}
\newcommand{\ttildev}{\tilde{\tilde v}_n}
\newcommand{\mall}{m_{\text{all}}[\phi]}
\newcommand{\snorm}[4]{\norm{#1}_{C_{u_0, v_0}; #2; #3, #4}}
\newcommand{\fluxob}{\fara^\smallsetminus}
\newcommand{\fackip}{Fackerell--Ipser }
\newcommand{\pvar}{\stackrel{_{(1)}}{P}}
\newcommand{\pbarvar}{\stackrel{_{(1)}}{\underline P}}
\newcommand{\Psivar}{\stackrel{_{(1)}}{\Psi}}
\newcommand{\Psibarvar}{\stackrel{_{(1)}}{\underline \Psi}}
\newcommand{\psivar}{\stackrel{_{(1)}}{\psi}}
\newcommand{\avar}{\stackrel{_{(1)}}{\alpha}}
\newcommand{\abarvar}{\stackrel{_{(1)}}{\alphabar}}
\newcommand{\sigmavar}{\stackrel{_{(1)}}{\sigma}}
\newcommand{\rhovar}{\stackrel{_{(1)}}{\rho}}
\newcommand{\chivar}{\stackrel{_{(1)}}{\hat \chi}}
\newcommand{\chibarvar}{\stackrel{_{(1)}}{\hat {\underline \chi}}}
\newcommand{\conplus}{{\overline C}}
\newcommand{\conminus}{{\underline C}}
\newcommand{\future}{J^+}
\begin{document}

\title{The spin $\pm$1 Teukolsky equations and the Maxwell system on Schwarzschild}

\author{Federico Pasqualotto}
\address{\small Princeton University, Department of Mathematics, Fine~Hall,~Washington~Road,~Princeton,~NJ~08544,~United~States\vskip.2pc \small University of Cambridge, Department of Pure Mathematics and Mathematical
Statistics, Wilberforce~Road,~Cambridge~CB3~0WA,~United~Kingdom\vskip.2pc }
\email{fp2@princeton.edu}

\maketitle

\begin{abstract}
In this note we prove decay for the spin $\pm$1 Teukolsky Equations on the Schwarzschild spacetime. These equations are those satisfied by the extreme components ($\alpha$ and $\alphabar$) of the Maxwell field, when expressed with respect to a null frame. The subject has already been addressed in the literature, and the interest in the present approach lies in the connection with the recent work by Dafermos, Holzegel and Rodnianski on linearized gravity  [M. Dafermos, G. Holzegel and I. Rodnianski, \emph{The linear stability of the Schwarzschild solution to gravitational perturbations},  \hyperlink{http://arxiv.org/abs/1601.06467}{preprint (2016)}]. In analogy with the spin $\pm2$ case, it seems difficult to directly prove Morawetz estimates for solutions to the spin $\pm1$ Teukolsky Equations. By performing a differential transformation on the extreme components $\alpha$ and $\alphabar$, we obtain quantities which satisfy a \fackip Equation, which \emph{does} admit a straightforward Morawetz estimate, and is the key to the decay estimates. This approach is exactly analogous to the strategy appearing in the aforementioned work on linearized gravity. We achieve inverse polynomial decay estimates by a streamlined version of the physical space $r^p$ method of Dafermos and Rodnianski. Furthermore, we are also able to prove decay for all the components of the Maxwell system. The transformation that we use is a physical space version of a fixed-frequency transformation which appeared in the work of Chandrasekhar \cite{chandrateu}. The present note is a version of the author's master thesis and also serves the ``pedagogical'' purpose to be as complete as possible in the presentation.
\end{abstract}

\section{Introduction}

The subject of black hole stability has received a good amount of attention lately, as research efforts are focused on proving the full nonlinear stability of the Kerr family of black holes. See the lecture notes \cite{lecturenotes} for a comprehensive introduction on the topic. The interest in the aforementioned problem stems from the fundamental question of whether the Kerr solution indeed provides an appropriate description of physical reality.

In an attempt to address the fully nonlinear problem, researchers have been following a natural path: first, one studies the covariant scalar wave equation (spin 0). Then, one studies the Maxwell equations (in which the extreme components satisfy spin $\pm$1 Teukolsky equations). Finally, one seeks to study the linearized Einstein equations (in which the extreme curvature components satisfy spin $\pm$2 Teukolsky equations). Eventually, one hopes that this process would lead to a deeper understanding of the nonlinear structure, in order to indeed address the nonlinear stability of the Kerr family.

Hence, the subject of decay of linear waves on a black-hole background has been recently studied, with many contributions by different research groups. For the first step of the ``linear program'', these efforts culminated in the proof of decay of scalar waves on a Kerr background for $|a|< M$, by Dafermos, Rodnianski and Shlapentokh-Rothman \cite{fullkerr}.

To proceed in the outlined program, the Maxwell equations have also been studied. Boundedness and decay for solutions to the Maxwell equations has been first proved by Blue (\cite{blue}) on the Schwarzschild background. There were advances in extending these results to the Kerr setting in the slowly rotating case ($a \ll M$) by Andersson and Blue in \cite{blue2}.
Furthermore, Sterbenz and Tataru proved local energy decay for the Maxwell field on a large class of spherically symmetric spacetimes in \cite{stetat}. In addition, Metcalfe, Tataru and Tohaneanu proved pointwise decay for the Maxwell field under the assumption of local energy decay, for a fairly general class of asymptotically flat spacetimes, in the paper \cite{mettattoh}. See also \cite{sari}.

Finally, Andersson, B\"ackdahl and Blue found a new way of producing robust energy estimates on the Schwarzschild background, exploiting a super-energy tensor. The relevant paper is \cite{newblue}.

Recently, there has also been a substantial advance in the last step of this ``linear program'', i.e.~the proof of linear stability of the Schwarzschild metric under gravitational perturbations. Indeed, a recent result of Dafermos, Holzegel and Rodnianski \cite{linearized} shows that the Schwarzschild metric is stable under linearized gravitational perturbations.

\subsection{Maxwell: why another proof?}

In this paper, we return to the topic of decay of the Maxwell field on a curved background. The aim of the note is threefold: first, we provide a simple proof of pointwise decay of the Maxwell field on the Schwarzschild background. Second, we adopt the further ``didactic'' aim to be as detailed as possible in our exposition. The third and main motivation for this work, though, lies in the connection with the aforementioned problem of linear stability of the Schwarzschild metric. It has become evident that a very similar approach to the one in \cite{linearized} can be adopted to address the decay properties of the spin $\pm$1 Teukolsky equations and of the Maxwell system on Schwarzschild. 

We briefly recall the strategy followed by the authors in \cite{linearized}. Given a solution to the equation for the extreme curvature components (i.e., the spin $\pm$2 Teukolsky equation), the authors find a second order differential transformation that performs the following: if we apply the transformation on the extreme curvature component, the resulting expression satisfies a ``good'' equation (i.e., an equation for which Morawetz and energy estimates can be proved.) The resulting equation is called the Regge--Wheeler equation.
The relevant quantity enables us to estimate all the components of the field, just by using transport equations.

We wish to follow the same path in the context of the Maxwell system, which we study on the Schwarzschild background. We start from the spin $\pm$1 Teukolsky equations, which are satisfied by the extreme components $\alpha$ and $\alphabar$. We exploit an elementary transformation (which can be found, in its fixed-frequency form, in the work of Chandrasekhar \cite{chandrateu}). The transformation takes the spin $\pm$1 Teukolsky equations into a ``good'' equation, called the \fackip equation. The transformed quantity has the required property of vanishing on the zeroth mode and of satisfying good integrated decay estimates. We finally use the transformed quantity to estimate, via transport equations, first the extreme components $\alpha$ and $\alphabar$, and subsequently the remaining components of the Maxwell field.

Let us finally note that earlier versions of the results contained in this note were originally obtained in my master thesis at ETH Z\"urich \cite{mythes}.

\subsection{Outline of the note} We will first motivate our analysis in Section~\ref{sec:motiv}, giving an outline of the work \cite{linearized}, as well as a sketch of the argument in the present note. 

We subsequently introduce some necessary notation and the null decomposition of the Maxwell system in Section~\ref{sec:notation}, as well as the crucial transformation which takes the spin $\pm$1 Teukolsky equations into the \fackip equation.

We then proceed to state the main results of this note in Section~\ref{sec:results}:
\begin{itemize}
\item decay for solutions to the spin $\pm$ 1 Teukolsky equations (Theorem~\ref{prop:decayteu}),
\item decay for solutions to the Maxwell system (Theorem~\ref{prop:decaymax}).
\end{itemize}

Subsequently, we prove integrated decay estimates for solutions to the \fackip equation, in Section~\ref{sec:rpmethod}. We follow the $r^p$-method approach by Dafermos--Rodnianski, as in~\cite{newmihalis}.

From these integrated estimates, via a combination of Sobolev embedding and the Gronwall inequality, we obtain decay for solutions to the spin $\pm$1 Teukolsky equations in Section~\ref{sec:decayteu}.

We improve the decay for the $\alpha$ component in Section~\ref{sec:alphaimproved}.

We then extend the decay to the full Maxwell system in Section~\ref{sec:decaymid}.

We conclude the note with the Appendices, in which we collect important lemmas and calculations. In the last Appendix \ref{sec:compare} we finally compare the Morawetz estimate obtained in this paper with that achieved in previous work by Andersson, B\"ackdahl and Blue \cite{newblue}.

\subsection{Acknowledgements}

I would like to thank my advisor, Prof.~Mihalis Dafermos, for suggesting the problem to me and for his guidance. I would also like to thank Stefanos Aretakis, Georgios Moschidis and Yakov Shalpentokh-Rothman for very insightful discussions.

\section{Motivation and main idea of the work}\label{sec:motiv}

\subsection{Introducing the work on linearized gravity}

It is sensible to recall here the strategy the authors follow in the paper \cite{linearized}.  We warn the reader that this brief subsection does not have any claim of completeness, and we refer to the paper \cite{linearized} for the full details.

In said work, the authors consider the vacuum Einstein equations:
\begin{equation}\label{eq:einstein}
\textbf{Ric}_{\mu\nu} = 0,
\end{equation}
with respect to the unknown metric $\boldsymbol{g}$, which is considered to be ``near'' the Schwarzschild metric. In the following, bold typeface will always denote quantities associated to the metric $\boldsymbol{g}$, whereas quantities in unbolded typeface will always be associated to the Schwarzschild metric $g$.

The authors perform a suitable linearization as follows. Let $(\boldsymbol{\hat L}, \boldsymbol{\hat{\lbar}},\boldsymbol{e}_1, \boldsymbol{e}_2)$ be a normalized null frame with respect to the metric $\boldsymbol{g}$. Also, $\boldsymbol{{}^\star}$ denotes the Hodge dual with respect to two indices:
\begin{equation}
\boldsymbol{{}^\star \! R}_{\mu\nu\kappa\lambda} := \frac 1 2 \boldsymbol{\varepsilon}_{\mu \nu \alpha \beta} \boldsymbol{R}\indices{_\kappa_\lambda^\alpha^\beta}.
\end{equation}
Note that, here, $\boldsymbol{{}^\star}$ is in bold typeface, hence the Hodge dual is calculated using the (bold) volume form $\boldsymbol{\varepsilon}$, which is the natural volume form induced by the metric $\boldsymbol{g}$.

In order to write the linearized Einstein equations, the authors decompose the field in null frame. They consider the Ricci coefficients:
\begin{equation}\label{eq:riccicoeff}
\begin{array}{ll}
\boldsymbol{\chi}_{AB} := \boldsymbol{g}(\boldsymbol{\boldsymbol{\nabla}}_A \boldsymbol{\hat L}, \boldsymbol{e}_B), & \boldsymbol{\underline{\chi}}_{AB} := \boldsymbol{g}(\boldsymbol{\nabla}_A \boldsymbol{\hat{\lbar}}, \boldsymbol{e}_B),\\
\boldsymbol{\eta}_A := - \frac 1 2 \boldsymbol{g}(\boldsymbol{\nabla}_{\boldsymbol{\hat{\lbar}}}\boldsymbol{e}_A, \boldsymbol{\hat L}), & \boldsymbol{\underline{\eta}}_A := - \frac 1 2 \boldsymbol{g}(\boldsymbol{\nabla}_{\boldsymbol{\hat{L}}}\boldsymbol{e}_A, \boldsymbol{\hat{\lbar}}) \\
\boldsymbol{\hat{\omega}} := \frac 1 2 \boldsymbol{g}(\boldsymbol{\nabla}_{\hat L} \boldsymbol{\hat{\lbar}}, \boldsymbol{\hat L}), & \boldsymbol{\hat{\underline{\omega}}}:= \frac 1 2 \boldsymbol{g}(\boldsymbol{\nabla}_{\boldsymbol{\hat{\lbar}}}\boldsymbol{\hat L}, \boldsymbol{\hat{\underline{L}}}), \\
\boldsymbol{\zeta} := \frac 1 2 \boldsymbol{g}(\boldsymbol{\nabla}_A \boldsymbol{\boldsymbol{\hat{L}}}, \boldsymbol{\hat{\lbar}}).
\end{array}
\end{equation}
The authors consider the null components of the Riemann tensor as well, i.e.~the set of components
\begin{equation}\label{eq:nullcomp}
\begin{array}{ll}
\boldsymbol{\alpha}_{AB} := \boldsymbol{R}(\boldsymbol{e}_A, \boldsymbol{\hat L}, \boldsymbol{e}_B, \boldsymbol{\hat L}), & \boldsymbol{\alphabar}_{AB} := \boldsymbol{R}(\boldsymbol{e}_A, \boldsymbol{\hat{\lbar}}, \boldsymbol{e}_B, \boldsymbol{\hat{\lbar}}), \\
\boldsymbol{\beta}_A := \frac 1 2 \boldsymbol{R}(\boldsymbol{e}_A, \boldsymbol{\hat L}, \boldsymbol{\hat{\lbar}}, \boldsymbol{\hat L}), & {\boldsymbol{\underline{\beta}}}_A := \boldsymbol{R}(\boldsymbol{e}_A, \boldsymbol{\hat{\lbar}}, \boldsymbol{\hat{\lbar}}, \boldsymbol{\hat L}), \\
\boldsymbol{\rho} := \frac 1 4 \boldsymbol{R}(\boldsymbol{\hat L}, \boldsymbol{\hat{\lbar}}, \boldsymbol{\hat L}, \boldsymbol{\hat{\lbar}}), & \boldsymbol{\sigma} := \frac 1 4 \boldsymbol{{}^\star} \! \boldsymbol{R}(\boldsymbol{\hat L}, \boldsymbol{\hat{\lbar}}, \boldsymbol{\hat L}, \boldsymbol{\hat{\lbar}}).
\end{array}
\end{equation}
They proceed to linearize the Einstein vacuum equations~(\ref{eq:einstein}) around the Schwarzschild metric using this null framework. In other words, they write the unknown metric
\begin{equation*}
\boldsymbol{g}= g \ + \stackrel{_{\tiny (1)}}{g},
\end{equation*}
where $g$ is the Schwarzschild metric, and $\stackrel{_{\tiny (1)}}{g}$ is the variation of the metric. They plug this expression for $\boldsymbol{g}$ into Equation~(\ref{eq:einstein}), decompose in null components and eliminate the nonlinear terms in $\stackrel{_{(1)}}{g}$. They therefore obtain a suitable linearization of the Einstein equations around Schwarzschild.

The resulting equations are \emph{coupled}, meaning that in every equation we have more than one component of the field. In the following, we adopt the notation from \cite{linearized}. In particular, quantities with the superscript $^{(1)}$ correspond to the variation of the metric, whereas quantities without subscript or superscript correspond to their original Schwarzschild values.

Remarkably, the perturbed extreme components $\avar$ and $\abarvar$ were shown by Teukolsky in \cite{teukolsky} to satisfy \emph{decoupled} equations. In the notation of \cite{linearized}, the equation for $\avar$ is
\begin{equation}\label{eq:gteualpha}
\begin{aligned}
\snabla_4 \snabla_3 \avar &+ \left( \frac 1 2 \text{tr} \underline \chi + 2 \underline{\hat \omega}\right)\snabla_4 \avar+ \left( \frac 5 2\text{tr}\chi-\hat \omega \right) \snabla_3 \avar- \sdelta \avar\\
&+\avar(5 \underline{\hat \omega} \, \text{tr}\chi - \hat \omega \, \text{tr}\underline{\chi}-4 \rho +2r^{-2}+ \text{tr}\chi \text{tr}\underline{\chi}-4 \hat \omega \underline{\hat \omega}) = 0.
\end{aligned}
\end{equation}
On functions, we have the definition $\Omega \snabla_3 = \partial_u$, $\Omega \snabla_4 = \partial_v$. Here, $\Omega = \sqrt{1-2M/r}$. $\snabla$ indicates the induced connection on the spheres of constant $(u,v)$ Schwarzschild coordinates, and $\sdelta$ indicates the corresponding covariant Laplacian. Also, the definition of unbolded Ricci coefficients and unbolded null components is exactly as in (\ref{eq:riccicoeff}) and (\ref{eq:nullcomp}), replacing all the boldface quantities by the unbolded ones. Furthermore, $e_3 = \hat \lbar = \Omega^{-1} \partial_u$, $e_4 = \hat L = \Omega^{-1}\partial_v$.

The core of the proof is the following: starting from Equation~(\ref{eq:gteualpha}), the authors find a quantity $\Psivar$ which satisfies a ``good'' equation. Such quantity $\Psivar$ (and analogously its companion $\Psibarvar$) can be defined in terms of the sole extreme component $\avar$ (resp.~$\abarvar$), which satisfies a spin $\pm$2 Teukolsky equation. Here are some definitions
\begin{align*}
&\psivar := - \frac 1 2 r^{-1}\Omega^{-2} \snabla_{3}(r \Omega^2 \avar),
& \stackrel{_{(1)}}{P} := r^{-3}\Omega^{-1}\snabla_{3}(r^3 \Omega \psivar),\\
&\stackrel{_{(1)}}{\underline{\psi}} := \frac 1 2 r^{-1}\Omega^{-2} \snabla_{4}(r \Omega^2 \abarvar), 
&\stackrel{_{(1)}}{\underline{P}} := r^{-3}\Omega^{-1}\snabla_{4}(r^3 \Omega \stackrel{_{(1)}}{\underline \psi}).
\end{align*}
As usual, the superscript ${}^{(1)}$ indicates that the quantity is the one relative to the perturbed metric. As before, on functions, $\Omega \snabla_3 = \partial_u$, $\Omega \snabla_4 = \partial_v$.
We finally define the rescaled versions of $\pvar$ and $\pbarvar$:
\begin{equation*}
\boxed{\Psivar:= r^5 \pvar, \qquad \Psibarvar := r^5 \pbarvar.}
\end{equation*}
In this setting, $\Psivar$ satisfies the \textbf{Regge--Wheeler equation}:
\begin{equation}\label{eq:rwforpsi} \boxed{
\Omega \snabla_3 (\Omega \snabla_4 \Psivar) - (1-\mu)\slashed \Delta \Psivar + \left(\frac 4 {r^2}- \frac {6M}{r^3}\right)(1-\mu)\Psivar = 0.}
\end{equation}
Here, as before, $\Omega = \sqrt{1-2M/r}$. Also, on functions, $\Omega \snabla_3 = \partial_u$, $\Omega \snabla_4 = \partial_v$. The same equation is satisfied by $\Psibarvar$. Furthermore, $\mu = 2M/r$.
\begin{remark}
Most importantly, solutions to Equation~(\ref{eq:rwforpsi}) satisfy an energy conservation inequality and a Morawetz estimate.
\end{remark}
\begin{remark}\label{rmk:tens}
Equation~(\ref{eq:rwforpsi}) is a \emph{tensorial} equation. This equation is typically stated by other authors in the corresponding scalar form. See Remark 7.1 in \cite{linearized} for the form of the corresponding scalar equation.
\end{remark}
\begin{remark}
We further notice that $\Psivar$ can also be defined solely in terms of the middle components $\sigmavar$ and $\rhovar$ via angular derivation:
\begin{equation}\label{eq:pdefinition}
\Psivar_{AB} := r^5 \left( {\slashed{\mathcal{D}}}^\star_2 \duno (- \rhovar, \sigmavar)+ \frac 3 4 \rho \text{tr}\chi (\chivar - \chibarvar)\right).
\end{equation}
Here, the definition of the spherical operators $\duno$ and $\ddue$ is as follows:
\begin{equation}\label{eq:dunodduedef}
\duno(\rho, \sigma):= -\snabla_A \rho + \svol_{AB}	\snabla^B \sigma \qquad \ddue \xi := - \frac 1 2 (\snabla_B \xi_A + \snabla_A \xi_B -( \dive \xi) \gbar_{AB}).
\end{equation}
\end{remark}
\begin{remark}\label{rmk:gauge}
An additional difficulty, in the linearized gravity case, is the existence of \emph{pure gauge solutions}. These are solutions of the linearized Einstein equations arising from changes of coordinates which preserve the null structure of the metric. For the full formulation, we refer to the paper \cite{linearized}, especially Sections 2.1.4 and 6.1.
\end{remark}

\begin{remark}
There is a connection between the kernel of $\ddue \duno$ and solutions of the linearized gravity equations corresponding to ``infinitesimal perturbations towards Kerr'' of the Schwarzschild solution. As we shall see, this has an analogy in the Maxwell case.
\end{remark}

\subsection{Non-radiating modes}

Let us now turn our attention to the Maxwell system on the Schwarzschild spacetime. We remark that the Maxwell equations possess non-trivial stationary solutions, whose null components decay at spacelike infinity.

It is an easy computation to show that the following expression gives stationary solutions to the Maxwell system on Schwarzschild:
\begin{equation}\label{eq:statsol}
F = q_B r^{-2} \svol_{AB} + q_E r^{-2} \left(1-\frac{2M}{r}\right) \de t \wedge \de r^*.
\end{equation}
Here, $q_E$ and $q_B$ are two real parameters, respectively the ``electric charge'' and the ``magnetic charge''.

Excluding such ``stationary modes'' is a crucial element of every proof of decay of the Maxwell field on the Schwarzschild manifold.

Let us make a key remark: these stationary solutions have vanishing $\alpha$ and $\alphabar$ components. In other words, the extreme components do not ``see'' the stationary modes. Hence, we seek to define a quantity starting from $\alpha$ and $\alphabar$.

\subsection{Key to the proof for the spin $\pm 1$ Teukolsky Equations and the Maxwell system} 

Let us consider the Maxwell equations in a null frame, with the Schwarzschild metric as a background.

As was proved by Bardeen and Press in \cite{bardeen1973}\,(in its scalar, fixed-frequency version), the extreme components ($\alpha$ and $\alphabar$) satisfy the so-called \textbf{spin $\pm$1 Teukolsky equations}:
\begin{align}\label{eq:teuuno}
\snabla_\lbar \snabla_L (r \alpha_A) + \frac 2 r \left(1-\frac{3M}{r}\right) \snabla_\lbar (r \alpha_A) - (1-\mu)\slashed{\Delta}(r\alpha_A) + \frac{1-\mu}{r^2} r \alpha_A &= 0,
\\\label{eq:teudue}
\snabla_\lbar \snabla_L (r \alphabar_A) - \frac 2 r \left(1-\frac{3M}{r}\right) \snabla_L (r \alphabar_A) - (1-\mu)\slashed{\Delta}(r\alphabar_A) + \frac{1-\mu}{r^2} r \alphabar_A &= 0.
\end{align}
For the $\snabla$ notation, and for the definition of $\sdelta$, refer to Section~\ref{sec:notation}.

\begin{remark}
Notice that these are \emph{tensorial} equations, cf.~Remark~\ref{rmk:tens}.
\end{remark}

Here, as usual, $\mu = 2M/r$. These equations are badly behaved from the point of view of energy estimates, due to the first-order term. We now consider the quantities
\begin{equation}\label{eq:defmwquant}
\begin{aligned}
\phi_A &:= \frac{r^2}{1-\mu}\snabla_\lbar(r \alpha_A),\\
\phibar_A &:= \frac{r^2}{1-\mu}\snabla_L(r \alphabar_A).
\end{aligned}
\end{equation}
\begin{remark}
A fixed-frequency version of this transformation appeared in its scalar form in the work by Chandrasekhar \cite{chandrateu}. Similarly, a fixed-frequency version of the transformation relating Equations~(\ref{eq:gteualpha}) and (\ref{eq:rwforpsi}) appeared in the work of Chandrasekhar \cite{Chandraschw}.
\end{remark}
\begin{remark}\label{rm:angular}
Notice that, in view of the Maxwell equations, in the notation of the previous subsection,
$$
\phi_A = r^3 \duno(\rho, \sigma).
$$
\end{remark}
It can be shown by direct calculation from (\ref{eq:teuuno}) and (\ref{eq:teudue}) that $\phi_A$ and $\phibar_A$ both satisfy the following tensorial \textbf{\fackip equation}:
\begin{equation}\label{eq:rwphi}
\snabla_L \snabla_\lbar \phi - (1-\mu) \slashed{\Delta} \phi + V \phi = 0,
\end{equation}
with $V = \frac{1-\mu}{r^2}$.
\begin{remark}
Notice that this equation is analogous to (\ref{eq:rwforpsi}). As in the case of linearized gravity, this equation is also usually stated in the literature as a scalar equation. The scalar form of the \fackip equation for the unknown $u$ is the following:
\begin{equation}
(1-\mu)^{-1} L \lbar u - \sdelta u = 0.
\end{equation}
In particular, this is the wave equation satisfied by $r^2 \rho$ and by $r^2 \sigma$. Commuting the equation with the angular operator $r \snabla$ leads to the appearance of the additional zeroth order term. This mimics closely the case of linearized gravity, cf.~Remark 7.1 in \cite{linearized}, and is consistent with Remark~\ref{rm:angular}.
\end{remark}
Equation~(\ref{eq:rwphi}) is now \textbf{the key to the argument}: it admits robust energy estimates, and furthermore the quantity $\phi$ enables us to estimate the quantities $\alpha$ and $\alphabar$.

\begin{remark}\label{rmk:gauge2}
Let us notice here that the argument contained in this note (in particular, the proof of Theorem~\ref{prop:decayteu}) does not suffer from the difficulty arising from the ``pure gauge solutions'', which are present in the linearized gravity case \cite{linearized}.
\end{remark}

\section{Preliminaries and notation}\label{sec:notation}
Having settled the heuristics, we proceed to the actual setup.
	\begin{itemize}
	\item Let $g_{\mathbb{S}^2}$ be the standard metric on the sphere $\mathbb{S}^2$.
		\item 
		Let $\mathcal{S}_e$ be the following smooth Lorentzian manifold without boundary: $\mathcal{S}_e := (t, r, \omega) \in \R \times (2M, \infty) \times \mathbb{S}^2$. The metric tensor $g_e$ on $\mathcal{S}_e$ is defined as follows:
		\begin{equation*}
		g_e := - (1-\mu) \de t \otimes \de t + (1-\mu)^{-1} \de r \otimes \de r + r^2 g_{\mathbb{S}^2}.
		\end{equation*}
		Here,
		\begin{equation*}
		\mu := \frac{2M} r.
		\end{equation*}
		We call $\mathcal{S}_e$ the \textbf{open exterior Schwarzschild spacetime}, or simply \textbf{Schwarzschild exterior}. Note that this set does not have a boundary.

		\item When denoting subsets of $\mathcal{S}_e$ determined by some property, we shorten the notation in the following way:
		\begin{equation*}
		\{\text{property}\}:=\{(u,v,\omega) \in \mathcal{S}_e: \text{property}\}.
		\end{equation*}
		Hence, for instance, the set $\{(u,v,\omega) \in \mathcal{S}_e: u \geq u_0\}$ is denoted by $\{u \geq u_0\}$.
		\item Let $\nabla$ the Levi-Civita connection associated to $g_e$.
		\item Other coordinates:
		\begin{itemize}
		\item $(t, r_*, \omega)$, with $r_*:= r+2M\log(r-2M) -3M-2M \log M$,
		\item $(t^*, r_1, \omega)$ with $t^* := t + 2M \log(r-2M)$, $r_1 := r$,
		\item $(u,v,\omega)$ with $u := t-r_*$, $v := t+r_*$.
		\end{itemize} 
		\item Define a \textbf{local framefield}: $(L, \lbar, \partial_{\theta^A}, \partial_{\theta^B})$ such that 
		\begin{align*}
		L :&= \partial_t + \partial_{r_*},\\
		\lbar :&= \partial_t - \partial_{r_*},
		\end{align*}
		and such that $\partial_{\theta^A}$ and $\partial_{\theta^B}$ are local vectorfields induced by a system of local coordinates $(\theta^A, \theta^B)$ for $\mathbb{S}^2$.
		\item $\mathcal{S}_e$ embeds isometrically in $\mathcal{M}$, the maximally extended Schwarzschild spacetime. Let us call such isometric embedding $i: \mathcal{S}_e \to \mathcal{M}$. For details about the precise definition of $\mathcal M$ and the form of $i$, see \cite{lecturenotes}, Section 2.3.
		\item Consider the Kruskal coordinates $(T, R, \theta, \varphi)$ on $\mathcal{M}$, as in \cite{lecturenotes}, Section 2.3. We denote by $\mathfrak{V}$ the set of vectorfields on $T \mathcal{M}$:
		$$
		\mathfrak{V}:= \left\{\desude{}{T}, \desude{}{R}, \Omega_1, \Omega_2, \Omega_3\right\},
		$$
		where the $\{\Omega_i\}$ are rotation Killing fields, such that $\text{span}(\mathfrak{V}) = T\mathcal{M}$.
		\item We convene that a function $f$ is smooth on an open set $U \subset \mathcal{S}_e$ (denoted by $f \in \mathcal{C}^\infty(U)$) if there exists an open set $O \subset \mathcal{M}$, such that $O \supset \overline {i(U)}$ and there exists a smooth function $\tilde f \in \mathcal{C}^{\infty}(O)$ which restricts to $f$ on $i(U)$.
		\item Let again be $U \subset \mathcal{S}_e$ an open set, and let $\mathcal{V}$ be either $TU$ or a derived bundle of it (i.e. a tensor product of some copies of $T U$ with some copies of its dual). We say that a section $V$ of $\mathcal V$ is smooth if the following holds. We push $V$ forward via $i$ to obtain a section $V'$ of $\mathcal V'$, a derived bundle of $T \mathcal M$. We then express the components of $V'$ in the frame $\mathfrak{V}$ (or the corresponding derived frame), obtaining a collection of functions $(f_i)_{i=1,\ldots,5}:i(U)\to \R$, $n \in \N$. For $V$ to be smooth, we require that the $f_i$'s all be extendible to smooth functions $\bar f_i$ on an open set $O \supset \overline{i(U)}$, as in the previous bullet.
		We denote by $\Gamma(\mathcal V)$ the vectorspace of all smooth sections of such bundle.
		\begin{remark}
		Notice that this definition encodes the notion of ``smoothness up to the event horizon'', for instance, if $U = \mathcal{S}_e$, or if $U = \{t^* > a\}$, with $a > 0$.
		\end{remark}
		
		\begin{remark}
			$\lbar$ vanishes as $r \to 2M$, $\left(1-\frac{2M}{r}\right)^{-1} \lbar$ is a smooth vectorfield on the set $U:=\{t^* > a\}$, with $a \in \R$, according to our definition (it is a smooth section of $TU$).
		\end{remark}
		\item Let $k \in \N$. We denote by $\Lambda^k(\mathcal{V})$ the vectorspace of smooth antisymmetric $k$-forms, which is, the space of smooth sections of the bundle $\underbrace{\mathcal{V}^* \otimes \ldots \otimes \mathcal{V}^*}_{k \text{ times}}$, which are antisymmetric with respect to the permutation of any two arguments.
		\item We introduce tensorfields tangent to the spheres of constant $r$. 
		\begin{itemize}
		\item Let $S_{\tilde t,\tilde r} \subset \mathcal{S}_e$ be the set
		\begin{equation*}
		S_{\tilde t,\tilde r} := \{(t,r,\omega) \in \mathcal{S}_e, t = \tilde t, r = \tilde r\}.
		\end{equation*}
		\item Consider $TS_{\tilde t, \tilde r} \subset T\mathcal{S}_e$, and let $$\mathcal{B} := \bigcup_{\tilde t \in (-\infty, \infty), \tilde r \in (2M, \infty)} TS_{\tilde t, \tilde r} \subset T\mathcal{S}_e.$$
		Notice that $\mathcal{B}$ is the bundle tangent to each sphere of constant $t, r$.
		\item Sections of $\mathcal B$ can be seen as sections of $T\mathcal{S}_e$, due to the fact that $\mathcal B \subset T\mathcal{S}_e$. Hence, we say that a section $W$ of $\mathcal B$ (or a derived bundle thereof) is smooth if the corresponding section of $T \mathcal{S}_e$ (or a derived bundle thereof) is.
		\item If $\widetilde{ \mathcal B}$ is $\mathcal B$ or a derived bundle thereof, we denote by $\Gamma(\widetilde {\mathcal B })$ the vector space of all smooth sections of $\widetilde {\mathcal B}$. Similarly, $\Lambda^k(\widetilde{\mathcal{B}})$ is the space of alternating $k$-forms on $\widetilde B$.
		\item Indices for tensors in $T \mathcal{S}_e$ and derived bundles will be indicated by Greek letters $\mu, \nu, \kappa \ldots$ Indices for tensors in $\mathcal{B}$ will be indicated by uppercase Latin letters: $A, B, C, \ldots$.
		\item Let $\gbar$ be the induced metric on spheres of constant $r$. Technically, this is a smooth section $\gbar \in \Gamma(\mathcal{B}^* \otimes \mathcal{B}^*)$. On each sphere $S_{t,r}$, $\gbar$ is the round metric.
		\item Let $\svol_{AB} \in \Lambda^2 (\mathcal B)$ be the induced volume form on the spheres $S_{\tilde t, \tilde r}$.
		\item Let $(\cdot )^\perp: T\mathcal{S}_e \to \mathcal{B}$ be the orthogonal projection on the spheres $S_{\tilde t, \tilde r}$.
		
		\item Let $V, W \in \Gamma(\mathcal{B})$. We define a connection on $\mathcal{B}$ by
		\begin{align}
		\snabla_V W := (\nabla_V W)^\perp.
		\end{align}
		This connection coincides, on the spheres, with the Levi--Civita connection induced by the induced metric $\gbar$.
		\item We define two other differential operators on $\Gamma(\mathcal{B})$ in the following way:
		\begin{align*}
		\snabla_L V := (\nabla_L V)^\perp, \qquad \snabla_\lbar V := (\nabla_\lbar V)^\perp.
		\end{align*}
		\item The previous differential operators can be extended to derived bundles from $\mathcal B$ in the usual way, asking that they satisfy the Leibnitz rule.
		\item We define the induced covariant curl and divergence in the following way. Let $\omega \in \Gamma(\mathcal{B}^*)$:
		\begin{align*}
		\dive \omega := \gbar^{AB} \snabla_A \omega_B, \qquad 
		\curl \omega := \svol^{AB} \snabla_A \omega_B.
		\end{align*}
		\end{itemize}
		\item We introduce the foliation needed in the note, and relevant Sobolev norms on it.
		\begin{itemize}
		\item Let $\leo:=\{\Omega_1, \Omega_2, \Omega_3\}$ be a set of angular Killing fields of $\mathcal{S}_e$ whose elements, at each point of $\mathcal{S}_e$, span all directions in $\mathcal{B}$.
		Let $\tilde \leo$ be the renormalized version
		\begin{equation*}
		\tilde \leo := \{\Omega_1/r, \Omega_2/r, \Omega_3/r\}.
		\end{equation*}
		\item Let $k \geq 0$, let $\iota^{\tilde \Omega}_k$ (resp. $\iota^{\tilde \Omega}_{\leq k}$) be the set of all ordered lists of length $k$ (resp. $\leq k$) composed of elements of $\tilde \leo$, and analogously let $\iota^{\Omega}_k$ (resp. $\iota^{\tilde \Omega}_{\leq k}$) be the set of all ordered lists of length $k$ (resp. $\leq k$) composed of elements of $\leo$. Elements in $\iota^{\tilde \Omega}_k$ and $\iota^{\Omega}_k$ will be referred to as \textbf{multi-indices}.
		\item Let $\eta$ be a covariant tensorfield on $\mathcal{B}$. If $J = (V_1, \ldots, V_k)$ is a multi-index, and let $\slie$ be the Lie derivative induced by the connection $\snabla$. Let $X \in \Gamma(\mathcal{B})$. We let 
		\begin{equation}\label{eq:replie}
		\begin{aligned}
		\snabla^J \eta := \snabla_{V_1} \cdots \snabla_{V_k} \eta, \hspace{20pt}
		\slie^J \eta := \slie_{V_1} \cdots \slie_{V_k} \eta, \hspace{20pt}
		(\snabla_X)^k \eta := \underbrace{\snabla_X \cdots \snabla_X }_{k-\text{times}}\eta.
		\end{aligned}
		\end{equation}
		\item Let $n \geq 0$ be an integer, $\eta \in \Gamma((\mathcal{B}^*)^n)$. We define the angular norm of $\eta$ as
		\begin{equation*}
		|\eta|(t,r,\omega) :=  \sum_{J \in \iota^{\tilde \Omega}_n}|\eta(J)|,
		\end{equation*}
		where we assumed, if $J = (V_1, \ldots, V_n)$,
		\begin{equation*}
		\eta(J) = \eta(V_1, \ldots, V_n).
		\end{equation*}
		\item Given $\tilde u, \tilde v \in \R$, we define
		\begin{align*}
		&\conplus_{\tilde u, \tilde v} := \{(u, v, \omega)\in \mathcal{S}_e, u=\tilde u, v \geq \tilde v \},	
		\hspace{20pt} \conminus_{\tilde u, \tilde v} := \{(u, v, \omega)\in \mathcal{S}_e, u\geq \tilde u, v = \tilde v \},\\
		& C_{\tilde u, \tilde v} := \conplus_{\tilde u, \tilde v} \cup \conminus_{\tilde u, \tilde v},  \\
		& \conplus_{\tilde u} := \{(u, v, \omega)\in \mathcal{S}_e, u= \tilde u\}, \hspace{62pt}
		\conminus_{\tilde v} := \{(u, v, \omega)\in \mathcal{S}_e, v = \tilde v\}.
		\end{align*}
		\item Let $i$ be the inclusion of $C_{u_0, v_0}$ into $\mathcal{S}_e$. Let $\eta$ be a covariant section of $i^* \mathcal{B}$ or one of its derived bundles. We define the following fluxes
		\begin{align}\label{eq:fdefu1}
			\fara^T_u[\eta](v_1, v_2) &:= \int_{v_1}^{v_2} \int_{\mathbb{S}^2} [|\snabla_L \eta|^2+(1-\mu)|\snabla \eta|^2+V|\eta|^2](u,v,\omega) \de v  \desphere(\omega) ,\\ \label{eq:fdefv1}
			\fara^T_v[\eta](u_1, u_2) &:= \int_{u_1}^{u_2}\int_{\mathbb{S}^2} [|\snabla_\lbar \eta|^2+(1-\mu)|\snabla \eta|^2+V|\eta|^2](u,v,\omega) \de u  \desphere(\omega),\\
			\fara^N_v[\eta](u_1, u_2) &:= \\
			& \int_{u_1}^{u_2}\int_{\mathbb{S}^2} [(1-\mu)^{-1}|\snabla_\lbar \eta|^2+(1-\mu)|\snabla \eta|^2+V|\eta|^2](u,v,\omega) \de u  \desphere(\omega),\nonumber\\
			 \fara^\infty[\eta] (u_1, v_1) &:= \fara^T_{u_1} [\eta] (v_1, \infty) + \fara^N_{v_1} [\eta] (u_1, \infty).
		\end{align}
		
		\item Let $q,x,s \in \N_{\geq 0}$.
		Let $\eta$ as above.
		We define the weighted Sobolev norms
		\begin{align}\label{eq:defsobo}
		\snorm{\eta}{q}{x}{s} &:= \\
			 &\sum_{i =0}^x \sum_{J \in \iota_{\leq s}^\Omega} \left\{
			 \fara^\infty[(\snabla_T)^i \slie^J \eta](P(u_0)) \phantom{\int} \right. \nonumber\\
			 &\hspace{60pt}\left. + \int_{\conplus_{u_0} \cap \{r \geq R \}} r^q |\snabla_L (\snabla_T)^{\min\{i, (x-1)^+ \}} \slie^J \eta|^2\de v \desphere \right\}. \nonumber
		\end{align}
	
		\end{itemize}
\end{itemize}
\subsection{The Maxwell system}
Let $F$ be an antisymmetric 2-form on $\mathcal{S}_e$. Let us introduce the Maxwell Equations:
\begin{equation}\label{eq:mweffe}
d F = 0, \ \ d \star F = 0.
\end{equation}
Here, $\star$ denotes the Hodge dual operator. More explicitly, if $G$ is a two-form,
\begin{equation}\label{eq:hodgedual}
(\star G)_{\mu\nu} = \frac 1 2 \varepsilon_{\alpha \beta \gamma \delta} G^{\gamma \delta}.
\end{equation}
Equivalently, the system can be written as
\begin{equation*}
\nabla_{[\mu} F_{\kappa \lambda]} = 0, \qquad \nabla^\mu F_{\mu\nu} = 0.
\end{equation*}
Here, square brackets denote antisymmetrization of indices.

\subsection{The null decomposition of the Maxwell system}
\begin{definition}
Let $F \in \Lambda^2(\mathcal{S}_e)$. We define $\alpha, \alphabar \in \Gamma(\mathcal{B}^*)$, and $\rho, \sigma \in \mathcal{C}^\infty(\mathcal{S}_e)$ by the following relations:
\begin{equation}\label{eq:nulldecdef}
	\begin{aligned}
		&\alpha(V) := F(V,L), \\
		&\alphabar(V) := F(V, \lbar),\\
		&\rho := \frac 1 2 \left(1-\frac{2M}{r} \right)^{-1}F(\lbar,L), \\
		&\sigma := \frac 1 2 \svol^{CD} F_{CD}.
	\end{aligned}
\end{equation}
for all $V \in \Gamma(\mathcal{B})$.
\end{definition}
\begin{remark}
$\alpha$ and $\alphabar$ can also be viewed as one-forms in $\Gamma(T^*\mathcal{S}_e)$, by requiring that they vanish on $L$ and $\lbar$. Furthermore, in the definition of $\sigma$, we consider $F$ as an element of $\Lambda^2(\mathcal{B})$, by restriction.
\end{remark}
\begin{remark}
Note that all the previously defined quantities are smooth on $\mathcal{S}_e$ (up to the horizon), in the sense of our definition. This is due to the fact that the vectorfield $(1-\mu)^{-1}\lbar$ is smooth on $\mathcal{S}_e$, according to our definition.
\end{remark}
Having introduced these quantities, we write the Maxwell system with respect to them. We have the following proposition.
\begin{proposition}\label{prop:mwnull}
Let $F \in \Lambda^2(\mathcal{S}_e)$, and let $F$ satisfy the Maxwell system (\ref{eq:mweffe}) on $\mathcal{S}_e$. Then, defining the objects $\alpha, \alphabar, \rho, \sigma$ as in (\ref{eq:nulldecdef}) we have that
\begin{align}
 \frac 1 r \snabla_L (r\alphabar_A) + (1-\mu) (\snabla_A \rho - \svol_{AB} \snabla^B \sigma) = 0,\label{mw1}\\
 \frac 1 r \snabla_\lbar (r\alpha_A) - (1-\mu) (\snabla_A \rho + \svol_{AB} \snabla^B \sigma) = 0, \label{mw2}\\
 \curl \alphabar - 2 \frac{1-\mu}{r} \sigma + \snabla_\lbar \sigma = 0, \label{mw3}\\
 -\dive \alphabar + 2 \frac{1-\mu}{r} \rho - \snabla_\lbar \rho = 0,\label{mw4}\\
 \curl \alpha + 2 \frac{1-\mu}{r} \sigma + \snabla_L \sigma = 0,\label{mw5}\\
 \dive \alpha - 2 \frac{1-\mu}{r} \rho - \snabla_L\rho=0. \label{mw6}
\end{align}
Furthermore, the extreme components $\alpha$ and $\alphabar$ satisfy the \textbf{spin $\pm$1 Teukolsky equations}:
\begin{align}\label{eq:teua}
\snabla_\lbar \snabla_L (r \alpha_A) + \frac 2 r \left(1-\frac{3M}{r}\right) \snabla_\lbar (r \alpha_A) - (1-\mu)\slashed{\Delta}(r\alpha_A) + \frac{1-\mu}{r^2} r \alpha_A &= 0,
\\\label{eq:teuabar}
\snabla_\lbar \snabla_L (r \alphabar_A) - \frac 2 r \left(1-\frac{3M}{r}\right) \snabla_L (r \alphabar_A) - (1-\mu)\slashed{\Delta}(r\alphabar_A) + \frac{1-\mu}{r^2} r \alphabar_A &= 0.
\end{align}
\end{proposition}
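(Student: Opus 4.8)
The plan is to split the proof into two independent parts: first derive the system \eqref{mw1}--\eqref{mw6} from the covariant Maxwell equations $\nabla_{[\mu} F_{\kappa \lambda]} = 0$ and $\nabla^\mu F_{\mu\nu} = 0$, and then derive the spin $\pm 1$ Teukolsky equations \eqref{eq:teua}--\eqref{eq:teuabar} by differentiating and recombining \eqref{mw1}--\eqref{mw6}.

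For the first part, I would contract each of the two covariant Maxwell equations against every independent triple (for $dF=0$) and pair (for $d\star F=0$) drawn from the null frame $(L,\lbar,e_A,e_B)$, reading off the resulting scalar and $\mathcal{B}^*$-valued identities via the null decomposition \eqref{eq:nulldecdef}, i.e.\ $F(e_A,L)=\alpha_A$, $F(e_A,\lbar)=\alphabar_A$, $F(L,\lbar)=-2(1-\mu)\rho$ and $F_{AB}=\sigma\,\svol_{AB}$. The only geometric input is the set of Schwarzschild connection coefficients in this frame, all fixed by spherical symmetry and the single function $\mu=2M/r$; the relevant ones are
\[
Lr=1-\mu=-\lbar r,\qquad (\nabla_{e_A}L)^\perp=\tfrac{1-\mu}{r}\,e_A,\qquad (\nabla_{e_A}\lbar)^\perp=-\tfrac{1-\mu}{r}\,e_A .
\]
The $(e_A,L,\lbar)$ contractions produce the transport equations \eqref{mw1} and \eqref{mw2}, while the $(e_A,e_B,L)$, $(e_A,e_B,\lbar)$ and mixed contractions produce the four scalar equations \eqref{mw3}--\eqref{mw6}; here one uses that $\star$ rotates the sphere one-forms $\alpha,\alphabar$ by the Hodge star on $\mathcal B$ and interchanges $\rho$ and $\sigma$ up to sign. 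This part is mechanical once the connection coefficients are recorded.

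For the Teukolsky equation for $\alpha$, I would start from \eqref{mw2}, rewritten as $\snabla_\lbar(r\alpha_A)=r(1-\mu)(\snabla_A\rho+\svol_{AB}\snabla^B\sigma)$, and apply $\snabla_L$. Since $L(r(1-\mu))=1-\mu$, the derivative falling on the weight immediately reproduces a first-order term $\tfrac1r\snabla_\lbar(r\alpha_A)$ through \eqref{mw2}. In the remaining principal term $r(1-\mu)\snabla_L(\snabla_A\rho+\svol_{AB}\snabla^B\sigma)$ I would commute $\snabla_L$ past the angular derivatives, the commutator $[\snabla_L,\snabla_A]$ contributing a factor $\tfrac{1-\mu}{r}$ coming from $(\nabla_{e_A}L)^\perp$, and then substitute $\snabla_L\rho$ and $\snabla_L\sigma$ from \eqref{mw6} and \eqref{mw5}. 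The top-order piece becomes $r(1-\mu)\,[\snabla_A\dive\alpha-\svol_{AB}\snabla^B\curl\alpha]$, which by the Weitzenb\"ock/Bochner identity on the round sphere of Gauss curvature $r^{-2}$ equals $r(1-\mu)[\slashed{\Delta}\alpha_A\pm r^{-2}\alpha_A]$, assembling the $-(1-\mu)\slashed{\Delta}(r\alpha_A)$ term. Finally I would pass from $\snabla_L\snabla_\lbar$ to $\snabla_\lbar\snabla_L$ via the commutator $[\snabla_\lbar,\snabla_L]$ and re-express the leftover $\snabla_A\rho$ and $\svol_{AB}\snabla^B\sigma$ terms in terms of $\snabla_\lbar(r\alpha_A)$ using \eqref{mw2} once more. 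The equation for $\alphabar$ follows identically, starting from \eqref{mw1}, applying $\snabla_\lbar$, and using \eqref{mw3}--\eqref{mw4}.

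The main obstacle is the coefficient bookkeeping in this last step: all first-order contributions --- from the weight, from the commutators $[\snabla_L,\snabla_A]$ and $[\snabla_\lbar,\snabla_L]$, and from re-expressing the surviving angular terms through \eqref{mw2} --- must combine to exactly $\tfrac2r(1-\tfrac{3M}{r})$, which vanishes at the photon sphere $r=3M$, while all zeroth-order contributions, including the Gauss-curvature term from the Bochner identity together with the $r$-derivatives of $(1-\mu)/r$, must collapse to exactly $\tfrac{1-\mu}{r^2}$. Fixing the signs in the Hodge identity and in the two commutators consistently with these target coefficients is where the genuine care is required; everything else is routine once the Schwarzschild connection coefficients are in hand.
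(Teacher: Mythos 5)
Your proposal is correct and takes essentially the same route as the paper: null-frame contractions with the Schwarzschild connection coefficients for the system (\ref{mw1})--(\ref{mw6}) (the paper obtains the ``dual'' equations by Hodge-dualizing $F$ rather than contracting $\de \star F = 0$ directly, which is equivalent), then applying the opposite null derivative to (\ref{mw1})/(\ref{mw2}), substituting (\ref{mw3})--(\ref{mw6}), re-inserting the transport equation for the leftover angular terms, and converting $\snabla_A \dive - \svol_{AB}\snabla^B \cherl$ via the round-sphere Hodge/Bochner identity. The only point you leave open, the sign in that identity, is exactly what the paper pins down in Lemma~\ref{lem:angular} ($D_A \daiv\,\omega - \varepsilon_{AB} D^B \cherl\,\omega = \Delta \omega_A - \omega_A$ on the unit sphere, from $\text{Ric} = g_{\mathbb{S}^2}$), which resolves your ``$\pm r^{-2}$'' to the minus sign and yields the stated $+\frac{1-\mu}{r^2}$ potential.
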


\begin{proof}[Proof of Proposition~\ref{prop:mwnull}]
We postpone the relevant calculations to the Appendix, Section~\ref{sec:dernulldec}.
\end{proof}

\subsection{Derivation of the \fackip equation} \label{subsec:fackip}
We now proceed to introduce the crucial quantities $\phi$ and $\phibar$, and we prove that, if we only require the spin $\pm1$ Teukolsky equations to hold for $\alpha$ and $\alphabar$, then $\phi$ and $\phibar$ satisfy the so-called \fackip equation.
\begin{proposition}\label{prop:rw}
Let $\alpha$ satisfy the spin $+1$ Teukolsky equation (\ref{eq:teua}) and let $\alphabar$ satisfy the $-1$ Teukolsky equation (\ref{eq:teuabar}) on $\mathcal{S}_e$. Then, we define
\begin{equation}\label{eq:quantity}
\begin{aligned}
\phi_A &:= \frac{r^2}{1-\mu}\snabla_\lbar(r \alpha_A),\\
\phibar_A &:= \frac{r^2}{1-\mu}\snabla_L(r \alphabar_A).
\end{aligned}
\end{equation}
Under these hypotheses, $\phi$ and $\underline \phi$ satisfy the \fackip Equation:
\begin{align} \label{eq:rw1}
\snabla_\lbar \snabla_L \phi_A - (1-\mu)\slashed{\Delta}(\phi_A) + \frac{1-\mu}{r^2} \phi_A &= 0,
\\
\label{eq:rw2}
\snabla_\lbar \snabla_L \phibar_A - (1-\mu)\slashed{\Delta}(\phibar_A) + \frac{1-\mu}{r^2} \phibar_A &= 0.
\end{align}
\end{proposition}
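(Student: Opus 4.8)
The plan is to verify the identity by direct substitution, treating $\phi$ first; the equation for $\phibar$ is obtained by the identical computation with the roles of $L$ and $\lbar$ (equivalently $u$ and $v$) interchanged, which carries (\ref{eq:teua}) into (\ref{eq:teuabar}), so I will only discuss $\phi$. Throughout I abbreviate $\psi_A := r\alpha_A$, so that the spin $+1$ Teukolsky equation (\ref{eq:teua}) reads
\begin{equation*}
\snabla_\lbar \snabla_L \psi_A + \frac{2}{r}\left(1-\frac{3M}{r}\right)\snabla_\lbar \psi_A - (1-\mu)\slashed{\Delta}\psi_A + \frac{1-\mu}{r^2}\psi_A = 0,
\end{equation*}
and $\phi_A = \frac{r^2}{1-\mu}\snabla_\lbar \psi_A$.

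Before starting I would collect the elementary radial identities $L(r) = 1-\mu$, $\lbar(r) = -(1-\mu)$, and more generally $Lf = -\lbar f = (1-\mu)f'(r)$ for any $f=f(r)$; these let me compute $L$ and $\lbar$ applied to the weights $\frac{r^2}{1-\mu}$ and $\frac{1-\mu}{r^2}$ cleanly. I also record the two commutation relations on which the argument rests: the commutator $[\snabla_L,\snabla_\lbar]$ acting on a section of $\mathcal{B}^*$ (needed to convert $\snabla_L\snabla_\lbar\psi$ into the order $\snabla_\lbar\snabla_L\psi$ that appears in Teukolsky), and the commutator $[\snabla_\lbar,\slashed{\Delta}]$ on a section of $\mathcal{B}^*$ (needed because the radial weight $\frac{r^2}{1-\mu}$ carries the Laplacian through the transport derivative). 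Both are elementary consequences of the spherical symmetry of Schwarzschild---$\slashed{\Delta}$ is $r^{-2}$ times the fixed unit-sphere Laplacian, the metric spheres are totally umbilic, and the null frame is shear- and torsion-free---and can be obtained by the same computations as those used for the null decomposition of Proposition~\ref{prop:mwnull} (Section~\ref{sec:dernulldec}).

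With these in hand I substitute $\phi_A = \frac{r^2}{1-\mu}\snabla_\lbar \psi_A$ into the \fackip operator $\snabla_\lbar\snabla_L\phi - (1-\mu)\slashed{\Delta}\phi + \frac{1-\mu}{r^2}\phi$ term by term. The potential term is immediate: $\frac{1-\mu}{r^2}\phi = \snabla_\lbar\psi$. For the transport term I expand $\snabla_L\phi$ by Leibniz, producing $L\!\left(\frac{r^2}{1-\mu}\right)\snabla_\lbar\psi + \frac{r^2}{1-\mu}\snabla_L\snabla_\lbar\psi$; then I commute $\snabla_L\snabla_\lbar\psi = \snabla_\lbar\snabla_L\psi + [\snabla_L,\snabla_\lbar]\psi$ and replace $\snabla_\lbar\snabla_L\psi$ using the Teukolsky equation. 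Applying $\snabla_\lbar$ once more, and using $[\snabla_\lbar,\slashed{\Delta}]$ to pass the outer derivative through the Laplacian term that has appeared, brings the whole expression to the form $r^2\slashed{\Delta}\snabla_\lbar\psi - \snabla_\lbar\psi$ up to error terms, which is exactly $(1-\mu)\slashed{\Delta}\phi - \frac{1-\mu}{r^2}\phi$. The content of the proposition is that every error term cancels.

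The main obstacle is precisely this cancellation, and it has two delicate features. First, no first-order transport term may survive: the contribution $\frac{2}{r}(1-3M/r)\snabla_\lbar\psi$ inherited from the Teukolsky equation, together with the genuine second transport derivative $\snabla_\lbar^2\psi$, must cancel exactly against the contributions generated by differentiating the weight $\frac{r^2}{1-\mu}$ through $L$ and $\lbar$ and against the first-derivative parts of the commutators. This is the step in which the factor $1-3M/r$ and the specific power $r^2$ in the weight conspire, and it is what makes the transformation \emph{work}. Second, the transformed zeroth-order potential must come out to be exactly $V=\frac{1-\mu}{r^2}$ with no residual angular or curvature remainder; this forces the curvature piece of $[\snabla_\lbar,\slashed{\Delta}]$ to combine correctly with the zeroth-order Teukolsky term $\frac{1-\mu}{r^2}\psi$ and with the weight derivatives. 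I expect the bookkeeping of these $r$-weights and commutators---rather than any conceptual difficulty---to constitute the bulk of the work, and I would organize it by first clearing all weights so as to reduce to an identity among $\snabla_\lbar\psi$, $\slashed{\Delta}\snabla_\lbar\psi$, and $\psi$, and then checking the coefficient of each group independently.
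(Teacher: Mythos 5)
Your plan is, in substance, the paper's own proof --- a direct computation from the Teukolsky equation using exactly the ingredients you list (the radial derivatives of the weight $\frac{r^2}{1-\mu}$, substitution of (\ref{eq:teua}), and commutation of the null derivatives with the angular operators). The genuine gap is that you stop exactly where the proposition lives: you assert that ``every error term cancels'' and defer the verification, so what you have is a scheme for a computation rather than the computation. In fact the bookkeeping you brace for collapses to a single identity once the calculation is organized well. Since $L\left(\frac{r^2}{1-\mu}\right) = \frac{r^2}{1-\mu}\,\frac{2}{r}\left(1-\frac{3M}{r}\right)$, and since $[\snabla_L,\snabla_\lbar]=0$ on $\Gamma(\mathcal{B}^*)$ (the commutator you flagged; in $(u,v)$ coordinates $\snabla_{\partial_u}\xi_A = \partial_u \xi_A - \frac{\partial_u r}{r}\xi_A$ and likewise for $\partial_v$, so the commutator vanishes by inspection), the first two terms of (\ref{eq:teua}) are precisely $\frac{1-\mu}{r^2}\snabla_L\left(\frac{r^2}{1-\mu}\snabla_\lbar(r\alpha_A)\right) = \frac{1-\mu}{r^2}\snabla_L \phi_A$; multiplying (\ref{eq:teua}) by $\frac{r^2}{1-\mu}$ then gives
\begin{equation*}
\snabla_L \phi_A - r^2 \sdelta (r\alpha_A) + r\alpha_A = 0,
\end{equation*}
which is (\ref{eq:teuaref}), and one application of $\snabla_\lbar$, using $[\snabla_\lbar, r^2\sdelta]=0$ and $\snabla_\lbar(r\alpha_A) = \frac{1-\mu}{r^2}\phi_A$, yields (\ref{eq:rw1}) with the correct ordering of derivatives. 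Note that two of the difficulties you anticipate never materialize: there is no curvature remainder in the angular commutation (the clean statement is $[\snabla_\lbar, r^2\sdelta]=0$, a consequence of $[r\snabla,\snabla_\lbar]=0$ and $\snabla_\lbar\gbar = 0$, recorded in Appendix~\ref{sec:dernulldec}), and the potential $V=\frac{1-\mu}{r^2}$ is produced solely by rewriting $r\alpha$ in terms of $\phi$ in the zeroth-order term --- there is no cancellation between curvature and potential left to check.

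One smaller inaccuracy: the bare interchange of $L$ and $\lbar$ (or of $u$ and $v$) does \emph{not} carry (\ref{eq:teua}) into (\ref{eq:teuabar}); it produces $+\frac{2}{r}\left(1-\frac{3M}{r}\right)\snabla_L(r\alphabar_A)$ where (\ref{eq:teuabar}) has a minus sign. The symmetry you want is the time reflection $(u,v)\mapsto(-v,-u)$, under which $L\mapsto -\lbar$ and $\lbar\mapsto -L$: the extra signs flip the first-order term correctly, and the \fackip operator is invariant given $[\snabla_L,\snabla_\lbar]=0$. Alternatively, run the mirrored computation directly --- this is what the paper does, treating $\phibar$ first, where $\lbar\left(\frac{r^2}{1-\mu}\right) = -\frac{r^2}{1-\mu}\,\frac{2}{r}\left(1-\frac{3M}{r}\right)$ matches the sign in (\ref{eq:teuabar}) with no reordering needed at the factorization step, $[\snabla_L,\snabla_\lbar]\phibar = 0$ entering only at the end to pass from $\snabla_L\snabla_\lbar\phibar$ to the ordering stated in (\ref{eq:rw2}).
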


\begin{remark}\label{rm:secret}
We remark that, if we further assume that $\alpha$ and $\alphabar$ are part of a solution $(\rho, \sigma, \alpha, \alphabar)$ of the Maxwell equations (\ref{mw1}) -- (\ref{mw6}) on $\mathcal{S}_e$, the following relations hold true:
\begin{align*}
\phi_A = r^3(\snabla_A \rho + \svol_{AB} \snabla^B \sigma), \qquad
\phibar_A = r^3(-\snabla_A \rho + \svol_{AB} \snabla^B \sigma).
\end{align*}
We also remark that, in this case, the tensorial \fackip Equation can be obtained from the wave equation (scalar Fackerell--Ipser) satisfied by the middle components, commuting with the projected covariant angular derivative $\snabla_A$.
\end{remark}

\begin{proof}[Proof of Proposition~\ref{prop:rw}]
It is a straightforward calculation from the Teukolsky Equation. We restrict to $\phibar$, the reasoning for $\phi$ being analogous. First of all, we notice that the Teukolsky Equation for $\alphabar$ is equivalent to
\begin{equation}\label{eq:forrw1}
\begin{aligned}
\frac{1-\mu}{r^2}\snabla_\lbar\left( \frac{r^2}{1-\mu}\snabla_L (r \alphabar_A)\right)  - (1-\mu)\slashed{\Delta}(r\alphabar_A) + \frac{1-\mu}{r^2} r \alphabar_A = 0.
\end{aligned}
\end{equation}
For,
\begin{equation*}
\lbar \left(\frac{r^2}{1-\mu} \right) =- \frac{r^2}{1-\mu} \frac{2}{r}\left(1-\frac{3M}r\right).
\end{equation*}
Multiply Equation~(\ref{eq:forrw1}) by $\frac{r^2}{1-\mu}$ and subsequently take the $\snabla_L$ derivative of both sides. We obtain, since $[\snabla_L , r^2 \slashed \Delta] = 0$,
\begin{equation*}
\snabla_L \snabla_\lbar \left( \frac{r^2}{1-\mu}\snabla_L (r \alphabar_A)\right) - r^2 \sdelta \snabla_L (r\alphabar_A)+ \snabla_L (r \alphabar_A) = 0.
\end{equation*}
This implies the claim.
\end{proof}

\section{Statements of the main results}\label{sec:results}
In this section, we state the main results of the present note. The first result, Theorem~\ref{prop:decayteu}, only deals with solutions to the spin $\pm 1$ Teukolsky Equations, and provides decay rates for them. The second result, Theorem~\ref{prop:decaymax}, concerns a solution $F$ of the full Maxwell system, and derives decay bounds for the relevant quantities using Theorem~\ref{prop:decayteu}.
\begin{theorem}[Decay for solutions to the spin $\pm$1 Teukolsky equations]\label{prop:decayteu}
There exist a positive real number $R_* > 0$ and a positive constant $C$ depending only on $M$ and $R_*$ such that, letting $(u_0, v_0)$ be real numbers such that $v_0 - u_0 = 2 R_*$, we have the following. Let $\alpha, \alphabar \in \Gamma(\mathcal{B}^*)$ be solutions to the spin $\pm1$ Teukolsky equations on $[u_0, \infty) \times [v_0, \infty) \times \mathbb{S}^2 \subset \mathcal{S}_e$:
\begin{align} 
\snabla_\lbar \snabla_L (r \alpha_A) + \frac 2 r \left(1-\frac{3M}{r}\right) \snabla_\lbar (r \alpha_A) - (1-\mu)\slashed{\Delta}(r\alpha_A) + \frac{1-\mu}{r^2} r \alpha_A &= 0,
\\
\snabla_\lbar \snabla_L (r \alphabar_A) - \frac 2 r \left(1-\frac{3M}{r}\right) \snabla_L (r \alphabar_A) - (1-\mu)\slashed{\Delta}(r\alphabar_A) + \frac{1-\mu}{r^2} r \alphabar_A &= 0.
\end{align}
Let $\phi$, $\phibar$ be the related quantities as in (\ref{eq:quantity}).
Under these assumptions, $\phi$ and $\phibar$ satisfy the Morawetz estimate (\ref{eq:fullen}) of Lemma~\ref{lem:moraw}, as well as hierarchy of integrated estimates (\ref{pdue}) -- (\ref{puno}).

Furthermore, let $\chi$ be a smooth cutoff function such that $\chi(r) =1$ for $r \geq 3M$, and $\chi(r) = 0$ for $r \in [2M, 3/2M]$. Let $\widetilde \alphabar = (1-\mu)^{-1}\alphabar$ and $\overline \Psi := \chi(r) (1-\mu)^{-1}r^3 \alpha$.

Under these conditions, we have the pointwise estimates:
\begin{align}\label{eq:da1}
|\alpha| &\leq C \frac{\snorm{\overline \Psi}{0}{0}{2}+ \snorm{\phi}{2}{2}{1}}{v} & \text{and}\\ 
\label{eq:dab1}
|\widetilde \alphabar| &\leq C \frac{\snorm{\widetilde{\alphabar}}{0}{0}{0} + \snorm{\phibar}{2}{2}{1} }{v} &\text{ on }\{r_* \leq R_*\} \cap \{u \geq u_0\} \cap \{v \geq v_0\}. \\ \label{eq:da3}
|\alpha| &\leq C \frac{\snorm{\overline \Psi}{0}{0}{2}+ \snorm{\phi}{2}{q}{1}}{ v^{q/2} r^{3/2}}
&\text{for } q \in \{0,1,2\}, \text{ and}\\\label{eq:da4}
|\alpha| &\leq C \frac{\snorm{\overline \Psi}{2}{0}{2}+\snorm{\phi}{2}{2}{2} }{(|u|+1)^{\frac 1 2} r^3}, &\text{ and}\\
\label{eq:dab2}
|\alphabar| &\leq C \frac{\snorm{\widetilde \alphabar}{0}{0}{0} + \snorm{\phibar}{2}{2}{1}}{(|u|+1)r }
 &\text{ on }  \{r_* \geq R_*\}  \cap \{u \geq u_0\} \cap \{v \geq v_0\}.
\end{align}
Here, we used the definition of norm in (\ref{eq:defsobo}).
\end{theorem}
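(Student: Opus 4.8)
The plan is to reduce everything to three ingredients: the integrated (Morawetz and $r^p$) estimates for the Fackerell--Ipser quantities $\phi,\phibar$, a Sobolev embedding upgrading these to pointwise decay, and a transport step transferring that decay back to $\alpha,\alphabar$. The first assertion of the theorem is immediate: by Proposition~\ref{prop:rw} the quantities $\phi,\phibar$ defined in~(\ref{eq:quantity}) solve the tensorial \fackip equation~(\ref{eq:rw1})--(\ref{eq:rw2}), so the Morawetz estimate of Lemma~\ref{lem:moraw} and the $r^p$-hierarchy~(\ref{pdue})--(\ref{puno}), which are proved for that equation, apply to $\phi$ and $\phibar$ verbatim. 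Before running the decay argument I would commute~(\ref{eq:rw1})--(\ref{eq:rw2}) with the angular rotations $\Omega_i$ (via $\slie^J$) and with $\snabla_T=\snabla_{\partial_t}$: since these are Killing vector fields of the Schwarzschild exterior the commutators are harmless, and the full commuted hierarchy supplies precisely the derivative norms $\snorm{\phi}{q}{x}{s}$, $\snorm{\phibar}{q}{x}{s}$ appearing on the right-hand sides of~(\ref{eq:da1})--(\ref{eq:dab2}).

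Next I would run the Dafermos--Rodnianski pigeonhole argument on the $r^p$-hierarchy: the weighted estimates for $q=0,1,2$ feed into one another, and a dyadic mean-value argument over a sequence of cones converts them into polynomial-in-$v$ decay of the fluxes $\fara^T,\fara^N,\fara^\infty$ of $\phi$, $\phibar$ and of their commuted versions. From these decaying fluxes I would extract pointwise bounds on $\phi,\phibar$ by Sobolev embedding---angular derivatives $\slie^J$ with $|J|\le 2$ control the size on each sphere, while the transversal derivatives together with the flux control the $r$-behaviour. This has to be carried out separately in the bounded region $\{r_*\le R_*\}$, where no $r$-weight is available and one obtains the clean $v^{-1}$ rate of~(\ref{eq:da1})--(\ref{eq:dab1}), and in the far region $\{r_*\ge R_*\}$, where the $r^q$-weighted flux in~(\ref{eq:defsobo}) is what supplies the extra $r$-decay recorded in~(\ref{eq:da3}).

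The last step is the transport. Inverting~(\ref{eq:quantity}) gives $\snabla_\lbar(r\alpha_A)=\frac{1-\mu}{r^2}\phi_A$ and $\snabla_L(r\alphabar_A)=\frac{1-\mu}{r^2}\phibar_A$. Integrating the first along the ingoing direction (constant $v$) starting from the initial cone, and the second along the outgoing direction (constant $u$), I recover $r\alpha$ and $r\alphabar$ as a boundary term on the initial cone plus the integral of the decaying density $\frac{1-\mu}{r^2}\phi$, respectively $\frac{1-\mu}{r^2}\phibar$, along the ray. The boundary terms are exactly what the auxiliary quantities $\overline\Psi=\chi(r)(1-\mu)^{-1}r^3\alpha$ and $\widetilde\alphabar=(1-\mu)^{-1}\alphabar$ are designed to control, which is why their norms, and not those of $\alpha,\alphabar$ directly, appear on the right. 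Because $v$ (resp.~$u$) is constant along the ray, the established $v$-decay of $\phi$ factors out of the integral while the $r$-weights in the integrand supply the $r$-powers; bookkeeping these weights produces~(\ref{eq:da1})--(\ref{eq:dab2}).

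I expect the main obstacle to be the sharp far-region estimates~(\ref{eq:da3}) and especially~(\ref{eq:da4}), which require simultaneous $r$- and $u$-decay for $\alpha$. A naive transport of $r\alpha$ loses powers of $r$ through the boundary term, so one must use the top weights of the hierarchy (the $q=1,2$ norms, whence the $\snorm{\overline\Psi}{2}{0}{2}$ in~(\ref{eq:da4})), and it is cleaner to propagate the renormalised $\overline\Psi$---whose transport equation is adapted to $r^3\alpha$---than $r\alpha$ itself, all the while checking that the boundary contribution on the initial cone is controlled uniformly without forfeiting the gained $r$-weight. The near-horizon bound~(\ref{eq:dab2}) for $\alphabar$ is delicate for the complementary reason that $\alphabar$, unlike $\widetilde\alphabar$, degenerates as $r\to 2M$, so the factor $(1-\mu)$ must be carried carefully through the transport integration to obtain the stated $(|u|+1)^{-1}r^{-1}$ rate.
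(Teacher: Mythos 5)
Your overall architecture --- Morawetz estimate and $r^p$ hierarchy for $\phi,\phibar$, then transport back to $\alpha,\alphabar$ --- agrees with the paper for the first assertion and for (\ref{eq:da1}), (\ref{eq:da3}): there the paper likewise integrates $\snabla_\lbar(r^{1+p}\alpha_A) = -p(1-\mu)r^p\alpha_A + (1-\mu)r^{p-2}\phi_A$ along ingoing cones, although it deliberately avoids pointwise decay for $\phi$, bounding the source integrals instead by Cauchy--Schwarz against the weighted \emph{ingoing} fluxes $\fluxob_p$ of Lemma~\ref{lem:fluxa} and applying the sphere Sobolev embedding to $\alpha$ itself after commuting the transport twice with $r\snabla$ (using $[r\snabla,\snabla_\lbar]=0$); this is what makes the stated norms, with a single Lie commutation of $\phi$, come out. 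However, your treatment of $\alphabar$ contains a genuine gap. For (\ref{eq:dab1}) your claim that ``the established decay of $\phibar$ factors out of the integral'' fails: integrating $\snabla_L(r\alphabar_A)=\frac{1-\mu}{r^2}\phibar_A$ along constant $u$ starting from the data cone $\conminus_{v_0}$, the boundary term $\widetilde\alphabar(u,v_0)$ is of size $\snorm{\widetilde\alphabar}{0}{0}{0}$ and does not decay in $v$, so naive transport yields only boundedness, never $v^{-1}$. The paper's essential extra ingredient is the friction term in $\snabla_L(r^2|\widetilde\alphabar|^2)+4M|\widetilde\alphabar|^2 \leq 4M\varepsilon|\widetilde\alphabar|^2+\frac{1}{4Mr^2\varepsilon}|\phibar|^2$, followed by a Gronwall argument with the exponential integrating factor $\exp F(u,v)$, where $F$ grows linearly in $v-v_0$ on $\{r\leq R\}$: the initial contribution is exponentially suppressed, and the source integral is split at $v/A_0$, the early part killed by the exponential and the recent part bounded by the decaying flux of $\phibar$ from Lemma~\ref{lem:fluxdec}. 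Without this mechanism (\ref{eq:dab1}) is out of reach, and (\ref{eq:dab2}) falls with it, since the far-region estimate is seeded by the decay at $\{r=R\}$ produced in that step.

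The second gap is (\ref{eq:da4}), which the paper does not obtain by transport, and which your proposed propagation of $\overline\Psi$ would not deliver. The relevant identity is $\snabla_\lbar\Psi_A+\frac{2}{r}\bigl(1-\frac{3M}{r}\bigr)\Psi_A=\phi_A$ with $\Psi=(1-\mu)^{-1}r^3\alpha$, so along an ingoing cone you must accumulate $\int|\phi|\,\de u$ from $u_0$ to $u$. With the available hierarchy ($\fluxob_0[\phi]$ merely bounded, the decaying fluxes $\fluxob_{-1},\fluxob_{-2}$ only over $\{u\geq v/2\}$), Cauchy--Schwarz produces either growth in $u-u_0$ or loss in $r$; in the regime $v\gg u$, where the whole integration range lies in $\{u\leq v/2\}$, there is simply no source of the uniform $(|u|+1)^{-1/2}$ decay, and the damping factor helps only in $r$. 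The paper instead proves (\ref{eq:da4}) by the self-contained argument of Section~\ref{sec:alphaimproved} (Proposition~\ref{prop:morealpha}), explicitly independent of Lemma~\ref{lem:fluxa}: energy/$r^p$ multipliers $f(u,v)\snabla_L\Psi$ applied directly to the Teukolsky equation recast for $\Psi$, with $f=(1-\mu)^{-1}v^2$ and $f=(1-\mu)^{-1}r^p$ for $p=0,1,2$, positivity of the zeroth-order bulk via the Poincar\'e inequality for one-forms (Lemma~\ref{lem:poinca}), decay of the boundary terms on $\{r=R\}$ supplied by Lemma~\ref{lem:prevmoredec}, and finally the interpolation-type Sobolev Lemma~\ref{lem:sobpsi}, which combines the $(1+|u|)^{-2}$-decaying unweighted flux with the uniformly bounded $v^2$-weighted flux of $\snabla_L\slie^I\Psi$ to yield the pointwise rate $(|u|+1)^{-1/2}r^{-3}$. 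This is where the norm $\snorm{\overline\Psi}{2}{0}{2}$ with its $r^2$-weighted $L$-derivative enters --- not through a transported boundary term as your sketch suggests.
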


\begin{theorem}[Decay for solutions to the Maxwell system on Schwarzschild]\label{prop:decaymax}
There exist a positive real number $R_* > 0$ and a positive constant $C$ depending only on $M$ and $R_*$ such that, letting $(u_0, v_0)$ be real numbers such that $v_0 - u_0 = 2 R_*$, we have the following. Let $F \in \Lambda^2(T\mathcal{S}_e)$ be a solution to the Maxwell system on $\{u \geq u_0\} \cap \{v \geq v_0\}$:
\begin{equation*}
\de F = 0, \qquad \de \star F = 0.
\end{equation*}
Recall the definition of the null components $\alpha, \alphabar, \rho, \sigma$ (\ref{eq:nulldecdef}).
We let
\begin{equation}\label{eq:rhosdef}
\rho_s := \frac {R^2} {4 \pi} \int_{\omega \in \mathbb{S}^2} \rho(u_0, v_0, \omega) \desphere(\omega), \qquad \sigma_s := \frac {R^2} {4 \pi} \int_{\omega \in \mathbb{S}^2} \sigma(u_0, v_0, \omega) \desphere(\omega).
\end{equation}
Recall the definition of $\phi$ and $\phibar$ from (\ref{eq:quantity}). Let $\chi$ be a smooth cutoff function such that $\chi(r) =1$ for $r \geq 3M$, and $\chi(r) = 0$ for $r \in [2M, 3/2M]$.
Let us furthermore set $\widetilde \alphabar = (1-\mu)^{-1}\alphabar$, $\overline \Psi := \chi(r) (1-\mu)^{-1}r^3 \alpha$, and
\begin{align}
M_\alpha &:= \snorm{\overline \Psi}{2}{0}{2}+\snorm{\phi}{2}{2}{1}, \\
\label{eq:mrhosdef}
M_{\rho, \sigma} &:= \norm{\phi}_{C_{u_0, v_0}; 2; 2,1} + \norm{\phibar}_{C_{u_0, v_0}; 2; 2,1},\\
M_\alphabar &:=\snorm{\widetilde \alphabar}{0}{0}{0} + \snorm{\phibar}{2}{2}{1}.
\end{align}
Then, we have
\begin{align}
|\alpha|, (1-\mu)^{-1}|\alphabar| &\leq C v^{-1} (M_\alpha+ M_\alphabar),& 
\\ 
\left|\rho-\frac{\rho_s}{r^2}\right|, \left|\sigma-\frac{\sigma_s}{r^2}\right| &\leq C v^{-1} M_{\rho,\sigma}&\text{on } \{r_* \leq R_*\} \cap \{u \geq u_0\} \cap \{v \geq v_0\}, \label{eq:rsnear}\\
|\alpha| &\leq C(|u|+1)^{-\frac 1 2} r^{-3} M_\alpha, & 
\\
|\alphabar| &\leq C (|u|+1)^{-1} r^{-1}M_\alphabar, & 
\\
\left|\rho-\frac{\rho_s}{r^2}\right|, \left|\sigma-\frac{\sigma_s}{r^2}\right| &\leq C (|u|+1)^{-\frac 1 2}r^{-2}M_{\rho,\sigma},
\\
\left|\rho-\frac{\rho_s}{r^2}\right|, \left|\sigma-\frac{\sigma_s}{r^2}\right| &\leq C (|u|+1)^{- 1 }r^{-3/2}M_{\rho,\sigma} &\text{on } \{r_* \geq R_*\} \cap \{u \geq u_0\} \cap \{v \geq v_0\}.
\end{align}
\end{theorem}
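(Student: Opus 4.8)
The plan is to separate the six null components into two groups. For the extreme components $\alpha$ and $\alphabar$ there is nothing new to do: by Proposition~\ref{prop:mwnull} the null components of any Maxwell solution $F$ satisfy the spin $\pm1$ Teukolsky equations (\ref{eq:teua})--(\ref{eq:teuabar}), so $\alpha$ and $\alphabar$ meet the hypotheses of Theorem~\ref{prop:decayteu}. The bounds (\ref{eq:da1})--(\ref{eq:dab2}) then yield all the claimed estimates on $|\alpha|$ and $(1-\mu)^{-1}|\alphabar|$, after absorbing the various data norms into $M_\alpha$ and $M_\alphabar$ by monotonicity of the weighted Sobolev norms (\ref{eq:defsobo}). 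The real content is the treatment of $\rho$ and $\sigma$.

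The first step is to identify $\rho_s/r^2$ and $\sigma_s/r^2$ as conserved, spherically symmetric charge parts. I would set $Q_E(u,v):=r^2\int_{\sph}\rho\,\desphere$ and $Q_B(u,v):=r^2\int_{\sph}\sigma\,\desphere$ and differentiate along both null directions, using $Lr=(1-\mu)$ and $\lbar r=-(1-\mu)$. Inserting the transport equations (\ref{mw4}) and (\ref{mw6}) (respectively (\ref{mw3}) and (\ref{mw5})), the weight factors $2\tfrac{1-\mu}{r}$ cancel exactly and one is left with $\snabla_L Q_E=r^2\int_{\sph}\dive\alpha\,\desphere$ and $\snabla_\lbar Q_E=-r^2\int_{\sph}\dive\alphabar\,\desphere$, both of which vanish since the integral of a divergence over the closed sphere is zero; the same computation gives $\snabla_L Q_B=\snabla_\lbar Q_B=0$. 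Hence $Q_E,Q_B$ are constant on the connected region $\{u\geq u_0\}\cap\{v\geq v_0\}$, equal to their values $4\pi\rho_s,4\pi\sigma_s$ on the corner sphere of $C_{u_0,v_0}$ (with $R=r(u_0,v_0)$ in (\ref{eq:rhosdef})); this is precisely the electric and magnetic charge of the stationary mode (\ref{eq:statsol}). Consequently $\rho_s/r^2$ and $\sigma_s/r^2$ are the spherical means of $\rho,\sigma$ on every $S_{u,v}$, so $\tilde\rho:=\rho-\rho_s/r^2$ and $\tilde\sigma:=\sigma-\sigma_s/r^2$ are mean-zero on each sphere.

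To estimate these remainders I would use the algebraic identities of Remark~\ref{rm:secret}, which for a genuine Maxwell solution give $\phi_A-\phibar_A=2r^3\snabla_A\rho$ and $\phi_A+\phibar_A=2r^3\svol_{AB}\snabla^B\sigma$. Since $\svol$ acts as an isometric rotation on each tangent space, this yields the pointwise identities $|\snabla\rho|=\tfrac12 r^{-3}|\phi-\phibar|$ and $|\snabla\sigma|=\tfrac12 r^{-3}|\phi+\phibar|$, and likewise for their $\slie$-derivatives. Because $\tilde\rho,\tilde\sigma$ are mean-zero, a Poincar\'e inequality followed by Sobolev embedding on the round sphere (tracking the $r$-weights implicit in the normalised rotation fields $\Omega_i/r$) controls $|\tilde\rho|,|\tilde\sigma|$ pointwise by the $L^2(\sph)$-norms of one angular derivative of $\snabla\rho,\snabla\sigma$, hence by $r^{-3}$ times one angular derivative of $\phi\mp\phibar$. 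This is exactly why $M_{\rho,\sigma}$ in (\ref{eq:mrhosdef}) carries both $\phi$ and $\phibar$ with one angular derivative. Inserting the pointwise decay of $\phi,\phibar$ — obtained from the Morawetz estimate and the $r^p$-hierarchy (\ref{pdue})--(\ref{puno}) underlying Theorem~\ref{prop:decayteu}, via Sobolev embedding along the null foliation — and distributing the $r^{-3}$ weight according to the region then reproduces (\ref{eq:rsnear}) and its far-region counterparts.

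The main obstacle is bookkeeping rather than conceptual: one must verify that the region-dependent pointwise decay of $\phi,\phibar$, multiplied by the explicit $r^{-3}$, reproduces each stated power of $(|u|+1)$ and $r$ (and of $v$ near the horizon), and that the Poincar\'e/Sobolev step on spheres of radius $r$ introduces exactly the $r$-weights encoded in (\ref{eq:defsobo}). A secondary point requiring care is that charge conservation, and hence the identification of $\rho_s/r^2,\sigma_s/r^2$ as the spherical means, relies on \emph{both} transport equations and therefore on the full null-decomposed system (\ref{mw1})--(\ref{mw6}), not merely on the Teukolsky equations; this is the one place where the assumption that $\alpha,\alphabar$ arise from an actual solution $F$ — rather than from abstract Teukolsky data — is genuinely used.
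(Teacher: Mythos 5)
Your proposal is correct and follows essentially the same route as the paper's proof: extreme components directly from Theorem~\ref{prop:decayteu}; conservation of the charges by integrating the transport equations (\ref{mw3})--(\ref{mw6}) over the spheres, identifying $\rho_s/r^2$, $\sigma_s/r^2$ as the conserved spherical means; the identities of Remark~\ref{rm:secret} converting $\snabla \rho$, $\snabla \sigma$ into $\frac{1}{2}r^{-3}(\phi \mp \phibar)$; and sphere Sobolev estimates (Lemmas~\ref{lem:sobsphere} and~\ref{lem:sobforrho}) combined with the flux decay (\ref{eq:decaydue}), (\ref{eq:decaylie}) coming from the $r^p$ hierarchy. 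One caveat of phrasing only: the paper never proves or uses pointwise decay of $\phi, \phibar$ themselves --- consistent with your own derivative count ($s=1$ in $M_{\rho,\sigma}$), the quantity actually estimated is the sphere-$L^2$ norm of one angular derivative of $\phi, \phibar$ at each $(u,v)$, obtained in the far region by integrating along the outgoing cone from $\{r=R\}$ with Cauchy--Schwarz against the $r$-weighted fluxes of (\ref{pdue}), and near the horizon via the ingoing-cone Sobolev Lemma~\ref{lem:sobforrho}, which exploits the redshift ($N$-flux) contained in $\fara^\infty$; a literal pointwise bound on $\snabla\phi$ would cost two further angular commutations beyond what $M_{\rho,\sigma}$ provides.
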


A few remarks are in order.

\begin{remark}
We will not delve into the issue of optimal well-posedness statements for the Maxwell system or for the spin $\pm$1 Teukolsky equations here. Let us just remark that, in the smooth category, well-posedness for the characteristic initial value problem follows, in both situations, from ideas contained in the work by Rendall \cite{rendallchar}.
\end{remark}

\begin{remark} We notice that Theorem~\ref{prop:decaymax} gives the decay rate $v^{-1}$ for all components of the field in the region $\{r \leq R\}$. Furthermore, for all components of the field, we have the \textbf{uniform peeling estimates}, on a fixed outgoing null cone $\conplus_{\tilde u} \cap \{r \geq R\}$:
\begin{equation}
\begin{aligned}
|\alpha| &\lesssim r^{-3} (|\tilde u|+1)^{-\frac 1 2}, \\
|\rho|, |\sigma| &\lesssim r^{-2} (|\tilde u|+1)^{-\frac 1 2}, \\
|\alphabar| &\lesssim r^{-1} (|\tilde u|+1)^{-1}. 
\end{aligned}
\end{equation}
Here, we supposed for simplicity that $M_{\rho, \sigma}$, $M_\alpha$, $M_{\alphabar}$ all be finite, and that $\rho_s = \sigma_s = 0$. The bound for $\alpha$ is the stronger one, corresponding to inequality (\ref{eq:da4}).

We underline the difference between estimates (\ref{eq:da3}) and (\ref{eq:da4}). In the former, we require less of the initial data to obtain a lower decay rate. In the latter, we have a larger weight on the $L$-derivative of $\Psi$, and we obtain a uniform peeling estimate for $\alpha$. Weaker requirements on initial data, though implying weaker decay, may be useful for applications to nonlinear problems, in view of a bootstrap argument. An example is the original proof of the nonlinear stability of the Minkowski spacetime by Christodoulou and Klainerman \cite{globalnon}, in which the authors do not need optimal decay rates in order to close the argument. In fact, the failure of peeling to hold has a physical interpretation \cite{grossmann}.

\end{remark}

\begin{remark}[On initial data]\label{rem:indata}
We remark that, in order to solve the full Maxwell system (\ref{mw1}) -- (\ref{mw6}), it is enough to impose initial data for $\alpha$ and $\alphabar$ on the set $\mathcal{C}_{u_0, v_0}$. For, then, all first derivatives of $\alpha$ and $\alphabar$ along $\mathcal{C}_{u_0, v_0}$ can be recovered via the spin $\pm 1$ Teukolsky Equations. Then, we can solve for $\alpha$ and $\alphabar$ in $\{u \geq u_0\} \cap \{v \geq v_0\}$, again from the Teukolsky Equations. Finally, we can use relations (\ref{mw1}), (\ref{mw2}) to recover all angular derivatives of $\rho$ and $\sigma$. This defines uniquely a solution up to the stationary solutions~(\ref{eq:statsol}). The resulting quantities $\alpha$, $\alphabar$, $\sigma$, $\rho$ then satisfy the full Maxwell system (\ref{mw1}) -- (\ref{mw6}).
\end{remark}

\begin{remark}
For notational convenience, the norms we defined in Equation~(\ref{eq:defsobo}) are not intrinsic to the surface $\mathcal{C}_{u_0, v_0}$. Nevertheless, using the \fackip equation for $\phi$ and $\phibar$, it can be shown that 
\begin{equation}\label{eq:inittransl}
\begin{aligned}
\snorm{\phi}{2}{2}{0}^2 &\leq \int_{\conplus_{u_0} \cap \{r\geq R\}}\sum_{I \in \iota^+_{\leq 3}}(|\snabla_L \snabla^I \phi|^2 + |\snabla \snabla^I \phi|^2 + r^{-2}|\snabla^I \phi|^2) \de v \desphere\\
&+\int_{\conplus_{u_0} \cap \{r \geq R\}}(r^2 |\snabla_L \snabla_L \phi|^2 + r^2 |\snabla_L \phi|^2 ) \de v \desphere\\
&+ \int_{\conminus_{u_0} \cap \{r \leq R\}}\sum_{J \in \iota^-_{\leq 4}}|\snabla^J \phi| (1-\mu)\de u \desphere.
\end{aligned}
\end{equation}
Here, $\iota^{+}_{\leq 2}$ (resp. $\iota^{-}_{\leq 2}$) is the set of all ordered lists of length $\leq 2$ composed of elements of $\leo \cup \{L\}$ (resp. $\leo \cup \{(1-\mu)^{-1}\lbar\}$).

Estimate (\ref{eq:inittransl}) implies in particular that the norm $\snorm{\phi}{2}{2}{0}^2$ can be controlled in terms of a norm intrinsic to the surface $\mathcal{C}_{u_0, v_0}$. Similar expressions hold for $\alpha$ and $\alphabar$.

Furthermore notice that estimate (\ref{eq:inittransl}) ``loses derivatives''. On the left hand side, the norm $\snorm{\phi}{2}{2}{0}^2$ depends on $3$ ``unweighted'' derivatives, and on $2$ ``weighted'' derivatives. The norm on the right hand side of (\ref{eq:inittransl}), on the other hand, depends on $4$ ``unweighted'' derivatives and $2$ ``weighted'' derivatives.
\end{remark}

\begin{remark}[On the propagation of decay from initial data]
Let $\omega$ be a non-trivial $1$-form on $\mathbb{S}^2$. Suppose for ease of exposition that $\sdelta_{\mathbb{S}^2}\omega = 2 \omega$. Let $f_1(r)$ be a smooth function of $r$. Following Remark \ref{rem:indata}, let us set initial data for $\alpha$ on $\mathcal{C}_{u_0, v_0}$ in the following way:
\begin{equation}
\alpha = f_1(r) \omega \qquad \text{on } \mathcal{C}_{u_0, v_0}.
\end{equation}
We then use the relation (spin $+1$ Teukolsky equation) to induce data for $\phi$:
\begin{equation*}
\begin{aligned}
\snabla_L \phi_A &= r^2 \sdelta (r \alpha_A) - r \alpha_A.
\end{aligned}
\end{equation*}
From the latter, it follows that initial data for $\phi$ satisfies
\begin{equation}\label{eq:phiinit}
\phi(u_0,v) = \left(\int_{v_0}^v r(u_0, \tilde v) f_1(r(u_0, \tilde v))  \de \tilde v \right)\omega + \eta,
\end{equation}
where $\eta$ is a fixed one-form on $\mathbb{S}^2$. Here, $r(u_0, \tilde v)$ denotes the $r$-coordinate of the point $(u_0, \tilde v)$ in $(u,v)$-coordinates. Let us denote 
\begin{equation*}
f_2(\bar r) := \int_{v_0}^{v} r(u_0, \tilde v) f_1(r(u_0, \tilde v))  \de \tilde v,
\end{equation*}
whenever the $r$-coordinate of the point $(u_0, v)$ is $\bar r$.

Now, let $s \in (\frac 3 2, 1)$, and let us suppose the following on the function $f_1$:
\begin{equation}\label{eq:condalpha}
|f_1(r)| \sim r^{-1-s}, \qquad |f_1'(r)| \sim r^{-2-s}, \qquad |f_1''(r)| \sim r^{-3-s},
\end{equation}
as $r \to \infty$.

It then follows that
\begin{equation}\label{eq:condf}
|f_2(r)| \sim 1, \qquad |f_2'(r)|\sim r^{-s}, \qquad |f_2''(r)| \sim r^{-s-1}, \qquad  |f_2'''(r)| \sim r^{-s-2},
\end{equation} 
Now, from (\ref{eq:inittransl}) and the form of $\phi$ (\ref{eq:phiinit}), we obtain
\begin{equation*}
\snorm{\phi}{2}{2}{1} < \infty.
\end{equation*}
Under conditions (\ref{eq:condalpha}), we furthermore have that, recalling $\overline \Psi_A = \chi(r) r^3(1-\mu)^{-1}\alpha_A$, with $\chi(r)$ smooth supported away from $r = 2M$, such that $\chi(r) = 1$ for $r \geq 3M$,
\begin{equation*}
 \snorm{\overline \Psi}{0}{0}{2} < \infty.
\end{equation*}
In this case, the norm on the right hand side of estimate (\ref{eq:da3}) with $q = 2$ is bounded, and we obtain the bound $|\alpha|\leq C r^{-\frac 5 2}$ for $\alpha$ along any fixed outgoing cone, whereas we supposed that $|\alpha|$ is asymptotic to $r^{-1-s}$ on $\mathcal{C}_{u_0, v_0}$, with $s + 1 > \frac 5 2$. In this case, we do not recover the initial decay.

On the other hand, if we impose $s \geq 1$, we obtain
\begin{equation*}
 \snorm{\overline \Psi}{2}{0}{2} < \infty,
\end{equation*}
and the right hand side of estimate (\ref{eq:da4}) is finite. We then have $|\alpha| \leq C r^{-3}$ for some constant $C$, along a fixed outgoing null cone.

In particular, for $s=1$, we are able to propagate the $r^{-3}$ initial decay. Similar statements hold for $\alphabar$, $\rho$, $\sigma$.

We finally remark that, if we were to assume a sharper decay than $r^{-3}$ for $\alpha$ on the initial cone $\mathcal{C}_{u_0,v_0}$, generically, it would not propagate.
\end{remark}

\section{Estimates on the \fackip Equation}\label{sec:rpmethod}
In this section, we prove integrated decay estimates for solutions to the \fackip equation. The estimates and the methods to obtain them are very similar to those in \cite{linearized}.
The results contained in this section are of independent interest, and the section can be read independently from the rest of the paper, starting from the assumption that $\phi$ only satisfies the \fackip equation.

We do not prove pointwise decay for $\phi$, as it clearly follows from the ideas in the proof of Theorem~\ref{prop:decaymax}, cf.~Remark~\ref{rm:secret}.

Let us now proceed to the setup. Let $v_2 \geq v_1 \geq v_0$, and $u_2 \geq u_1 \geq u_0$. Let $V = (1-\mu)/r^2$. Recall the definition of the null fluxes and of the Sobolev norms:
\begin{align}\label{eq:fdefu}
	\fara^T_u[\phi](v_1, v_2) &:= \int_{v_1}^{v_2} \int_{\mathbb{S}^2} [|\snabla_L \phi|^2+(1-\mu)|\snabla \phi|^2+V|\phi|^2](u,v,\omega) \de v  \desphere(\omega) ,\\ \label{eq:fdefv}
	\fara^T_v[\phi](u_1, u_2) &:= \int_{u_1}^{u_2}\int_{\mathbb{S}^2} [|\snabla_\lbar \phi|^2+(1-\mu)|\snabla \phi|^2+V|\phi|^2](u,v,\omega) \de u  \desphere(\omega),\\
	\fara^N_v[\phi](u_1, u_2) &:= \\
	& \int_{u_1}^{u_2}\int_{\mathbb{S}^2} [(1-\mu)^{-1}|\snabla_\lbar \phi|^2+(1-\mu)|\snabla \phi|^2+V|\phi|^2](u,v,\omega) \de u  \desphere(\omega),\nonumber\\
	 \fara^\infty[\phi] (u_1, v_1) &:= \fara^T_{u_1} [\phi] (v_1, \infty) + \fara^N_{v_1} [\phi] (u_1, \infty)\\
	\snorm{\phi}{q}{x}{s} &:= \\
	 	 &\sum_{i =0}^x \sum_{J \in \iota_{\leq s}^\Omega} \left\{
	 	 \fara^\infty[(\snabla_T)^i \slie^J \phi](P(u_0)) \phantom{\int} \right. \nonumber\\
	 	 &\hspace{60pt}\left. + \int_{\conplus_{u_0} \cap \{r \geq R \}} r^q |\snabla_L (\snabla_T)^{\min\{i, (x-1)^+ \}} \slie^J \phi|^2\de v \desphere \right\}. \nonumber
	\end{align}
Here, $P(u_0) = (u_0, 2R_* + u_0)$.

Let $R_* > 0$. We define the spacetime regions:
\begin{equation}\label{eq:ddef}
\begin{aligned}
 \mathfrak{D}_{u_1}^{u_2} &:= \left\{r \geq R, u \in [u_1, u_2] \right\},\\
 \mathfrak{E}_{v_1}^{v_2} &:= \left\{r \leq R, v \in [v_1, v_2] \right\},\\
 \mathfrak{F}_{u_1}^{u_2} &:= \mathfrak{D}_{u_1}^{u_2} \cup \mathfrak{E}_{v_1}^{v_2}, \mbox{ such that } v_1-u_1 = 2 R_* \mbox{ and } v_2-u_2 = 2 R_*.
\end{aligned}
\end{equation}
\begin{figure}
\centering
\begin{tikzpicture}	

\node (I)    at ( 0,0) {};

\path  % Four corners of diamond
  (I) +(90:4)  coordinate[label=90:$i^+$]  (top)
       +(-90:4) coordinate[label=-90:$i^-$] (bot)
       +(0:4)   coordinate                  (right)
       +(180:4) coordinate (left)
       ;
\path 
	(top) + (180:4) coordinate  (acca)
		+ (-45: 3) coordinate (nulluno)
		+ (-45: 4) coordinate (nulldue)
	;

\path 
	(left) + (-45: 3) coordinate (correspuno)
		+ (-45: 4) coordinate (correspdue)
	;

\draw [name path = rconst] (top) to [bend left = 15](bot);
\draw [name path = nc, opacity= 0] (nulluno) to (correspuno);
\draw [name path = nc1, opacity= 0] (nulldue) to (correspdue);
\draw [name intersections={of=rconst and nc}](nulluno) to node[midway, sloped, above]{ \tiny $\{u = u_2\}$} (intersection-1);
\draw [name intersections={of=rconst and nc1}] (nulldue) to node[midway, sloped, below]{\tiny $\{u = u_1\}$} (intersection-1);

\path (top) + (-135: 2.2) coordinate (horiuno)
			+ (-135: 3.2) coordinate (horidue);

\draw [name path = taudue, name intersections={of=rconst and nc}] (horiuno) to node[midway,sloped, above ]{\tiny $\{v = v_2\}$}(intersection-1);
\draw [name path = tauno, name intersections={of=rconst and nc1}] (horidue) to node[midway, sloped, below]{\tiny $\{v = v_1\}$} (intersection-1);

\coordinate [name intersections={of=tauno and nc}] (bam) at (intersection-1);
\coordinate [name intersections={of=rconst and tauno}] (bamm) at (intersection-1);
\coordinate [name intersections={of=rconst and taudue}] (rbam) at (intersection-1);
\coordinate [name intersections={of=nc and rconst}] (topz) at (intersection-1);
\fill[gray!10] (bam) -- (bamm) -- (nulluno) -- (nulldue) -- cycle;
\fill[gray!10] (bamm) -- (topz) -- (horiuno) -- (horidue) -- cycle;

\path
	(nulluno) + (-100:1.3) coordinate[label=90:$\mathfrak{D}_{u_1}^{u_2}$] (d12);

\draw [name path = taudue, name intersections={of=rconst and nc}] (horiuno) to node[midway,sloped, above ]{\tiny $\{v = v_2\}$}(intersection-1);
\draw [name path = tauno, name intersections={of=rconst and nc1}] (horidue) to node[midway, sloped, below]{\tiny $\{v = v_1\}$} (intersection-1);

\draw [name intersections={of=rconst and nc}] (nulluno) -- (intersection-1);
\draw [name intersections={of=rconst and nc1}] (nulldue) -- (intersection-1);

\draw (left) -- 
          node[midway, above left, sloped]    {$\mathcal{H}^+$}
      (top) --
          node[midway, above, sloped] {$\mathcal{I}^+$}
      (right) -- 
          node[midway, above, sloped] {$\mathcal{I}^-$}
      (bot) --
          node[midway, above, sloped]    {$\mathcal{H}^-$}    
      (left) -- cycle;

\draw [name path = rconst, densely dotted] (top) to [bend left = 15](bot);

% Squiggly lines
\draw[decorate,decoration=zigzag] (top) -- (acca)
      node[midway, above, inner sep=2mm] {$r=0$};

\path (horiuno) +(-59:2)  coordinate[label=center: $\mathfrak{E}_{v_1}^{v_2}$]  (top);
      
\end{tikzpicture}
\caption{Penrose diagram of the considered regions.}\label{figuno}
\end{figure}
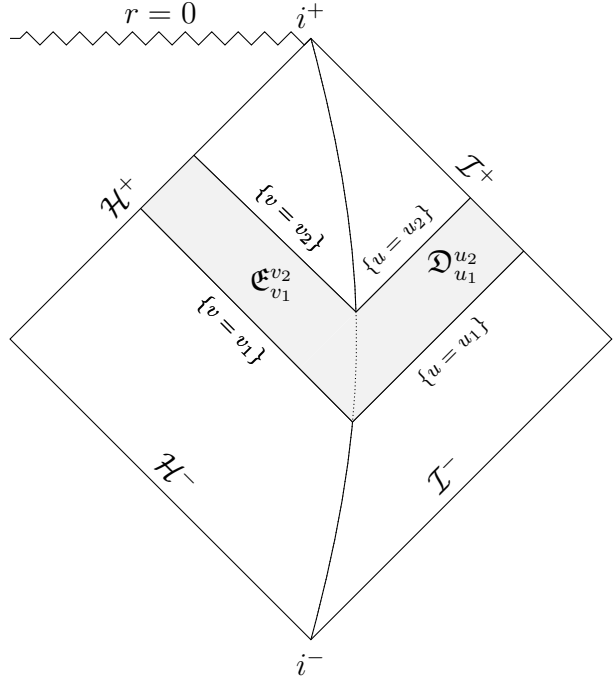

\subsection{Energy conservation}

\begin{lemma}
Let $\phi$ be a smooth solution to the \fackip Equation~(\ref{eq:rw1}) on $\{u \geq u_0\} \cap \{v \geq v_0\}$.
Let $v_2 \geq v_1 \geq v_0$, and $u_2 \geq u_1 \geq u_0$. Defining the fluxes as in (\ref{eq:fdefu}), (\ref{eq:fdefv}), we have that
\begin{equation} \label{eq:encons}
	\fara^T_{u_2}[\phi](v_1, v_2) + \fara^T_{v_2}[\phi](u_1, u_2) = \fara^T_{v_1}[\phi](u_1, u_2) + \fara^T_{u_1}[\phi](v_1, v_2).
\end{equation}
\end{lemma}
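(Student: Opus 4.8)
The statement is the $T$-energy conservation law for the \fackip equation, associated to the stationary Killing field $T := \partial_t = \tfrac12(L+\lbar)$. The four null segments bounding the coordinate rectangle $\{u_1 \le u \le u_2\} \cap \{v_1 \le v \le v_2\}$ are exactly the cones carrying the fluxes $\fara^T_{u_1},\fara^T_{u_2}$ (the $\{u=\mathrm{const}\}$ pieces, integrated in $v$) and $\fara^T_{v_1},\fara^T_{v_2}$ (the $\{v=\mathrm{const}\}$ pieces, integrated in $u$). The plan is therefore to produce, after integrating over $\sph$, a pointwise conservation law in $(u,v)$ of the form
\[
\partial_u \Big( \int_{\sph} e_L \, \desphere \Big) + \partial_v \Big( \int_{\sph} e_{\lbar}\, \desphere \Big) = 0,
\]
where $e_L := |\snabla_L \phi|^2 + (1-\mu)|\snabla \phi|^2 + V|\phi|^2$ and $e_{\lbar} := |\snabla_{\lbar}\phi|^2 + (1-\mu)|\snabla \phi|^2 + V|\phi|^2$ are precisely the densities of $\fara^T_u$ and $\fara^T_v$. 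Integrating this identity over the rectangle and applying the fundamental theorem of calculus in each variable then yields $(\fara^T_{u_2} - \fara^T_{u_1}) + (\fara^T_{v_2} - \fara^T_{v_1}) = 0$, which rearranges to \eqref{eq:encons}.

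To obtain the divergence identity I would run the multiplier method. Taking the $\gbar$-inner product of the \fackip equation \eqref{eq:rw1} with $\snabla_L \phi$, and separately with $\snabla_{\lbar}\phi$ (using the equivalent form \eqref{eq:rwphi}, which reflects that $[\snabla_L, \snabla_{\lbar}]$ annihilates $\sph$-tangent tensors on Schwarzschild), the second-order terms become $\tfrac12 \snabla_{\lbar} |\snabla_L \phi|^2$ and $\tfrac12 \snabla_L |\snabla_{\lbar}\phi|^2$. Here one uses that the projected connections $\snabla_L, \snabla_{\lbar}$ are compatible with $\gbar$ on sphere-tangent tensors: the component of $\snabla_{\lbar} W$ normal to the sphere is $g$-orthogonal to sphere-tangent fields, so $\snabla_L \gbar = \snabla_{\lbar}\gbar = 0$ as $(0,2)$-tensors on $\mathcal B$. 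Adding the two resulting scalar identities and integrating over the closed sphere, the Laplacian terms may be integrated by parts (self-adjointness of $\sdelta$, no boundary term), turning $(1-\mu)\langle \sdelta \phi, \snabla_L \phi + \snabla_{\lbar}\phi\rangle$ into $-(1-\mu)\gbar^{AB}\langle \snabla_A \phi, \snabla_B(\snabla_L \phi + \snabla_{\lbar}\phi)\rangle$, while the zeroth-order terms assemble into $V\langle \phi, \snabla_L \phi + \snabla_{\lbar}\phi\rangle = \tfrac12 V (L+\lbar)|\phi|^2$.

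The heart of the matter --- and the step I expect to be the \emph{main obstacle} --- is to recognize these assembled contributions as the exact $\snabla_L$- and $\snabla_{\lbar}$-derivatives of the densities $(1-\mu)|\snabla\phi|^2$ and $V|\phi|^2$. Commuting $\snabla_{\lbar}$ past the angular derivative in $\snabla_{\lbar}|\snabla\phi|^2 = 2\gbar^{AB}\langle \snabla_A \snabla_{\lbar}\phi, \snabla_B \phi\rangle + 2\gbar^{AB}\langle [\snabla_{\lbar}, \snabla_A]\phi, \snabla_B \phi\rangle$ (and likewise for $\snabla_L$) produces, after summation, three types of first-order error: those proportional to $(L+\lbar)(1-\mu)$, those proportional to $(L+\lbar)V$, and the combined commutator term $\gbar^{AB}\langle ([\snabla_L, \snabla_A] + [\snabla_{\lbar}, \snabla_A])\phi, \snabla_B \phi\rangle$. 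The first two vanish because $L+\lbar = 2\partial_t = 2T$ and both $\mu = 2M/r$ and $V=(1-\mu)/r^2$ depend only on $r$ with $T r = 0$; the third vanishes because $\snabla_L + \snabla_{\lbar} = 2\snabla_T$, so the commutator equals $2[\snabla_T, \snabla_A]\phi$ and $T$ is Killing. Conceptually, all of these first-order errors are exactly the contraction of the symmetric stress tensor of $\phi$ with the deformation tensor $\tfrac12 \mathcal{L}_T g$, which vanishes identically; this is the single structural fact making the conservation exact.

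This also suggests the cleanest alternative packaging of the same computation: introduce the stress tensor $\mathbb{T}_{\mu\nu}[\phi] := \langle \snabla_\mu \phi, \snabla_\nu \phi\rangle - \tfrac12 g_{\mu\nu}(\langle \snabla^\lambda \phi, \snabla_\lambda \phi\rangle + V|\phi|^2)$ and the current $J_\mu := \mathbb{T}_{\mu\nu} T^\nu$. Then $\nabla^\mu J_\mu = (\nabla^\mu \mathbb{T}_{\mu\nu}) T^\nu + \mathbb{T}_{\mu\nu}\nabla^\mu T^\nu$: the second term vanishes since $\mathbb{T}$ is symmetric and $\nabla T$ antisymmetric ($T$ Killing), while the first is killed by \eqref{eq:rw1} together with $T V = 0$. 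Applying the divergence theorem to $J$ on the null rectangle, and computing that the flux of $J$ through a $\{u=\mathrm{const}\}$ (resp.\ $\{v=\mathrm{const}\}$) cone is $\fara^T_u$ (resp.\ $\fara^T_v$), reproduces \eqref{eq:encons}. Either route reduces the lemma to the Killing property of $\partial_t$; the only genuine labor is the bookkeeping of the first-order commutator terms in the multiplier computation.
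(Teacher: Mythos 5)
Your proposal is correct and takes essentially the same approach as the paper: the pointwise identity $\partial_u\int_{\sph} e_L\,\desphere + \partial_v\int_{\sph} e_{\lbar}\,\desphere = 0$ that you derive by the multiplier method is exactly the displayed identity in the paper's proof (expanding the $(\snabla_\lbar + \snabla_L)$ and $(\snabla_\lbar - \snabla_L)$ combinations there yields precisely your two flux densities $2e_L$ and $2e_{\lbar}$), after which both arguments conclude by integrating with respect to $\de u \, \de v$ over the null rectangle. The only difference is one of exposition: the paper asserts the identity without derivation, whereas you supply the justifying bookkeeping ($[\snabla_T,\snabla_A]=0$, $T(1-\mu)=TV=0$, metric compatibility of the projected connections), which is consistent with the commutation relations the paper records elsewhere.
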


\begin{proof}
First, we notice that the \fackip Equation implies:
\begin{equation*}
\begin{aligned}
	(\snabla_\lbar + \snabla_L) \int_{\mathbb{S}^2} \left\{|\snabla_L \phi|^2+|\snabla_\lbar \phi|^2	+ 2 \frac{1-\mu}{r^2} |r \snabla \phi|^2 + 2V|\phi|^2\right\}\desphere\\
	+ (\snabla_\lbar - \snabla_L) \int_{\mathbb{S}^2}\left\{
	|\snabla_L \phi|^2-|\snabla_\lbar \phi|^2\right\}\desphere = 0.
\end{aligned}
\end{equation*}
Integrating with respect to $\de u \de v$ yields the claim.
\end{proof}

\subsection{Morawetz estimate}

\begin{lemma}\label{lem:moraw}
There exists a positive constant $C$ such that the following holds. Let $\phi$ be a smooth solution to the \fackip Equation~(\ref{eq:rw1}) on $\{u \geq u_0\} \cap \{v \geq v_0\}$. Defining the fluxes as in (\ref{eq:fdefu}), (\ref{eq:fdefv}), we have that
\begin{equation} \label{eq:fullen}
\begin{aligned}
 \int_{u_1}^{u_2}\int_{v_1}^{ v_2}\int_{\mathbb{S}^2} 
&\left\{\frac 1 {r^2}|\snabla_L \phi - \snabla_\lbar \phi|^2+ \frac{(r-3M)^2}{r^3}\left(|\snabla \phi|^2 + \frac 1 {r} |\snabla_L \phi + \snabla_\lbar \phi|^2\right)\right.\\& \left.+\frac 1 {r^3} |\phi|^2\right\} (1-\mu) \de u \de v  \desphere\\ &\leq
 C(\fara^T_{v_1}[\phi](u_1,u_2)+ \fara^T_{u_1}[\phi](v_1,v_2)).
\end{aligned}
\end{equation}
\end{lemma}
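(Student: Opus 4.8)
The plan is to prove~(\ref{eq:fullen}) by the physical-space multiplier (vector-field) method, using a radial multiplier whose profile is tuned to reproduce the photon-sphere degeneracy $(r-3M)^2$ visible in the statement. It is useful to keep in mind the scalar/per-mode picture as a guide: since $V$, $\sdelta$, the angular Killing fields and $\partial_t=\tfrac12(\snabla_L+\snabla_\lbar)$ all commute with the operator, decomposing $\phi$ into angular modes turns~(\ref{eq:rw1}) into a family of $1{+}1$ wave equations $\partial_{r_*}^2\Phi-\partial_t^2\Phi-\tilde V\Phi=0$ with $\tilde V=(1-\mu)(\Lambda+1)/r^2>0$, where $\Lambda>0$ is (minus) the eigenvalue of $\sdelta_{\mathbb S^2}$ on the one-form $\phi$ (there is no spherically symmetric mode, so $\Lambda$ is bounded below). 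The crucial structural fact is that $\tilde V'\propto -(r-3M)$ with a positive $r$-dependent prefactor (using $\tfrac{dr}{dr_*}=1-\mu$): the potential is critical exactly at the photon sphere $r=3M$, which is the source of trapping and of the degenerate weight.

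First I would form the current associated with the multiplier $X\phi+\tfrac12 g\,\phi$, where $X=f(r_*)\,\partial_{r_*}=\tfrac12 f(r_*)(\snabla_L-\snabla_\lbar)$ and $f,g$ are radial functions to be chosen. Pairing~(\ref{eq:rw1}) with this multiplier and integrating over $\{u\in[u_1,u_2],\,v\in[v_1,v_2]\}$ against the spacetime volume yields, after integration by parts in $r_*$, in $t$ and on the spheres, a spacetime bulk integral plus boundary fluxes on the four faces $\{u=u_i\},\{v=v_i\}$. The next step is purely algebraic: read off the coefficients of the quadratic form in $(\snabla_L-\snabla_\lbar)\phi$, $(\snabla_L+\snabla_\lbar)\phi$, $\snabla\phi$ and $\phi$. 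The radial term comes with coefficient $\sim f'$; the coefficient of $|\snabla\phi|^2$ (and, after using $\Lambda|\phi|^2\sim r^2|\snabla\phi|^2$ on each mode, part of the $|\phi|^2$ coefficient) is $\sim -\tfrac12 f\,\tilde V'\propto f\,(r-3M)$; the coefficient of the good derivative $|\snabla_L\phi+\snabla_\lbar\phi|^2$ is $\sim\tfrac12(g-f')$; and the net zeroth-order coefficient is a combination of $f'''$, $g$ and $f\tilde V'$.

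I would then choose the profiles. Take $f$ increasing, $f'\ge 0$, with $\operatorname{sign} f=\operatorname{sign}(r-3M)$ and $f$ vanishing to first order at $r=3M$, normalized to have the correct behaviour as $r_*\to\pm\infty$; this makes the radial term nonnegative and nondegenerate (giving $\tfrac1{r^2}|\snabla_L\phi-\snabla_\lbar\phi|^2$), while $-\tfrac12 f\tilde V'\propto f(r-3M)\sim(r-3M)^2$ produces the degenerate angular weight $\tfrac{(r-3M)^2}{r^3}|\snabla\phi|^2$. I would take $g$ essentially equal to $f'$ but retaining a small nonnegative degenerate excess $\sim(r-3M)^2$, so that the good-derivative coefficient $\tfrac12(g-f')$ furnishes the degenerate term $\tfrac{(r-3M)^2}{r^4}|\snabla_L\phi+\snabla_\lbar\phi|^2$; the extra contribution this makes to the lower-order coefficient is itself degenerate and harmless. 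Finally, the Poincar\'e inequality on $\mathbb S^2$ (legitimate because $\phi$ carries no $\ell=0$ part) is used to trade a portion of the angular term for the nondegenerate $\tfrac1{r^3}|\phi|^2$ term and to absorb any residual sign-indefinite lower-order piece.

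It remains to control the boundary terms. The fluxes on $\{u=u_1\}$ and $\{v=v_1\}$ are directly bounded by $\fara^T_{u_1}[\phi](v_1,v_2)$ and $\fara^T_{v_1}[\phi](u_1,u_2)$ since $f$ and $g$ are bounded, while the fluxes on $\{u=u_2\},\{v=v_2\}$ are reduced to these using the energy-conservation identity~(\ref{eq:encons}); the modification's boundary contributions (involving $\phi^2$ and $\phi\,\partial\phi$) are absorbed into the energy fluxes via a Hardy inequality. I expect the single genuine obstacle to be the trapping at $r=3M$: the requirements $f'\ge0$ and $f(3M)=0$ are compatible only for an $f$ that changes sign there, and one must check that with such an $f$ all four bulk coefficients keep the correct sign and the stated $r$-weights \emph{simultaneously} across $r\in(2M,\infty)$ --- in particular that the degenerate good-derivative term can be produced without spoiling positivity at the photon sphere. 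The appearance of the factor $(r-3M)^2$ is the unavoidable price of this trapping.
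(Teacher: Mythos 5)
Your proposal follows essentially the same route as the paper: the paper's proof is exactly the multiplier identity for $f\partial_{r_*}$ (identity (\ref{en1})) plus the zeroth-order Lagrangian correction with coefficient $f'$ (identity (\ref{en2})), summed to cancel the indefinite cross term; positivity of the bulk is enforced through the Poincar\'e inequality for one-forms (Lemma~\ref{lem:poinca}) — your remark that $\phi$ carries no spherically symmetric mode is precisely what that lemma encodes — and the boundary terms are bounded by the $T$-fluxes, for which Cauchy--Schwarz already suffices (the fluxes contain the $V|\phi|^2$ term, so the Hardy inequality you invoke is not actually needed).

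Two points of comparison. First, the verification you flag as the ``single genuine obstacle'' is carried out in the paper by the explicit choice $f(r)=\left(1+\frac{M}{r}\right)\left(1-\frac{3M}{r}\right)$, for which the positivity condition (\ref{posicon}) reduces, after multiplying by $-\frac14 r^3$, to a single polynomial inequality in $r$ valid on all of $(2M,\infty)$; your described class of $f$ (increasing, bounded, vanishing linearly at $r=3M$ with the sign of $r-3M$) contains this choice, so the plan does close, but as written your argument stops short of exhibiting an $f$ and checking the sign condition globally, which is the computational heart of the lemma. Second, where you propose producing the degenerate good-derivative term $\frac{(r-3M)^2}{r^4}|\snabla_L\phi+\snabla_\lbar\phi|^2$ by perturbing the Lagrangian coefficient $g$ away from $f'$ by an amount $\sim(r-3M)^2$, the paper instead runs identity (\ref{en1}) a second time with a separate monotone $f$ vanishing to third order at $r=3M$. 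The two devices are equivalent bookkeeping, but check your sign: in the paper's normalization the summed coefficient of $|\snabla_L\phi+\snabla_\lbar\phi|^2$ is proportional to $f'-g$, so one needs a degenerate \emph{deficit} $g<f'$ rather than an excess; the resulting losses in the angular and zeroth-order coefficients are then $O((r-3M)^2)$ with unfavorable sign but, as you anticipate, are absorbable by the main bulk terms for a small enough perturbation.
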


\begin{proof}[Proof of Lemma~\ref{lem:moraw}]
We consider the following identities, which follow from the \fackip equation. Let $f: \mathcal{S}_e \to \R$ be a smooth radial function. Let $(\cdot)'$ denote differentiation with respect to $\partial_{r_*}$.
\begin{equation}\label{en1}
\begin{aligned}
	(\snabla_\lbar + \snabla_L)&\int_{\mathbb{S}^2} f\left\{|\snabla_L \phi|^2-|\snabla_\lbar \phi|^2\right\} \desphere\\
	+ (\snabla_\lbar - \snabla_L) &\int_{\mathbb{S}^2}f\left\{|\snabla_L \phi|^2+|\snabla_\lbar \phi|^2	- 2 \frac{1-\mu}{r^2} |r \snabla \phi|^2 - 2V|\phi|^2\right\} \desphere\\
	+ &\int_{\mathbb{S}^2} \left\{2f'(|\snabla_L \phi|^2+ |\snabla_\lbar\phi|^2)-4 \partial_{r^\star}\left(f\frac{1-\mu}{r^2} \right)|r \snabla \phi|^2\right. \\&\left.\phantom{ \frac 2 3} -4 \partial_{r^\star}(fV)|\phi|^2\right\}\desphere = 0.
\end{aligned}
\end{equation}
We also have
\begin{equation}\label{en2}
\begin{aligned}
	(\snabla_\lbar + \snabla_L)& \int_{\mathbb{S}^2} \left(f' \phi \cdot (\snabla_\lbar + \snabla_L) \phi \right)\desphere\\
	-(\snabla_\lbar - \snabla_L)& \int_{\mathbb{S}^2} \left(f' \phi \cdot (\snabla_\lbar - \snabla_L)\phi+f''|\phi|^2 \right)\desphere\\
	+&\int_{\mathbb{S}^2} \left\{-2f'''|\phi|^2-4f'\snabla_\lbar \phi \cdot \snabla_L \phi \phantom{\frac 2 1} \right. \\
	&\left. + 4 f' \left( \frac{1-\mu}{r^2} |r\snabla \phi|^2+V|\phi|^2\right)\right\}\desphere = 0.
\end{aligned}
\end{equation}
In the previous equation, the dot $\cdot$ indicated that we are contracting with $\gbar$.
Let us now add the previous Equations (\ref{en1}) and (\ref{en2}), to get
\begin{equation}\label{mora}
\begin{aligned}
 	(\snabla_\lbar + \snabla_L) &\int_{\mathbb{S}^2} \left( f\left\{|\snabla_L \phi|^2-|\snabla_\lbar \phi|^2\right\}+f' \phi \cdot (\snabla_\lbar + \snabla_L)\phi \right)\desphere\\+
 	(\snabla_\lbar - \snabla_L)&\int_{\mathbb{S}^2}  \left(f\left\{|\snabla_L \phi|^2+|\snabla_\lbar \phi|^2 + 2 \frac{1-\mu}{r^2} |r \snabla \phi|^2 + 2V|\phi|^2\right\}\right.\\& \left.\phantom{\frac 1 2 }+ f' \phi \cdot (\snabla_\lbar - \snabla_L)\phi+f''|\phi|^2\right)\desphere\\
 	+&\int_{\mathbb{S}^2} \left\{ 2f'(|\snabla_L \phi - \snabla_\lbar\phi|^2) + |r \snabla \phi|^2\left[-4f \left(\frac{1-\mu}{r^2} \right)' \right] \right. \\ &\left.\phantom{\frac 1 2 }+ |\phi|^2(-4fV'-2f''')\right\}\desphere= 0.
\end{aligned}
\end{equation}
We now proceed to integrate Equation~(\ref{mora}) on spacetime against the form $\de u \de v$. By Lemma~\ref{lem:poinca}, we note that for the bulk term to be positive, it suffices that there exists a $c > 0$ such that
\begin{equation}\label{posicon}
	-2 \frac{(V+\frac 1 {r^2} (1-\mu))'}{1-\mu}f - \frac{f'''}{1-\mu} \geq \frac c {r^3}
\end{equation}
and that $f'>0$.
By choosing $f(r) := \left(1+\frac{M}{r} \right) \left(1-\frac{3M}{r} \right)$, let us calculate, as in \cite{linearized},
\begin{align*}
 f'&= (1-\mu)\left(\frac{2M}{r^2}+ \frac{6M^2}{r^3}\right), \\
 f'' &= (1-\mu)\partial_r(f') = \frac{2 M \left(-48 M^3+30 M^2 r+M r^2-2 r^3\right)}{r^6},\\
 f''' &= (1-\mu)\partial_r(f'') = \frac{4 M (r-2M) \left(144 M^3-75 M^2 r-2 M r^2+3 r^3\right)}{r^8}.
\end{align*}
 Multiplying inequality (\ref{posicon}) by $-\frac 1 4 r^3$, we obtain that inequality (\ref{posicon}) is achieved if and only if
\begin{equation*}
	\frac{144 M^4-93 M^3 r-8 M^2 r^2+13 M r^3-2 r^4 }{r^4}< c,
\end{equation*}
which is always the case after $r = 2M$. 

We therefore obtain the following estimate, making use of the positivity of the angular terms:

\begin{equation}
\begin{aligned}
 \int_{u_1}^{u_2}\int_{v_1}^{v_2}\int_{\mathbb{S}^2} &
\left\{\frac 1 {r^2}|\snabla_L \phi - \snabla_\lbar \phi|^2+ \frac{(r-3M)^2}{r^3}|\snabla \phi|^2+\frac 1 {r^3} |\phi|^2\right\} (1-\mu) \de u \de v \desphere \\ &\leq 
 C(F^T_{u_1}[\phi](v_1,v_2)+ F^T_{v_1}[\phi](v_1,v_2)).
\end{aligned}
\end{equation}
We can recover the missing derivative by integrating Equation~(\ref{en1}) with a monotonically increasing $f$, which vanishes of third order at $r = 3M$, and get:

\begin{equation}
\begin{aligned}
 \int_{u_1}^{u_2}\int_{v_1}^{ v_2}\int_{\mathbb{S}^2} 
&\left\{\frac 1 {r^2}|\snabla_L \phi - \snabla_\lbar \phi|^2+ \frac{(r-3M)^2}{r^3}\left(|\snabla \phi|^2 + \frac 1 {r} |\snabla_L \phi + \snabla_\lbar \phi|^2\right)\right. \\
&\left. +\frac 1 {r^3} |\phi|^2\right\} (1-\mu) \de u \de v  \desphere\\ &\leq
 C(\fara^T_{v_1}[\phi](u_1,u_2)+ \fara^T_{u_1}[\phi](v_1,v_2)).
\end{aligned}
\end{equation}
This is the claim.
\end{proof}

\subsection{The redshift estimate}

\begin{lemma}[The redshift estimate]\label{lem:redsh}
There exists a positive constant $C$ such that the following holds. Let $\phi$ be a smooth solution to the \fackip Equation~(\ref{eq:rw1}) on~$\{u~\geq~u_0\}~\cap \{v~\geq~v_0\}$. Let $u_2 \geq u_1 \geq u_0$ and $v_2 \geq v_1 \geq v_0$. Let $\chi(r)$ be a smooth cutoff function such that $\chi_{\mathcal{H}^+}(r) = 1$ for $r \in (2M, 5/2 M)$ and $\chi_{\mathcal{H}^+}(r) = 0$ for $r \geq 3M$.
Defining the fluxes as in (\ref{eq:fdefu}), (\ref{eq:fdefv}), we have that
\begin{align}\label{eq:enconse}
\fara^\infty(u_2, v_2) \leq C \fara^\infty(u_1, v_1),
\end{align}
\begin{equation}\label{eq:redsh}
\begin{aligned}
 \int_{u_1}^{u_2}&\int_{v_1}^{ v_2}\int_{\mathbb{S}^2} 
\left\{\frac 1 {r^2}|\snabla_L \phi - \snabla_{\lbar} \phi|^2\right. \\& 
\left.+ \frac{(r-3M)^2}{r^3}\left(|\snabla \phi|^2 + \frac 1 {r} |\snabla_L \phi + \snabla_{\lbar} \phi|^2\right)+\frac 1 {r^3} |\phi|^2\right\} (1-\mu) \de u \de v  \desphere\\ 
+\int_{u_1}^{u_2}&\int_{v_1}^{ v_2}\int_{\mathbb{S}^2} \chi_{\mathcal{H}^+}(r) |\snabla_{(1-\mu)^{-1}\lbar}\phi|^2 (1-\mu) \de u \de v \desphere
\\
&\leq
 C(\fara^N_{v_1}[\phi](u_1,u_2)+ \fara^T_{u_1}[\phi](v_1,v_2)).
 \end{aligned}
\end{equation}
\end{lemma}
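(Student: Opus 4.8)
We must prove the redshift Lemma~\ref{lem:redsh}, consisting of the uniform boundedness of the fluxes (\ref{eq:enconse}) and the improved Morawetz-plus-horizon estimate (\ref{eq:redsh}), in which an extra good term $\chi_{\mathcal{H}^+}|\snabla_{(1-\mu)^{-1}\lbar}\phi|^2$ appears near the horizon at the cost of replacing $\fara^T_{v_1}$ by the stronger nondegenerate flux $\fara^N_{v_1}$ on the right.

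**Approach.** The plan is to run Dafermos--Rodnianski's redshift vector-field argument. The difficulty addressed here is that the $T$-flux $\fara^T$ degenerates at the horizon: the coefficient of $|\snabla_\lbar\phi|^2$ is $1$ rather than $(1-\mu)^{-1}$, because $\lbar$ itself degenerates as $r\to 2M$. To see the physical-space mechanism, I would rewrite the \fackip equation (\ref{eq:rw1}) in terms of the regular derivative $\snabla_{(1-\mu)^{-1}\lbar}$, which the preliminaries declare to be a smooth vectorfield up to the horizon. The key algebraic fact I expect to exploit is that the surface gravity of Schwarzschild is positive: when one commutes and integrates by parts against a multiplier of the form $Y = e(r)(1-\mu)^{-1}\lbar$ supported near $r=2M$ via a cutoff $\chi_{\mathcal{H}^+}$, the nonnegative bulk term $|\snabla_{(1-\mu)^{-1}\lbar}\phi|^2$ is produced with a coefficient controlled below by the positive surface gravity, while the dangerous terms it generates are either lower order (controllable by the already-established Morawetz estimate of Lemma~\ref{lem:moraw}) or supported in $r\in[5/2M,3M]$, where $\chi_{\mathcal{H}^+}$ transitions and $1-\mu$ is bounded away from zero.

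**Steps, in order.** First I would derive the divergence identity for the redshift multiplier by contracting (\ref{eq:rw1}) against $\snabla_Y\phi$ with $Y=\chi_{\mathcal{H}^+}(r)(1-\mu)^{-1}\lbar$, integrating over $\mathfrak{D}_{u_1}^{u_2}\cup\mathfrak{E}_{v_1}^{v_2}$ against $\de u\,\de v\,\desphere$. Second, I would isolate the boundary terms on the null segments making up $C_{u_1,v_1}$ and $C_{u_2,v_2}$; near the horizon these reproduce precisely the nondegenerate flux $\fara^N$, accounting for the appearance of $\fara^N_{v_1}$ on the right of (\ref{eq:redsh}). Third, I would show the bulk term splits as a manifestly nonnegative piece $c\,\chi_{\mathcal{H}^+}|\snabla_{(1-\mu)^{-1}\lbar}\phi|^2$ (coercivity from positive surface gravity) plus error terms supported in the transition region, which are absorbed using the Morawetz estimate (\ref{eq:fullen}) — note that in $r\in[5/2M,3M]$ the weight $(r-3M)^2/r^3$ in (\ref{eq:fullen}) does \emph{not} degenerate badly, and $1-\mu\gtrsim 1$, so all errors are controlled. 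Fourth, adding a large multiple of the degenerate Morawetz estimate to the redshift identity gives (\ref{eq:redsh}). Finally, (\ref{eq:enconse}) follows by combining the energy-conservation identity (\ref{eq:encons}) with the nonnegativity of the redshift bulk term and a Gronwall-type absorption, letting $v_2,u_2\to\infty$.

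**Main obstacle.** The crux is the coercivity in Step~3: one must verify that the coefficient of $|\snabla_{(1-\mu)^{-1}\lbar}\phi|^2$ produced by the $Y$-multiplier is strictly positive at $r=2M$, which amounts to checking that the relevant combination of $e(r)$ and the Schwarzschild metric coefficients reflects the positive surface gravity $\kappa=1/(4M)>0$. Equivalently, one must choose the profile $e(r)$ (and implicitly the shape of $\chi_{\mathcal{H}^+}$) so that the good bulk term dominates the zeroth- and first-order errors arising from $\der(e(1-\mu)^{-1})$ and from the potential $V=(1-\mu)/r^2$; the tensorial nature of $\phi$ and the commutator $[\snabla_L,\snabla_\lbar]$ require care but produce only curvature terms that are smooth and bounded near $r=2M$, hence harmless. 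Once the sign is secured at the horizon, continuity extends it to a neighborhood, and the cutoff confines the construction there, so the remainder of the argument is routine.
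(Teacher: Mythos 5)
Your proposal is correct and is essentially the paper's proof: the paper contracts the \fackip equation with $h(r)\,\snabla_\lbar\phi$, where $h=(1-\mu)^{-1}$ near the horizon and is cut off to vanish for $r\geq 3M$ --- precisely your multiplier $Y=\chi_{\mathcal{H}^+}(r)(1-\mu)^{-1}\lbar$ --- producing the coercive bulk term $-L\{h\}\,|\snabla_\lbar\phi|^2=\frac{2M}{r^2}(1-\mu)^{-1}|\snabla_\lbar\phi|^2$ (the positive surface-gravity mechanism you identify), absorbing the cutoff errors via the Morawetz estimate (\ref{eq:fullen}), reading off $\fara^N$ from the boundary terms, and obtaining (\ref{eq:enconse}) by adding a multiple of the energy identity (\ref{eq:encons}), just as you outline. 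One cosmetic correction: the angular Morawetz weight $(r-3M)^2/r^3$ \emph{does} vanish at $r=3M$, but since the smooth cutoff and all its derivatives vanish to infinite order there, the error terms carry enough degeneracy for the absorption to go through.
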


\begin{proof}
Let $2M < r_c < 3M$. Let $h(r)$ be a smooth radial function such that $h(r) = (1-\mu)^{-1}$ for $r \in (2M, r_c)$, $h(r) = 0$ for $r \in [3M, +\infty)$. The following relation can be deduced from the \fackip equation:
\begin{equation}\label{eq:redest}
 \begin{aligned}
  &L\left\{h(r)|\snabla_\lbar \phi|^2 \right\} + \lbar\left\{(1-\mu)h(r)|\snabla \phi|^2 \right\} + \lbar\left\{\frac{(1-\mu)}{r^2}h(r)|\phi|^2 \right\} \\
&- L\left\{h(r)\right\}|\snabla_\lbar \phi|^2 - \lbar\left\{\frac{(1-\mu)}{r^2}h(r)\right\}|\phi|^2 
-r^2 \lbar\left\{\frac{(1-\mu)}{r^2}h(r)\right\}|\snabla \phi|^2 \stackrel{\mathbb{S}^2}{=} 0.
 \end{aligned}
\end{equation}
Here, $\stackrel{\mathbb{S}^2}{=}$ denotes equality after integration on $\mathbb{S}^2$ against the form $\desphere$.
We proceed to integrate (\ref{eq:redest}) against the form $\de u \de v$ in the region $\{u_1 \leq u \leq u_2 \} \cap \{v_1 \leq v \leq v_2\}$. The second claim is achieved absorbing the error terms arising from $h$ with estimate (\ref{eq:fullen}) (note that we only require one derivative on the initial slice).

Analysing the boundary terms arising from integration of (\ref{eq:redest}) and adding a multiple of (\ref{eq:encons}) we obtain the first claim.
\end{proof}

\subsection{The estimates required for the $r^p$ method}

\begin{definition}\label{moraint}
Let $\chi_{\mathcal{H}^+}$ be the smooth cutoff function of Lemma~\ref{lem:redsh}. We define $m[\phi]$ to be the integrand in the left hand side of Equation~(\ref{eq:redsh}):
\begin{equation}\label{def:mphi}
\begin{aligned}
m[\phi] : = &\frac 1 {r^2}|\snabla_L \phi - \snabla_{\lbar} \phi|^2+ \frac{(r-3M)^2}{r^3}\left(|\snabla \phi|^2 + \frac 1 {r} |\snabla_L \phi + \snabla_{\lbar} \phi|^2\right)+\frac 1 {r^3} |\phi|^2\\
+& \chi_{\mathcal{H}^+}(r) |\snabla_{(1-\mu)^{-1}\lbar}\phi|^2.
\end{aligned}
\end{equation}
Furthermore, we let
\begin{equation}\label{eq:malldef}
m_{\text{all}}[\phi] := m[\phi] + m[\snabla_T \phi].
\end{equation}
\end{definition}
\begin{remark}
Let us notice that the \fackip Equation implies that there exists a constant $C > 0$, depending only on $M$, such that
\begin{equation*}
\begin{aligned}
 &\int_{u_1}^{\infty} \int_{v_1}^{\infty}\int_{\mathbb{S}^2}  \mall (1-\mu) \de u \de v  \desphere \\
 &\geq C \int_{u_1}^{\infty} \int_{v_1}^{\infty}\int_{\mathbb{S}^2} \left(\frac 1 {r^2} (|\snabla_L \phi|^2+|\snabla_{(1-\mu)^{-1}\lbar} \phi|^2) + \frac 1 {r^3} |\phi|^2 + \frac 1 r |\snabla \phi|^2 \right)(1-\mu) \de u \de v  \desphere,
\end{aligned}
\end{equation*}
hence $\mall$ controls all non-degenerate derivatives of $\phi$.
\end{remark}

\begin{remark}
From the commutation relations: $[\snabla_T, \snabla_L] = [\snabla_T, r\snabla_\lbar] = [\snabla_T, \snabla_A] = 0$, we obtain that $\snabla_T \phi$ satisfies the \fackip Equation, if $\phi$ does. The reasoning which led to the proof of Lemma~\ref{lem:redsh} then yields
\begin{equation}\label{eq:morader}
\begin{aligned}
 \int_{u_1}^{\infty} \int_{v_1}^{\infty}\int_{\mathbb{S}^2}  \mall (1-\mu) \de u \de v  \desphere
  \leq C(\fara^\infty[\phi] (u_1, v_1) + \fara^\infty[\snabla_T \phi] (u_1, v_1)).
\end{aligned}
\end{equation}
\end{remark}

\begin{lemma}[$p$-hierarchy]
There exists a positive number $R_*$ and a positive constant $C$ such that the following holds. Let $\phi$ be a smooth solution to the \fackip Equation~(\ref{eq:rw1}) on $\{u \geq u_0\} \cap \{v \geq v_0\}$, let $u_2 \geq u_1 \geq u_0$, $v_2 \geq v_1 \geq v_0$. Let $P(u)$ be $(u,2R_* + u).$ Defining fluxes as in (\ref{eq:fdefu}), (\ref{eq:fdefv}), we have that
\begin{equation}\label{pdue}
\begin{aligned}
\int_{\conplus_{u_2}\cap\{r \geq R\}} (r^2 |\snabla_L \phi|^2) \de v \desphere+
\int_{\duu} \left[r |\snabla_L \phi|^2 + |\snabla \phi|^2 + r^{-2}|\phi|^2\right] (1-\mu) \de u  \de v \desphere \\ \leq 
C \left(\fara^\infty[\phi](P(u_1))\right) + \int_{\conplus_{u_1}\cap\{r \geq R\}} (r^2 |\snabla_L \phi|^2) \de v \desphere.
\end{aligned}
\end{equation}
Also,
\begin{equation}\label{puno}
\begin{aligned}
\int_{\conplus_{u_2}\cap\{r \geq R\}}  (r |\snabla_L \phi|^2)\de v \desphere +
\int_{\duu}\left[ |\snabla_L \phi|^2 + |\snabla \phi|^2 + r^{-2}|\phi|^2\right] (1-\mu) \de u  \de v \desphere \\ \leq 
C \left(\fara^\infty[\phi](P(u_1))\right) + \int_{\conplus_{u_1}\cap\{r \geq R\}}  (r |\snabla_L \phi|^2)\de v\desphere.
\end{aligned}
\end{equation}

\end{lemma}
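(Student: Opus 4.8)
The plan is to apply the $r^p$ method of Dafermos--Rodnianski directly to the \fackip equation~(\ref{eq:rw1}) in the far region $\duu = \{r \geq R\} \cap \{u \in [u_1, u_2]\}$, using the weighted multiplier $r^p \snabla_L \phi$ (equivalently, contracting the energy--momentum tensor of~(\ref{eq:rw1}) with $r^p L$) for the two cases $p = 1$ and $p = 2$; here $R$ (equivalently $R_*$) is chosen large, which is where the hypothesis ``there exists $R_*$'' is spent. First I would take the $\gbar$-inner product of~(\ref{eq:rw1}) with $r^p \snabla_L \phi$ and integrate over $\mathbb{S}^2$. For the first term, metric compatibility along $\lbar$ gives $r^p \gbar(\snabla_\lbar \snabla_L \phi, \snabla_L \phi) = \tfrac 1 2 r^p \snabla_\lbar |\snabla_L \phi|^2$, and since $\snabla_\lbar r = -(1-\mu)$ this equals $\tfrac 1 2 \snabla_\lbar\!\big(r^p |\snabla_L\phi|^2\big) + \tfrac{p(1-\mu)}{2}\, r^{p-1}|\snabla_L\phi|^2$. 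Integrating by parts on $\mathbb{S}^2$ in the angular term and commuting $\snabla_L$ past $\snabla$ --- the commutator $[\snabla_L,\snabla]$ together with the non-vanishing of $\snabla_L\gbar$ (reflecting the $r$-dependence of the induced sphere metric) combining with the weight to produce a factor proportional to $(2-p)$ at leading order in $r$ --- and treating the zeroth-order term $V\gbar(\phi,\snabla_L\phi) = \tfrac 1 2 V \snabla_L |\phi|^2$ likewise, I would obtain a spacetime divergence identity whose bulk, for $r \geq R$ with $R$ large, is bounded below by $c\big(p\,r^{p-1}|\snabla_L\phi|^2 + (2-p)(1-\mu)r^{p-1}|\snabla\phi|^2 + (2-p)(1-\mu)r^{p-3}|\phi|^2\big)$ for some $c > 0$, together with $\snabla_\lbar$- and $\snabla_L$-flux divergences.

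Next I would integrate this identity over $\duu$ against $\de u\,\de v\,\desphere$. The $\snabla_\lbar$-divergence of $r^p|\snabla_L\phi|^2$ yields the two outgoing-cone fluxes: $\int_{\conplus_{u_2}\cap\{r\geq R\}} r^p|\snabla_L\phi|^2\,\de v\,\desphere$, kept on the left, and $\int_{\conplus_{u_1}\cap\{r\geq R\}} r^p|\snabla_L\phi|^2\,\de v\,\desphere$, the data on the right, plus a contribution on the timelike boundary $\{r = R\} = \{v - u = 2R_*\}$. The $\snabla_L$-divergences of the angular and zeroth-order densities yield a nonnegative flux at future null infinity $\{v \to \infty\}$, which is discarded (rigorously, by truncating at finite $v_2$, keeping the good-signed $\{v = v_2\}$ flux on the left, and letting $v_2 \to \infty$), and a further contribution on $\{r = R\}$.

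The crucial step, and the main obstacle, is to absorb the boundary terms generated on $\{r = R\}$. There the weight $r^p = R^p$ is bounded, so these terms are controlled by the non-degenerate energy flux of $\phi$ through the curve $\{r = R\}$ for $u \in [u_1, u_2]$. This flux is in turn bounded by $\fara^\infty[\phi](P(u_1))$: the corner $P(u_1) = (u_1, u_1 + 2R_*)$ lies exactly on $\{r = R\}$, and by the energy boundedness~(\ref{eq:enconse}) together with the Morawetz and redshift estimates of Lemmas~\ref{lem:moraw} and~\ref{lem:redsh}, which control the full bulk and all fluxes in the bounded-$r$ region issuing from $P(u_1)$, one bounds the $\{r=R\}$ flux uniformly in $u_2$ by $C\fara^\infty[\phi](P(u_1))$. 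I would also need to verify that the commutator and $\snabla_L \gbar$ error terms produced in the angular manipulation are either favorably signed for $r \geq R$ or absorbable into this $\{r=R\}$ flux and the Morawetz spacetime integral of~(\ref{eq:fullen}).

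Finally I would assemble the two estimates. For $p = 1$ all three bulk coefficients are strictly positive once $R$ is large (the computations $-\tfrac 1 2 \snabla_L((1-\mu)r^{-1}) \gtrsim r^{-2}(1-\mu)$ and the analogous angular bound hold for $r \geq R$ with $R > 4M$), producing directly the full bulk $[\,|\snabla_L\phi|^2 + |\snabla\phi|^2 + r^{-2}|\phi|^2\,](1-\mu)$ and the $r$-weighted cone fluxes of~(\ref{puno}). For $p = 2$ the identity yields the principal gain $r\,|\snabla_L\phi|^2$ in the bulk and the $r^2$-weighted cone fluxes; however the angular bulk coefficient $(2-p)$ vanishes and the $|\phi|^2$ bulk coefficient becomes unfavorable (of size $\sim -M(1-\mu)r^{-2}$). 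I would absorb this negative term and adjoin the missing $[\,|\snabla\phi|^2 + r^{-2}|\phi|^2\,]$ bulk by invoking the already-proved~(\ref{puno}), using the elementary domination $\int_{\conplus_{u_1}\cap\{r\geq R\}} r|\snabla_L\phi|^2 \leq R^{-1}\!\int_{\conplus_{u_1}\cap\{r\geq R\}} r^2|\snabla_L\phi|^2$ of the $p=1$ data by the $p=2$ data on $\{r \geq R\}$. This yields~(\ref{pdue}) and completes the hierarchy.
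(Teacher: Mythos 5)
Your proposal is correct in substance and follows the same $r^p$-method skeleton as the paper (multiplier $r^p\snabla_L\phi$, integration over $\mathfrak{D}_{u_1}^{u_2}$, boundary terms near $r=R$ controlled via the Morawetz estimate), but it resolves the borderline case $p=2$ by a genuinely different mechanism. The paper does not use the plain weight $r^p$: its identity (\ref{eq2.15}) carries the modified weight $r^p(1-\mu)^{-k}$, and the choice $k=2$ for $p=2$ (and $k=1$ for $p=1$) is the whole point --- differentiating the $(1-\mu)^{-k}$ factor produces terms proportional to $2M(k-1)$ which render \emph{both} the angular and the zeroth-order bulk coefficients strictly positive even at $p=2$ (indeed $-L\{Vr^2(1-\mu)^{-2}\} = +2Mr^{-2}(1-\mu)^{-1}$ and the angular coefficient becomes $+2M(1-\mu)^{-1}$), so (\ref{pdue}) comes out of a single identity with no recourse to the $p=1$ case. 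Your route instead accepts the bad signs at $p=2$ and repairs them by adding a large multiple of the already-proved (\ref{puno}), using $\int_{\conplus_{u_1}\cap\{r\geq R\}} r|\snabla_L\phi|^2 \leq R^{-1}\int_{\conplus_{u_1}\cap\{r\geq R\}} r^2|\snabla_L\phi|^2$ to avoid enlarging the data norm; this is legitimate and yields the same statement, at the cost of a bootstrap between the two rungs of the hierarchy, where the paper's $k$-weights buy a self-contained monotone identity for each $p$.

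Two technical points you should fix when writing this out. First, with the pure weight $r^2$ the angular bulk coefficient does not merely ``vanish at leading order'': it equals $-2M(1-\mu)$, a nondegenerate negative term of size $M$ uniformly in $r$; your absorption by $C'\times$(\ref{puno}) still handles it, since the $p=1$ bulk contains $(1-\mu)|\snabla\phi|^2$ with a uniform constant, but the absorption constant must be taken $\gtrsim M$, so state it as an absorption rather than an adjoining of a ``missing'' term. Second, a flux through the fixed timelike hypersurface $\{r=R\}$ is a codimension-one quantity and is \emph{not} directly controlled by the Morawetz spacetime integral (\ref{eq:fullen}) together with (\ref{eq:enconse}); you need either an averaging of $R$ over a strip to select a good slice, or --- as the paper does via the cutoff $f$ supported in $[R-M,R]$ in (\ref{eq:pweiavg}) --- a smooth transition that converts all boundary contributions into spacetime errors over the strip, which are then bounded by (\ref{eq:redsh}) and (\ref{eq:enconse}). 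As stated, your bound on the $\{r=R\}$ flux by $C\fara^\infty[\phi](P(u_1))$ skips this step; with the cutoff inserted the argument closes exactly as you describe.
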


\begin{proof}
Let $p, k \in \R$. Let us consider the identity, which follows from the \fackip Equation~(\ref{eq:rw1}):
\begin{equation}\label{eq2.15}
 \begin{aligned}
  \lbar\left\{\frac{r^p}{(1-\mu)^k}|\snabla_L \phi|^2 \right\} + L\left\{\frac{r^p}{(1-\mu)^{k-1}}|\snabla \phi|^2 \right\} + L\left\{V\frac{r^p}{(1-\mu)^k}|\phi|^2 \right\} \\
- \lbar\left\{ \frac{r^p}{(1-\mu)^k}\right\}|\snabla_L \phi|^2 - L\left\{V\frac{r^p}{(1-\mu)^k}\right\}|\phi|^2 \\
+ \left[(2-p)r^{p-1}(1-\mu)^{2-k}+ r^p(k-1)(1-\mu)^{-k+1}\frac{2M}{r^2} \right]|\snabla \phi|^2 \stackrel{\mathbb{S}^2}{=} 0.
 \end{aligned}
\end{equation}
Here, $\spheq$ denotes equality after integration on $\mathbb{S}^2$ with respect to the volume form $\desphere$.
We fix $R_*>0$ big enough so that the following holds in the region $r_* \geq R_*$:
\begin{equation}
- \lbar \left(\frac{r^p}{(1-\mu)^k}\right) \geq \frac 1 2 r^{p-1},
\end{equation}
when $p$ is either $1$ or $2$ and $k$ is either $1$ or $2$.

Let us furthermore calculate
\begin{equation}
 - L \left(\frac{Vr^p}{(1-\mu)^{k}}\right) = (1-\mu)^{2-k} (2-p) r^{p-3} - 2Mr^{p-4}(1-k)(1-\mu)^{-k+1}.
\end{equation}
Recall the definition of $\mathfrak{D}$:
\begin{equation*}
 \mathfrak{D}_{u_1}^{u_2} := \left\{r \geq R, u \in [u_1, u_2] \right\}.
\end{equation*}
By possibly choosing $R$ bigger,
we assume that $R-M > 3M$. Let $f$ be a smooth radial function such that
\begin{equation*}
f(r) := \left\{
\begin{array}{ll}
1 & \text{for } r \geq R,\\
0 & \text{for } r \leq R-M.
\end{array}
\right.
\end{equation*}
We then integrate the expression
\begin{equation}\label{eq:pweiavg}
 \begin{aligned}
  &\lbar\left\{\frac{r^p}{(1-\mu)^k}f|\snabla_L \phi|^2 \right\} + L\left\{\frac{r^p}{(1-\mu)^{k-1}}f|\snabla \phi|^2 \right\} \\+
  &L\left\{V\frac{r^p}{(1-\mu)^k}f|\phi|^2 \right\} 
- \lbar\left\{ \frac{r^p}{(1-\mu)^k}f\right\}|\snabla_L \phi|^2 - L\left\{V\frac{r^p}{(1-\mu)^k}f\right\}|\phi|^2
 \end{aligned}
\end{equation}
on the region $\future(S_{u_1, v_1})$, with $v_1 = 2R_* + u_1$. %\todo{details}

We choose $p = 2$, $k = 2$, and we use the Morawetz estimate (\ref{eq:fullen}) in order to bound the spacetime error term in the strip $r \in [R-M, R]$. We obtain
\begin{equation}
\begin{aligned}
\int_{\conplus_{u_2}\cap\{r \geq R\}} (r^2 |\snabla_L \phi|^2) \de v \desphere+
\int_{\duu} \left[r |\snabla_L \phi|^2 + |\snabla \phi|^2 + r^{-2}|\phi|^2\right] (1-\mu) \de u  \de v \desphere \\ \leq 
C \fara^\infty[\phi](P(u_1))+ \int_{\conplus_{u_1}\cap\{r \geq R\}} (r^2 |\snabla_L \phi|^2) \de v \desphere.
\end{aligned}
\end{equation}
Choosing $p = 1$, $k = 1$, we obtain, instead
\begin{equation}
\begin{aligned}
\int_{\conplus_{u_2}\cap\{r \geq R\}}  (r |\snabla_L \phi|^2)\de v \desphere +
\int_{\duu}\left[ |\snabla_L \phi|^2 + |\snabla \phi|^2 + r^{-2}|\phi|^2\right] (1-\mu) \de u  \de v \desphere \\ \leq 
C \fara^\infty[\phi](P(u_1)) + \int_{\conplus_{u_1}\cap\{r \geq R\}}  (r |\snabla_L \phi|^2)\de v\desphere.
\end{aligned}
\end{equation}
This proves the Lemma.
\end{proof}

\subsection{Application of the $r^p$ method: decay of null fluxes}
We now apply the $r^p$-method of Dafermos and Rodnianski to prove integrated decay for $\phi$. We prove the following Lemma.
\begin{lemma}\label{lem:fluxdec}
There exists a positive number $R_*$ and a positive constant $C$ such that the following holds. Let $\phi$ be a smooth solution to the \fackip Equation~(\ref{eq:rw1}) on $\{u \geq u_0\} \cap \{v \geq v_0\}$, let $u \geq u_0$. Let $P(u)$ be $(u,2R_* + u).$ We have the decay of the flux:
\begin{equation}\label{eq:decaydue}
\begin{aligned}
& \fara^\infty[\phi](P(u)) \leq C u^{-2} \snorm{\phi}{2}{2}{0}.
\end{aligned}
\end{equation}
Here, we used the definition of norm in (\ref{eq:defsobo}).
\end{lemma}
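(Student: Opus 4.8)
The plan is to run the Dafermos--Rodnianski $r^p$-method as a two-tier pigeonhole argument, feeding the $p$-hierarchy (\ref{pdue})--(\ref{puno}) into the energy estimate (\ref{eq:enconse}) and the $T$-commuted Morawetz bound (\ref{eq:morader}). Throughout I abbreviate $E(u) := \fara^\infty[\phi](P(u))$ and $I_p(u) := \int_{\conplus_u \cap \{r \geq R\}} r^p |\snabla_L \phi|^2 \de v \desphere$, writing $E_T, E_{TT}, I_p^T$ for the same objects formed from $\snabla_T \phi$ and $(\snabla_T)^2 \phi$. By the commutation relations all of these solve the \fackip equation, so every estimate of Section~\ref{sec:rpmethod} applies to them verbatim. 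Note also that $I_1 \leq R^{-1} I_2$ since $r \geq R$ on $\conplus_u$, so that $E(u_0), E_T(u_0), E_{TT}(u_0), I_2(u_0), I_2^T(u_0)$ are all dominated by $\snorm{\phi}{2}{2}{0}$.

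First I would record two ``workhorse'' integrated estimates, obtained by viewing $\int_{u_1}^{u_2} E(u)\, \de u$ as a spacetime integral split at $r = R$. In $\{r \geq R\}$ the flux is non-degenerate and is controlled directly by the $p = 1$ estimate (\ref{puno}); in $\{r \leq R\}$ the photon-sphere degeneracy at $r = 3M$ forces the use of the $T$-commuted bound (\ref{eq:morader}), whose right-hand side is $\fara^\infty[\phi] + \fara^\infty[\snabla_T \phi]$. Combining these, and adjoining the top-tier estimate (\ref{pdue}), I obtain
\begin{equation*}
\int_{u_1}^{u_2} E(u)\, \de u \leq C\big(E(u_1) + E_T(u_1) + I_1(u_1)\big), \qquad \int_{u_1}^{u_2} I_1(u)\, \de u + I_2(u_2) \leq C\big(E(u_1) + I_2(u_1)\big),
\end{equation*}
and the identical pair for $\snabla_T \phi$, with $E_{TT}$ replacing $E_T$ in the first inequality.

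Next I would extract the first-tier $u^{-1}$ decay. By (\ref{eq:enconse}) the fluxes $E, E_T, E_{TT}$ are almost monotone, hence bounded by their data, and $I_2$ is bounded by the second workhorse; thus both $\int_{u_0}^\infty E\, \de u$ and $\int_{u_0}^\infty E_T\, \de u$ are finite and controlled by $\snorm{\phi}{2}{2}{0}$. A dyadic pigeonhole on intervals $[\tilde u_n, 2 \tilde u_n]$ produces good points where $E$ (resp.\ $E_T$, resp.\ $I_1$, using the almost-monotonicity $I_1(u_2) \lesssim I_1(u_1) + E(u_1)$ from (\ref{puno})) is $\lesssim u^{-1}$; propagating forward by almost-monotonicity upgrades this to the bound $E(u) + E_T(u) + I_1(u) \leq C u^{-1}\snorm{\phi}{2}{2}{0}$ on the whole range. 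Finally I would insert these first-tier bounds into the first workhorse to get $\int_{u_1}^{u_2} E(u)\, \de u \leq C u_1^{-1}\snorm{\phi}{2}{2}{0}$; a pigeonhole on $[u_1, 2u_1]$ yields a point $u^*$ with $E(u^*) \leq C u_1^{-2}\snorm{\phi}{2}{2}{0}$, and (\ref{eq:enconse}) propagates this forward to all $u \geq u^*$, giving (\ref{eq:decaydue}).

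The main obstacle is the photon-sphere trapping: the Morawetz estimate (\ref{eq:fullen}) degenerates at $r = 3M$, so the near-region portion of $\int E\, \de u$ closes only after commuting with $T$ and paying one derivative --- this is exactly why the driving norm is the second-order quantity $\snorm{\phi}{2}{2}{0}$ and why the hierarchy must be run simultaneously for $\phi$, $\snabla_T\phi$ and $(\snabla_T)^2\phi$. I expect the cascade to terminate cleanly because only \emph{boundedness} of the top level $E_{TT}$ is needed, not its decay. A secondary point of care is the coarea identification of $\int E(u)\, \de u$ with the spacetime integrals appearing in (\ref{puno}) and (\ref{eq:morader}), together with the bookkeeping needed to keep the almost-monotonicity constants uniform across the dyadic scales.
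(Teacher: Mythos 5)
Your proposal is correct and is essentially the paper's own argument: the same $r^p$ pigeonhole run twice through the $p=2$ estimate (\ref{pdue}) and the $p=1$ estimate (\ref{puno}), with the $\snabla_T$-commuted Morawetz bound (\ref{eq:morader}) handling the $r=3M$ degeneracy in the bounded-$r$ region and the almost-monotonicity (\ref{eq:enconse}) removing the dyadic restriction. The paper merely organizes the two tiers through explicit sequences $u_n$, $\tilde u_n$, $\tilde{\tilde u}_n$ (estimates (\ref{eq:decayun})--(\ref{gro2})) rather than your abstracted ``workhorse'' inequalities, and your accounting of which commuted quantities need decay versus mere boundedness matches the structure of the norm $\snorm{\phi}{2}{2}{0}$ exactly.
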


\begin{proof}
We define the sequence $u_n := 2^n (|u_0|+M)$. Inequality (\ref{pdue}) now yields:
\begin{equation}
\int_{\conplus_{u_n}\cap\{r \geq R\}}  r^2 |\snabla_L \phi|^2 \de v \desphere \leq C \left(\fara^\infty[\phi](P(u_0))\right) + \int_{\conplus_{u_0}\cap\{r \geq R\}} (r^2 |\snabla_L \phi|^2)\de v \desphere.
\end{equation}
Also, 
\begin{equation}
\begin{aligned}
(u_{n+1}-u_n)  \int_{\conplus_{\tilde u_n} \cap \{r \geq R\}} r |\snabla_L \phi|^2 \de v \desphere \leq \int_{\mathfrak{D}^{u_{n+1}}_{u_n}} r |\snabla_L \phi|^2 (1-\mu) \de u \de v
 \desphere \\ \leq
C \left(\fara^\infty[\phi](P(u_0))\right) + \int_{\conplus_{u_n} \cap \{r \geq R\}}(r^2 |\snabla_L \phi|^2) \de v \desphere.
\end{aligned}
\end{equation}
This holds for some $\tilde{u}_n \in (u_n, u_{n+1})$. All in all, we have, for this new sequence $\tilde{u}_n$,
\begin{equation}\label{eq:decayun}
\begin{aligned}
&\int_{\conplus_{\tilde{u}_n} \cap \{r \geq R\}}  r |\snabla_L \phi|^2 \de v \desphere \\
&\leq
C \tilde{u}_n^{-1}\fara^\infty[\phi](P(u_0)) + 
\tilde{u}_n^{-1}\int_{\conplus_{u_0}\cap \{r \geq R\}}  r^2 |\snabla_L \phi|^2 \de v \desphere.
\end{aligned}
\end{equation}
Recall the definition of the spacetime regions: 
\begin{equation*}
\begin{aligned}
\mathfrak{E}_{v_1}^{v_2} &:= \left\{r \leq R, v \in [v_1, v_2] \right\},\\
 \mathfrak{F}_{u_1}^{u_2} &:= \mathfrak{D}_{u_1}^{u_2} \cup \mathfrak{E}_{v_1}^{v_2}, \mbox{ such that } v_1-u_1 = 2 R_* \mbox{ and } v_2-u_2 = 2 R_*.
\end{aligned}
\end{equation*}
Recall furthermore the definition of $m[\phi]$, as in (\ref{def:mphi}).
We employ Equation~(\ref{puno}) and the Morawetz estimate (recall: Equation~(\ref{eq:redsh}) and Definition \ref{moraint}), to obtain the following:
\begin{equation}\label{gro1}
\begin{aligned}
 \int_{\mathfrak{E}_{v_1}^{v_2}} m[\phi](1-\mu)\de u \de v \desphere +  \int_{\conplus_{u_2}\cap\{r\geq R \}} r |\snabla_L \phi|^2 \de v \desphere +\\ 
\int_{\duu} \left[ |\snabla_L \phi|^2 + |\snabla \phi|^2 + r^{-2}|\phi|^2\right](1-\mu) \de u  \de v \desphere \\ \leq 
C \fara^\infty[\phi](P(u_1)) + \int_{\conplus_{u_1}\cap \{r \geq R\}}  r |\snabla_L \phi|^2 \de v \desphere.
\end{aligned}
\end{equation}
Recall now that $m_{\text{all}}[\phi] := m[\phi] + m[\snabla_T \phi]$.
Plugging now the previous sequence $\tilde{u}_n$ into the formula (\ref{gro1}), summing it with estimate (\ref{eq:morader}) and using Fubini's Theorem, we obtain a second sequence $\tilde{\tilde u}_n$ (with $\tilde{\tilde{u}}_n+\tilde{\tilde{v}}_n = 2 R^*$) such that
\begin{equation}\label{gro2}
\begin{aligned}
  \ttildeu \int_{\mathbb{S}^2} \int_{\conminus_{\ttildev} \cap \{r \leq R\}} \mall \de u \desphere 
  \\
  + \ttildeu \int_{\mathbb{S}^2} \int_{\conplus_{\ttildeu} \cap \{r \geq R\}}
\left[ |\snabla_L \phi|^2 + |\snabla \phi|^2 + r^{-2}|\phi|^2\right] \de v \desphere \\ \leq 
 C \int_{\mathfrak{E}_{\tilde{v}_1}^{\tilde{v}_2}} \mall \de u \de v \desphere+ C \int_{\mathfrak{D}_{\tilde{u}_1}^{\tilde{u}_2}} \left[ |\snabla_L \phi|^2 + |\snabla \phi|^2 + r^{-2}|\phi|^2\right](1-\mu) \de u  \de v \desphere \\ \leq 
C \left(\fara^\infty[\phi](P(\tilde{u}_n))+ \fara^\infty[\snabla_T \phi](P(\tilde{u}_n))\right) + \int_{\conplus_{\tilde u_n} \cap \{r \geq R\}} r |\snabla_L \phi|^2 \de v \desphere  \\  \leq 
 C \left(\fara^\infty[\phi](P(\tilde{u}_n))+ \fara^\infty[\snabla_T \phi](P(\tilde{u}_n))\right)
+ C \tilde{u}_n^{-1}\fara^\infty[\phi](P(u_0)) \\ + \tilde{u}_n^{-1}\int_{\conplus_{u_0}} r^2 |\snabla_L \phi|^2 \de v \desphere .
\end{aligned}
\end{equation}
We commute the \fackip Equation~(\ref{eq:rw1}) with $\snabla_T$, and we see that 
$$
\fara^\infty[\snabla_T \phi](P(u))
$$
decays, by the same reasoning for the decay of $\fara^\infty[\phi](P(u))$. We use the monotonicity of energy given in (\ref{eq:enconse}) to eliminate the restriction to the dyadic sequence. We finally obtain the bound
\begin{equation*}
\begin{aligned}
 \fara^\infty[\phi](P(u)) \leq
C u^{-2} \sum_{i = 0}^2 \left( \fara^\infty [(\snabla_T)^i \phi](P(u_0))+\int_{\conplus_{u_0} \cap \{r \geq R\}}r^2 |\snabla_L (\snabla_T)^{\min\{i,1\}} \phi|^2\de v \desphere\right).
\end{aligned}
\end{equation*}
This is the claim.
\end{proof}
\begin{cor} Assume the hypotheses of Lemma~\ref{lem:fluxdec}. Consider the Lie derivative $\slie$ induced from the connection $\snabla$. Recall the definition of the set $\leo := \{\Omega_1, \Omega_2, \Omega_3\}$. The following estimate holds, for $j \in \{1,2,3\}$:
\begin{equation}\label{eq:decaylie}
\begin{aligned}
 \fara^\infty [\slie_{\Omega_j} \phi](P(u))\leq
C u^{-2} \snorm{\phi}{2}{2}{1}.
\end{aligned}
\end{equation}
Here, we used the definition of norm in (\ref{eq:defsobo}).
\end{cor}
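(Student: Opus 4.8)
The plan is to deduce the estimate directly from Lemma~\ref{lem:fluxdec} by observing that each angular Killing field $\Omega_j$ commutes with the \fackip operator, so that $\slie_{\Omega_j}\phi$ is itself a solution of the \fackip Equation~(\ref{eq:rw1}). Write
\[
\mathcal{P}\phi := \snabla_\lbar \snabla_L \phi - (1-\mu)\sdelta\phi + \frac{1-\mu}{r^2}\phi
\]
for the \fackip operator. The corollary will follow once we show $\mathcal{P}(\slie_{\Omega_j}\phi) = \slie_{\Omega_j}(\mathcal{P}\phi) = 0$ and then feed $\slie_{\Omega_j}\phi$ into Lemma~\ref{lem:fluxdec} in place of $\phi$.

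First I would record the relevant commutation properties of the rotation fields. Since each $\Omega_j$ is a Killing field of $g_e$ that is tangent to the spheres $S_{t,r}$ and commutes with the coordinate fields $\partial_t$ and $\partial_{r_*}$, we have $[\Omega_j, L] = [\Omega_j, \lbar] = 0$ and $\Omega_j r = 0$. Consequently $\slie_{\Omega_j}$ commutes with multiplication by any radial function (in particular by $1-\mu$ and by $V = (1-\mu)/r^2$), and, because $\Omega_j$ is Killing for the round metric on each sphere, $\slie_{\Omega_j}$ commutes with the induced angular Laplacian $\sdelta$ acting on sections of $\mathcal{B}^*$. The remaining point is that $\slie_{\Omega_j}$ commutes with the null derivatives $\snabla_L$ and $\snabla_\lbar$; this is again a consequence of $\Omega_j$ being Killing together with $[\Omega_j, L] = [\Omega_j, \lbar] = 0$, exactly as in the $\snabla_T$ commutation used in the remark preceding Lemma~\ref{lem:fluxdec}. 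Combining these facts gives $[\slie_{\Omega_j}, \mathcal{P}] = 0$, hence $\slie_{\Omega_j}\phi$ solves~(\ref{eq:rw1}).

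With this in hand I would apply Lemma~\ref{lem:fluxdec} verbatim to the solution $\slie_{\Omega_j}\phi$, obtaining
\[
\fara^\infty[\slie_{\Omega_j}\phi](P(u)) \leq C u^{-2}\,\snorm{\slie_{\Omega_j}\phi}{2}{2}{0}.
\]
It then remains to absorb the right-hand side into $\snorm{\phi}{2}{2}{1}$, which is immediate from the definition~(\ref{eq:defsobo}): the norm $\snorm{\slie_{\Omega_j}\phi}{2}{2}{0}$ is the sum over $0 \leq i \leq 2$ of the fluxes of $(\snabla_T)^i \slie_{\Omega_j}\phi$ together with the corresponding $r^2$-weighted cone integrals, and every one of these terms appears among the summands of $\snorm{\phi}{2}{2}{1}$ associated to the length-one multi-index $J = (\Omega_j)$, since $\slie^{(\Omega_j)}\phi = \slie_{\Omega_j}\phi$ and the differential structure $(\snabla_T)^i \slie^J$ matches term by term. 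Hence $\snorm{\slie_{\Omega_j}\phi}{2}{2}{0} \leq \snorm{\phi}{2}{2}{1}$, which yields~(\ref{eq:decaylie}).

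The main obstacle is the commutation $[\slie_{\Omega_j}, \sdelta] = 0$ for the connection-induced Lie derivative $\slie$ acting on the $\mathcal{B}^*$-valued field $\phi$: one must check that the projected Lie derivative along an angular Killing field genuinely commutes with the \emph{induced covariant} Laplacian, and not merely with the scalar spherical Laplacian. This holds because $\Omega_j$ preserves both the round metric $\gbar$ and the induced connection $\snabla$ on the spheres, but verifying it cleanly in the present framework---where $\slie$ is defined via $\snabla$ and $\phi$ is $S$-tangent---is the one computation that requires genuine care. Everything else reduces to the already-established Lemma~\ref{lem:fluxdec} and to bookkeeping with the norm~(\ref{eq:defsobo}).
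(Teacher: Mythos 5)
Your proposal is correct and follows essentially the same route as the paper: one verifies that $\slie_{\Omega_j}$ commutes with $\snabla_L$, $\snabla_\lbar$, and (via the Killing property of the $\Omega_j$ on the spheres, for which the paper cites Equation 3.25 of \cite{globalnon}) with the induced connection $\snabla$, concludes that $\slie_{\Omega_j}\phi$ solves the \fackip equation, and then reruns Lemma~\ref{lem:fluxdec}. Your additional bookkeeping observation that $\snorm{\slie_{\Omega_j}\phi}{2}{2}{0} \leq \snorm{\phi}{2}{2}{1}$, since the multi-index $J = (\Omega_j)$ appears among the summands of the $s=1$ norm in (\ref{eq:defsobo}), is exactly the step the paper leaves implicit in ``the proof then proceeds as in Lemma~\ref{lem:fluxdec}.''
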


\begin{proof}
Let $\eta \in \Gamma(\mathcal{B}^*)$. An easy calculation implies that we have the commutation relations:
\begin{equation*}
[\slie_{\Omega_i}, \snabla_L]\eta = 0, \qquad [\slie_{\Omega_i}, \snabla_\lbar]\eta = 0.
\end{equation*}
Furthermore, since the $\Omega_i$'s are Killing vectors for the induced metric $\gbar$ on the spheres, by Equation 3.25 in \cite{globalnon}, we have that
\begin{equation*}
[\slie_{\Omega_i}, \snabla ] = 0.
\end{equation*}
We  therefore obtain that $\slie_{\Omega_i} \phi$ also satisfies the \fackip equation (\ref{eq:rw1}). The proof then proceeds as in Lemma~\ref{lem:fluxdec}.
\end{proof}

\subsection{The $r^p$-method revisited}
We will now derive decay for a null flux, crucial for the decay of the extreme components.
\begin{lemma}\label{lem:fluxa}
There exists a positive number $R_*$ and a positive constant $C$ such that the following holds. Let $\phi$ be a smooth solution to the \fackip Equation~(\ref{eq:rw1}) on $\{u \geq u_0\} \cap \{v \geq v_0\}$. We define the following flux:
\begin{equation}\label{eq:fluxincdef}
\fara^\smallsetminus_p[\eta](\tilde u, \tilde v) := \int_{\conminus_{ \tilde v} \cap \{r \geq R\}\cap \{u \geq \tilde u\}} |\eta|^2( u, \tilde v, \omega) (r(u,  \tilde v))^p \de u \desphere.
\end{equation}
Let $u \geq u_0$, $v \geq v_0$. We have the decay estimates:
\begin{align}\label{eq:fluxa1}
&\fluxob_0[\phi](u_0, v) \leq \  \ \, \ \ \, \ \, C \snorm{\phi}{2}{0}{0}^2,   \qquad \fluxob_2[\snabla \phi](u_0, v) \leq C \snorm{\phi}{2}{0}{0}^2,
\\\label{eq:fluxa2}
&\fluxob_{-1}[\phi](v/2, v) \leq C v^{-1} \snorm{\phi}{2}{1}{0}^2,  \qquad \fluxob_{1}[\snabla \phi](v/2, v) \leq C  v^{-1} \snorm{\phi}{2}{1}{0}^2,
\\\label{eq:fluxa3}
&\fluxob_{-2}[\phi](v/2, v) \leq C v^{-2} \snorm{\phi}{2}{2}{0}^2,  \qquad \fluxob_{0}[\snabla \phi](v/2, v) \leq C  v^{-2} \snorm{\phi}{2}{2}{0}^2.
\end{align}
\end{lemma}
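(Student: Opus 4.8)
The plan is to run the $r^p$-weighted energy identity (\ref{eq2.15}) once more, but this time to integrate it over a triangular region in $\{r \geq R\}$ whose \emph{outgoing} boundary is the cone $\conplus_{\tilde u}$ — where the weighted flux of $\snabla_L \phi$ has already been estimated — and whose \emph{ingoing} boundary is the cone $\conminus_{\tilde v}$, on which the flux $\fluxob_p$ is to be read off. Concretely, I would fix the radial cutoff $f$ of the $p$-hierarchy lemma (equal to $1$ for $r \geq R$, vanishing for $r \leq R-M$) and integrate the weighted current (\ref{eq:pweiavg}) over $\{u \geq \tilde u,\ v \leq \tilde v\}$, taking $(p,k)=(2,2)$ for the first line, $(1,1)$ for the second, and $(0,1)$ for the third. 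Since $\lbar \sim \partial_u$ and $L \sim \partial_v$, the divergence theorem sends the $\lbar$-total-derivative term of (\ref{eq2.15}) to the flux $\int_{\conplus_{\tilde u}\cap\{r\geq R\}} r^p |\snabla_L \phi|^2\,\de v\,\desphere$, while the two $L$-total-derivative terms deposit on $\conminus_{\tilde v}\cap\{r\geq R\}\cap\{u\geq \tilde u\}$ precisely the fluxes $\fluxob_p[\snabla\phi]$ and (using $V=(1-\mu)/r^2$, so that $Vr^p \sim r^{p-2}$) $\fluxob_{p-2}[\phi]$, up to factors of $(1-\mu)$ that are bounded above and below on $\{r\geq R\}$.

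The first thing I would verify is that every genuine bulk term carries a favourable sign throughout $\{r \geq R\}$. Using the two derivative computations already recorded in the proof of the $p$-hierarchy, one checks that $-\lbar\!\left(r^p(1-\mu)^{-k}\right)>0$, that $-L\!\left(Vr^p(1-\mu)^{-k}\right)>0$, and that the coefficient multiplying $|\snabla\phi|^2$ is non-negative, for each of $(p,k)\in\{(2,2),(1,1),(0,1)\}$; for $p=0$ the first two are only of size $r^{-2}$ and $r^{-3}$, but mere positivity is all that is required here. The sole indefinite contribution comes from the derivatives of $f$, supported in the strip $r\in[R-M,R]$, which is absorbed by the Morawetz/redshift estimates (\ref{eq:fullen}), (\ref{eq:redsh}). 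Discarding the positive bulk on the left then yields the single master inequality
\[
\fluxob_p[\snabla\phi](\tilde u,\tilde v)+\fluxob_{p-2}[\phi](\tilde u,\tilde v)\leq C\left(\int_{\conplus_{\tilde u}\cap\{r\geq R\}} r^p|\snabla_L\phi|^2\,\de v\,\desphere+\fara^\infty[\phi](P(\tilde u))\right),
\]
valid for all three weights.

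It remains to feed the correct decay into this inequality. For the first line I take $\tilde u=u_0$, $p=2$: the two terms on the right are, by inspection, exactly the two summands of $\snorm{\phi}{2}{0}{0}$ in (\ref{eq:defsobo}), which gives (\ref{eq:fluxa1}) with no decay. For the other two lines I take $\tilde u=v/2$; the purpose of this choice is that $u\sim v$ on the relevant outgoing cone, so that $u^{-1}$-decay of the outgoing flux is converted into $v^{-1}$-decay of the ingoing flux. For the third line ($p=0$) the entire right-hand side is controlled by $\fara^\infty[\phi](P(v/2))$, because the \emph{unweighted} outgoing flux $\int_{\conplus_{v/2}}|\snabla_L\phi|^2$ is itself a summand of $\fara^\infty$; Lemma~\ref{lem:fluxdec} then gives $\fara^\infty[\phi](P(v/2))\leq C(v/2)^{-2}\snorm{\phi}{2}{2}{0}$, which is (\ref{eq:fluxa3}). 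For the second line ($p=1$) I would combine the $p$-hierarchy decay (\ref{eq:decayun}) of the $r$-weighted outgoing flux with the first-order energy decay $\fara^\infty[\phi](P(u))\leq Cu^{-1}\snorm{\phi}{2}{1}{0}$ — the $x=1$ truncation of the argument proving Lemma~\ref{lem:fluxdec} — to obtain (\ref{eq:fluxa2}).

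The main technical obstacle is that (\ref{eq:decayun}) and the decay of $\fara^\infty$ are established along a \emph{dyadic} sequence of cones, whereas the master inequality is needed at the single value $\tilde u=v/2$ for every $v$. To bridge this gap I would exploit the almost-monotonicity supplied by the $p$-hierarchy (\ref{puno}): choosing the largest dyadic $\tilde u_n\leq v/2$ (so that $\tilde u_n\sim v$), it bounds the $r$-weighted flux on $\conplus_{v/2}$ by the same flux on $\conplus_{\tilde u_n}$ plus $\fara^\infty[\phi](P(\tilde u_n))$, and the energy monotonicity (\ref{eq:enconse}) controls the latter. This transfer costs exactly one $\snabla_T$-commutation, which is precisely what accounts for the increase in the derivative count from $\snorm{\cdot}{2}{0}{0}$ to $\snorm{\cdot}{2}{1}{0}$ to $\snorm{\cdot}{2}{2}{0}$ across the three lines. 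A secondary, routine point is keeping track of the $(1-\mu)$ factors relating the geometric flux (\ref{eq:fluxincdef}) to the $L$-derivative terms of (\ref{eq2.15}); since these are uniformly comparable to constants on $\{r\geq R\}$, they do not affect the stated estimates.
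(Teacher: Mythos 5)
Your proposal is, in structure, the paper's own proof: the paper likewise integrates the cutoff current (\ref{eq:pweiavg}) over the region $\mathfrak{W}(u_1,R,R,v_1)=\{u\geq u_1\}\cap\{r\geq R\}\cap\{2R_*+u_1\leq v\leq v_1\}$, reads the ingoing fluxes off the two $L$-total-derivative terms of (\ref{eq2.15}) (this is exactly its inequality (\ref{eq:pbis})), absorbs the cutoff errors in the strip $r\in[R-M,R]$ via (\ref{eq:redsh}) and (\ref{eq:enconse}), obtains (\ref{eq:fluxa1}) at $u_1=u_0$ with $p=2$ (the paper takes $k=1$ where you take $k=2$, immaterial on $\{r\geq R\}$ since $1-\mu$ is bounded above and below there), and derives the decaying lines by setting $u_1=v/2$ and feeding in (\ref{eq:decayun}), (\ref{puno}), (\ref{gro2}) and the energy monotonicity (\ref{eq:enconse}) to pass from the dyadic sequence to all $u$, with the same derivative count $\snorm{\cdot}{2}{0}{0}\to\snorm{\cdot}{2}{1}{0}\to\snorm{\cdot}{2}{2}{0}$ across the three lines.

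There is, however, one concrete false step: the sign verification for your third choice $(p,k)=(0,1)$. On radial functions $\lbar=-(1-\mu)\partial_r$, so
\[
-\lbar\left\{\frac{1}{1-\mu}\right\}=-\frac{2M}{r^2}\,\frac{1}{1-\mu}<0,
\]
i.e.\ the bulk coefficient of $|\snabla_L\phi|^2$ in (\ref{eq2.15}) is \emph{negative} throughout $\{r\geq R\}$ for $(0,1)$, not merely in the cutoff strip, so your assertion that ``mere positivity is all that is required'' fails and the master inequality for the last line does not follow as written. The repair stays entirely within your scheme: since $R>4M$ one has $(r-3M)^2/r^4\gtrsim r^{-2}$ on $\{r\geq R\}$, so the Morawetz estimate (\ref{eq:redsh}) controls $\int r^{-2}(|\snabla_L\phi|^2+|\snabla_\lbar\phi|^2)(1-\mu)\,\de u\,\de v\,\desphere$ over the whole region, and the offending $O(r^{-2})|\snabla_L\phi|^2$ bulk is absorbed at the cost of $C\,\fara^\infty[\phi](P(\tilde u))$ — a term already on your right-hand side; alternatively take $k=0$ when $p=0$, for which the $\lbar$-weight is constant (so that term vanishes) and one checks $-L\{V\}=(1-\mu)r^{-3}\left(2(1-\mu)-2M/r\right)>0$ and that the $|\snabla\phi|^2$ coefficient $2r^{-1}(1-\mu)\left((1-\mu)-M/r\right)$ remains positive for $R$ large. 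With either fix your argument closes; note also that your treatment of (\ref{eq:fluxa3}) — bounding the unweighted outgoing flux on $\conplus_{v/2}$ as a summand of $\fara^\infty[\phi](P(v/2))$ and invoking Lemma~\ref{lem:fluxdec}, which holds for every $u$ with no dyadic issue — is the natural (and cleaner) reading of the paper's terser ``$p=1$, $k=1$ together with (\ref{gro2}) commuted with $\snabla_T$'' at that point.
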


\begin{proof} We choose $R$ such that $R- M > 3M$.
We choose a smooth radial function $f$ with the following requirements:
\begin{equation*}
f(r) = \left\{
\begin{array}{cc}
1 & \text{for } r \geq R, \\
0 & \text{for } r \leq R-M.
\end{array}
\right.
\end{equation*}
Let $v_1 \geq v_0$, $u_1 \geq u_0$. We define the spacetime region
$$\mathfrak{W}(u_1, R, R_1, v_1) := \{u\geq u_1\} \cap \{r \geq R\} \cap \{2(R_1)_* + u_1\leq v \leq v_1\}.$$
\begin{figure}
\centering
\begin{tikzpicture}	

\node (I)    at ( 0,0) {};

\path  % Four corners of diamond
  (I) +(90:4)  coordinate[label=90:$i^+$]  (top)
       +(-90:4) coordinate[label=-90:$i^-$] (bot)
       +(0:4)   coordinate                  (right)
       +(180:4) coordinate (left)
       ;
       
\path 
	(top) + (180:4) coordinate  (acca)
		+ (-45: 3) coordinate (nulluno)
		+ (-45: 4) coordinate (nulldue)
	;

\path 
	(left) + (-45: 3) coordinate (correspuno)
		+ (-45: 4) coordinate (correspdue)
	;

\path (top) + (-135: 0.5) coordinate (w1); 

\path (w1) + (-45: 6) coordinate (w2);

\path (w1) + (-135:1) coordinate (w3);
\path (w2) + (-135:1) coordinate (w4);
\draw [name path = wbuilda, opacity = 0] (w3) -- (w4);

\draw [name path = rconst] (top) to [bend left = 15](bot);
\draw [name path = nc, opacity= 0] (nulluno) to (correspuno);
\draw [name path = nc1, opacity= 0] (nulldue) to (correspdue);
\draw [name path = wbuild, opacity = 0] (w1) to (w2);

\coordinate [
 name intersections={of=rconst and wbuild}] (wwa) at (intersection-1);
\coordinate [
 name intersections={of=nc1 and wbuild}] (wwb) at (intersection-1);
\coordinate [ 
name intersections={of=nc1 and nc}] (wwc) at (intersection-1);

\coordinate [name intersections={of=rconst and wbuilda}
] (wwdum) at (intersection-1);

\coordinate [
name intersections={of=nc1 and wbuilda}] (wwe) at (intersection-1);

\coordinate [ 
 name intersections={of=nc and wbuilda}] (wwf) at (intersection-1);

\path (top) + (-135: 2.2) coordinate (horiuno)
			+ (-135: 3.2) coordinate (horidue);

\draw (wwf) to (wwdum) to node[midway, sloped, above]{\tiny $\{v = v_1\}$} (wwb);

\draw (left) -- 
          node[midway, above left, sloped]    {$\mathcal{H}^+$}
      (top) --
          node[midway, above, sloped] {$\mathcal{I}^+$}
      (right) -- 
          node[midway, above, sloped] {$\mathcal{I}^-$}
      (bot) --
          node[midway, above, sloped]    {$\mathcal{H}^-$}    
      (left) -- cycle;

\draw[decorate,decoration=zigzag] (top) -- (acca)
      node[midway, above, inner sep=2mm] {$r=0$};

\fill[color=gray!10] (wwe)--(wwf)--(wwdum)--(wwb)--cycle;
\fill[color=gray!10] (wwa)--(wwf)--(wwe)--cycle;

\path
	(nulluno) + (-125:1.5) coordinate[label=90:$\mathfrak{W}_{1}$] (d12);

\draw [name path = rconst] (top) to [bend left = 15] node[midway, left]{\tiny $\{r = R\}$} (bot);
\draw [name path = nc1, opacity= 1] (nulldue) to node[midway, sloped, below right]{\tiny $\{u = u_1\}$} (correspdue);
\draw [name path = wbuild, opacity = 0] (w1) to (w2);
\draw (wwf) to (wwdum) to node[midway, sloped, above]{\tiny $\{v = v_1\}$} (wwb);
\end{tikzpicture}
\caption{Penrose diagram of the regions considered.}\label{figdue}
\end{figure}
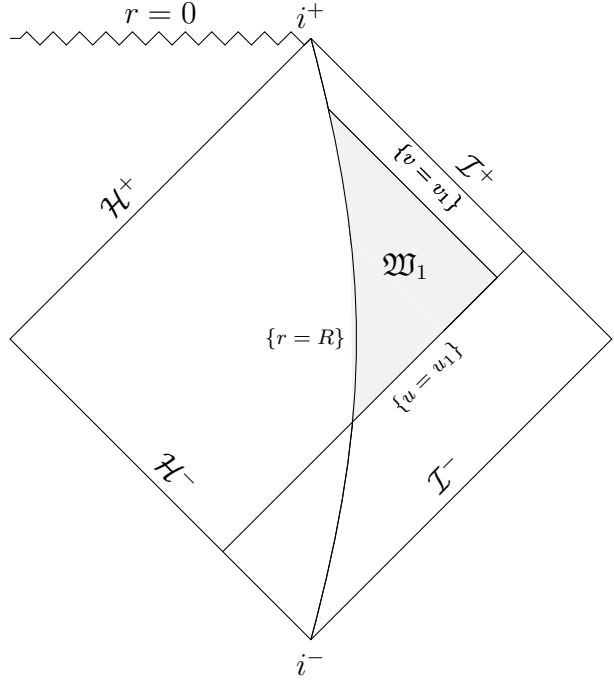
To shorten notation, we let $\mathfrak{W}_1 := \mathfrak{W}(u_1, R, R, v_1)$, and $\mathfrak{W}_2 := \mathfrak{W}(u_1, R-M, R, v_1)$.
We integrate the expression (\ref{eq:pweiavg}) on the region $\mathfrak{W}_2$ against the form $\de u \de v \desphere$. 
We obtain, using (\ref{eq2.15}),

\begin{equation}\label{eq:pbis}
 \begin{aligned}
 \int_{\{v = v_1\} \cap \mathfrak{W}_1}\left\{\frac{r^p}{(1-\mu)^{k-1}}|\snabla \phi|^2 +V\frac{r^p}{(1-\mu)^k}|\phi|^2 \right\} \de u\desphere\\+
\int_{\mathfrak{W}_1} \left\{- \lbar\left\{ \frac{r^p}{(1-\mu)^k}\right\}|\snabla_L \phi|^2 - L\left\{V\frac{r^p}{(1-\mu)^k}\right\}|\phi|^2\right\}\de u \de v \desphere\\
+\int_{\mathfrak{W}_1} \left\{(2-p)r^{p-1}(1-\mu)^{2-k}+ r^p(k-1)(1-\mu)^{-k+1}\frac{2M}{r^2} \right\}|\snabla \phi|^2 \de u \de v \desphere\\ \lesssim
  \int_{\{u = u_1\} \cap \mathfrak{W}_1} \left\{\frac{r^p}{(1-\mu)^k}|\snabla_L \phi|^2 \right\}\de v \desphere
  + \fara^\infty[\phi](u_1, u_1+2R_*).
 \end{aligned}
\end{equation}
Here, we used the Morawetz estimate (\ref{eq:redsh}) as well as the conservation of energy (\ref{eq:enconse}) to bound the errors (both boundary and spacetime) arising from the cutoff $f$ (here is where we use that $R > 4M$ not to lose derivatives).

First, recall: $V = \frac{1-\mu}{r^2}$. Also, by possibly increasing $R$,
\begin{equation}
- \lbar \left\{ \frac{r^p}{(1-\mu)^k}\right\} \geq \frac 1 2 r^{p-1}
\end{equation}
on $\mathfrak{W}_1$, for $p = 0,1,2$.

We now plug $p = 2$, $k =1$ in the previous equation (\ref{eq:pbis}). We obtain that there exists a positive constant $K > 0$ such that the following holds:
\begin{equation}\label{eq3.1}
 \begin{aligned}
 \int_{\{v = v_1\} \cap \mathfrak{W}_1}\left\{{r^2}|\snabla \phi|^2 +|\phi|^2 \right\} \de u\desphere+
\int_{\mathfrak{W}_1} K r|\snabla_L \phi|^2 \de u \de v \desphere \\
   \lesssim
     \int_{\{u = u_1\} \cap \mathfrak{W}_1} \frac{r^2}{(1-\mu)}|\snabla_L \phi|^2\de v \desphere + \fara^\infty[\phi](u_1, u_1+2R_*).
 \end{aligned}
\end{equation}
In particular, this means that the flux
\begin{equation*}
\int_{\{v = v_1\} \cap \mathfrak{W}_1}\left\{{r^2}|\snabla \phi|^2 +|\phi|^2 \right\} \de u\desphere \leq C \snorm{\phi}{2}{0}{0}^2.
\end{equation*}
Furthermore, using again Equation~(\ref{eq:pbis}), with $p = 1$, $k =1$, we obtain:
\begin{equation}\label{eq:pundue}
 \begin{aligned}
 \int_{\{v = v_1\} \cap \mathfrak{W}_1}\left\{r|\snabla \phi|^2 +r^{-1}|\phi|^2 \right\} \de u\desphere\\ \lesssim
  \int_{\{u = u_1\} \cap \mathfrak{W}_1} r|\snabla_L \phi|^2 \de v \desphere
  + \fara^\infty[\phi](u_1, u_1+2R_*).
 \end{aligned}
\end{equation}
Using inequality (\ref{eq:pundue}) with the choice $u_1 = v/2$, $v_1 = v$, together with (\ref{gro2}), (\ref{eq:decayun}), (\ref{puno}) and the monotonicity of $\fara^\infty[\phi]$ (\ref{eq:enconse}) to eliminate the restriction to the dyadic sequence, we obtain
\begin{equation}
\fluxob_{-1}[\phi](v/2, v)+\fluxob_{1}[\snabla \phi](v/2, v) \leq C v^{-1} \snorm{\phi}{2}{1}{0}^2.
\end{equation}
Furthermore, using again Equation~(\ref{eq:pbis}), with $p = 1$, $k =1$, together with inequality (\ref{gro2}) commuted once with $\snabla_T$, we obtain similarly
\begin{equation}
\fluxob_{-2}[\phi](v/2, v)+\fluxob_{0}[\snabla \phi](v/2, v) \leq C v^{-2} \snorm{\phi}{2}{2}{0}^2.
\end{equation}
This finishes the proof of the Lemma.
\end{proof}

\section{Decay for solutions to the spin $\pm$1 Teukolsky equations}\label{sec:decayteu}
In this section, we assume that $\alpha$ and $\alphabar$ are solutions to the spin $\pm1$ Teukolsky Equations, and we prove items (\ref{eq:da1}), (\ref{eq:dab1}), (\ref{eq:da3}), (\ref{eq:dab2}) of Theorem~\ref{prop:decayteu}. We postpone the proof of item \ref{eq:da4} to the next section, which is self-contained and in particular does not depend on the estimates for the incoming fluxes in Lemma~\ref{lem:fluxa}.

\begin{proof}[Proof of estimates (\ref{eq:da1}), (\ref{eq:dab1}), (\ref{eq:da3}), (\ref{eq:dab2}) in Theorem~\ref{prop:decayteu} ]
We divide the proof in four steps.
\subsection*{Step 1: estimates on $\alpha$, region of unbounded $r$}
We recall the definition of $\phi$:
\begin{equation}
\phi_A = \frac{r^2}{1-\mu} \snabla_\lbar (r\alpha_A).
\end{equation}
Let $p \in \R$. It follows that
\begin{equation}\label{eq:opa}
\snabla_\lbar (r^{1+p}\alpha_A) = - p (1-\mu)r^p \alpha_A + (1-\mu)r^{p-2} \phi_A.
\end{equation}
Let $2p > \varepsilon > 0$, an application of the Cauchy-Schwarz inequality implies
\begin{equation}\label{eq:opaa}
\begin{aligned}
\snabla_\lbar(r^{2+2p}|\alpha|^2) = 2r^{1+p}\alpha^A(-p(1-\mu)r^p \alpha_A+(1-\mu)r^{p-2}\phi_A)\\
 \leq -(2p-\varepsilon)(1-\mu)r^{1+2p}|\alpha|^2+ \varepsilon^{-1}(1-\mu)r^{2p-3}|\phi|^2
\end{aligned}
\end{equation}
A straightforward calculation implies that $[r\snabla, \snabla_\lbar] = 0$. Commuting Equation~(\ref{eq:opa}) twice with $r \snabla$, we obtain
\begin{equation*}
\snabla_\lbar (r^{3+p}\snabla_C \snabla_B \alpha_A) = - p (1-\mu)r^{p+2} \snabla_C \snabla_B \alpha_A + (1-\mu)r^{p} \snabla_C \snabla_B \phi_A.
\end{equation*}
This implies:
\begin{equation}\label{eq:opb}
\snabla_\lbar(r^{2+2p}|r^2 \snabla \snabla \alpha|^2) 
 \leq -(2p-\varepsilon)(1-\mu)r^{1+2p}|r^2 \snabla \snabla\alpha|^2+ \varepsilon^{-1}(1-\mu)r^{2p-3}|r^2 \snabla \snabla\phi|^2.
\end{equation}
We let $\tilde v \geq v_0$, $\tilde u \geq u_0$, and $\tilde v-\tilde u = 2(\tilde r)_*$. Let $r_0$ such that $\tilde v-u_0 = 2(r_0)_*$. We now integrate Equation~(\ref{eq:opb}) and (\ref{eq:opaa}) on $\conminus_{\tilde v} \cap \{r\geq \tilde r\} \cap \{u \geq u_0\}$. We obtain:
\begin{equation}\label{eq:masteradec}
\begin{aligned}
\int_{\mathbb{S}^2}{\tilde r}^{2+2p}(|{\tilde r}^2\snabla \snabla \alpha|^2(\tilde u, \tilde v, \omega)+ |\alpha|^2(\tilde u, \tilde v, \omega))\desphere(\omega) \\  \lesssim 
{r_0}^{2+2p}\int_{\mathbb{S}^2}(|r_0^2\snabla \snabla \alpha|^2(u_0, \tilde v, \omega)+ |\alpha|^2(\tilde u, \tilde v, \omega)) \desphere(\omega)\\
+ \varepsilon^{-1} \int_{\conminus_{\tilde v} \cap \{r\geq \tilde r\} \cap \{u \geq u_0\}} r^{2p-3}(|\phi|^2(u,\tilde v)+|r^2 \snabla \snabla\phi|^2(u, \tilde v))\desphere \de u
\end{aligned}
\end{equation}
Recall the definition of the cutoff $\chi$. It is a smooth function such that $\chi(r) =1$ for $r \geq 3M$, and $\chi(r) = 0$ for $r \in [2M, 5/2M]$.

Letting $\overline \Psi := \chi(r)(1-\mu)^{-1} r^3 \alpha$, we trivially bound the initial boundary term by data:
\begin{equation}\label{eq:boundain}
|\alpha| \leq C r^{-\frac 5 2} \snorm{\overline \Psi}{0}{0}{2}.
\end{equation}
Furthermore, we bound the integrals of $\phi$ appearing in the RHS of (\ref{eq:masteradec}) by Lemma~\ref{lem:fluxa}. We first write, using the definition of Lie derivative:
\begin{align*}
\int_{\conminus_{\tilde v} \cap \{r\geq \tilde r\} \cap \{u \geq u_0\}} r^{2p-3}(|\phi|^2(u,\tilde v)+|r^2 \snabla \snabla\phi|^2(u, \tilde v))\desphere \de u \\
 \leq C \int_{\conminus_{\tilde v} \cap \{r\geq \tilde r\}\cap \{u \geq u_0\}} r^{2p-3}(|\phi|^2(u,\tilde v)+ r^2|\snabla\phi|^2(u, \tilde v) + r^2 \sum_{j=1}^3|\snabla \slie_{\Omega_j} \phi|^2(u, \tilde v))\desphere \de u.
\end{align*}
From here, we consider two cases:
\begin{itemize}
\item If $\tilde u \leq \tilde v/2$, estimate (\ref{eq:fluxa1}) is sufficient to conclude. Let $\tilde r$ such that $\tilde r_* = \frac 1 2(\tilde v- \tilde u)$. Then,
\begin{equation}\label{eq:mastermod}
\begin{aligned}
\int_{\mathbb{S}^2}{\tilde r}^{5}(|{\tilde r}^2\snabla \snabla \alpha|^2(\tilde u, \tilde v, \omega)+ |\alpha|^2(\tilde u, \tilde v, \omega))\desphere(\omega)  \\ \lesssim 
{r_0}^{5}\int_{\mathbb{S}^2}(|r_0^2\snabla \snabla \alpha|^2(u_0, \tilde v, \omega)+ |\alpha|^2(\tilde u, \tilde v, \omega)) \desphere(\omega)\\
+ \varepsilon^{-1} 
\int_{\conminus_{\tilde v} \cap \{r\geq \tilde r\} \cap \{u\geq u_0\} } (|\phi|^2(u,\tilde v)+ r^2|\snabla\phi|^2(u, \tilde v) + r^2 \sum_{j=1}^3|\snabla \slie_{\Omega_j} \phi|^2(u, \tilde v))\desphere \de u
\\ \lesssim \snorm{\overline \Psi}{0}{0}{2}^2+\fluxob_0[\phi](u_0, \tilde v) + \fluxob_2[\snabla \phi](u_0, \tilde v) + \sum_{i=1}^3\fluxob_2[\snabla \slie_{\Omega_i}\phi](u_0, \tilde v) \\ \lesssim \snorm{\overline \Psi}{0}{0}{2}^2+\snorm{\phi}{2}{0}{0}^2 + \sum_{i=1}^3\snorm{\slie_{\Omega_i}\phi}{2}{0}{0}^2 \\ \lesssim \snorm{\overline \Psi}{0}{0}{2}^2+\snorm{\phi}{2}{0}{1}^2.
\end{aligned}
\end{equation}
Note that, if $\tilde u \leq \tilde v/2$, then $\tilde r_* = \frac 1 2 (\tilde v - \tilde u) \geq \frac 1 4 v$.
The Sobolev embedding for one-forms (see Lemma~\ref{lem:sobone} in the Appendix) then implies the bound
\begin{equation*}
|\alpha(u,v,\omega)| \leq C (\snorm{\overline \Psi}{0}{0}{2}+\snorm{\phi}{2}{0}{1}) v^{-\frac 5 2}, 
\end{equation*}
in the region $\{r \geq R_*\} \cap \{u \leq v/2\}$.
\item If $\tilde u \geq \tilde v/2$, we consider again Equation~(\ref{eq:masteradec}). Notice that $r_0 \gtrsim (r_0)_* = \frac 1 2 (\tilde v - u_0) \gtrsim \tilde v/2$. We estimate
\begin{equation}
\begin{aligned}
\int_{\mathbb{S}^2}{\tilde r}^{2+2p}(|{\tilde r}^2\snabla \snabla \alpha|^2(\tilde u, \tilde v, \omega)+ |\alpha|^2(\tilde u, \tilde v, \omega))\desphere(\omega)  \\
\lesssim {r_0}^{2p+2-5}\underbrace{\int_{\mathbb{S}^2} r_0^5(|r_0^2\snabla \snabla \alpha|^2(u_0, \tilde v, \omega)+ |\alpha|^2(u_0, \tilde v, \omega)) \desphere(\omega)}_{(i)}
\\
\underbrace{+\fluxob_{2p-3}[\phi](\tilde v /2, \tilde v) + \fluxob_{2p-1}[\snabla \phi](\tilde v /2, \tilde v) + \sum_{i=1}^3\fluxob_{2p-1}[\snabla \slie_{\Omega_i}\phi](\tilde v /2, \tilde v)}_{(ii)}
\\\underbrace{+\int_{\conminus_{\tilde v} \cap \{\tilde v /2 \geq u \geq u_0\} } r^{2p-3}|\phi|^2(u,\tilde v)+ r^{2p-1}|\snabla\phi|^2(u, \tilde v) \desphere \de u}_{(iii)} \\+\underbrace{\int_{\conminus_{\tilde v} \cap \{\tilde v /2 \geq u \geq u_0\} } r^{2p-1} \sum_{j=1}^3|\snabla \slie_{\Omega_j} \phi|^2(u, \tilde v)\desphere \de u}_{(iv)}.
\end{aligned}
\end{equation}
Now, by inequality (\ref{eq:boundain}),
$$
(i) \lesssim \snorm{\overline \Psi}{0}{0}{2}^2.
$$
Concerning $(ii)$, we have, by Lemma~\ref{lem:fluxa}, that
\begin{align*}
\fluxob_{2p-3}[\phi](\tilde v /2, \tilde v)+\fluxob_{2p-1}[\snabla \phi](\tilde v /2, \tilde v) &\lesssim v^{2p-3} \snorm{\phi}{2}{3-2p}{0}^2,\\
\sum_{i=1}^3\fluxob_{2p-1}[\snabla \slie_{\Omega_i}\phi](\tilde v /2, \tilde v) &\lesssim v^{2p-3} \snorm{\slie_{\Omega_i}\phi}{2}{3-2p}{1}^2.
\end{align*}
Concerning $(iii)$, we have that the $r$-coordinate of the point of coordinates $(\tilde v, \tilde v/2)$ is such that $r(\tilde v, \tilde v/2) \gtrsim \tilde v, \tilde v/2 = \tilde v /2$. Then, if $p \in \{3/2,1,1/2\}$, $2p-3 \leq 0$, hence
\begin{equation*}
\begin{aligned}
(iii) \lesssim (r(\tilde v, \tilde v/2))^{2p-3} \int_{\conminus_{\tilde v} \cap \{\tilde v /2 \geq u \geq u_0\} }( |\phi|^2(u,\tilde v)+ r^{2}|\snabla\phi|^2(u, \tilde v)) \desphere \de u \\
\lesssim (\tilde v)^{2p-3} \snorm{\phi}{2}{0}{0}.
\end{aligned}
\end{equation*}
Similarly,
\begin{equation*}
(iv) \lesssim (\tilde v)^{2p-3} \snorm{\phi}{2}{0}{1}.
\end{equation*}
Using the Sobolev embedding, Lemma~\ref{lem:sobone} in the Appendix, setting $q = -2p+3$, we finally have
\begin{equation*}
\begin{aligned}
|\alpha(u,v,\omega)| \leq C (\snorm{\overline \Psi}{0}{0}{2}+ \snorm{\phi}{2}{q}{1}) v^{-q/2} r^{-3/2}
\end{aligned}
\end{equation*}
in the region $\{r \geq R_*\} \cap \{u \geq v/2\}$, for $q \in \{0, 1, 2\}$.
\end{itemize}
Summarizing, we have the claim (\ref{eq:da3}). This concludes Step 1.

\subsection*{Step 2: estimates on $\alpha$, region of bounded $r$}
We let $2M < r_c < R$, $v \geq v_0$ $u \geq u_0$, and we integrate equations (\ref{eq:opaa}) and (\ref{eq:opb}) on
\begin{equation*}
\conminus_v \cap \{r_c \leq r \leq R\}.
\end{equation*}
We already know by estimate (\ref{eq:mastermod}) in Step 1 that there exists a constant $C$ depending only on $R$ such that
\begin{equation*}
\int_{\mathbb{S}^2}(|\snabla \snabla \alpha|^2(u, v, \omega)+ |\alpha|^2(u, v, \omega))\desphere \leq C_{R} v^{-2}(\snorm{\overline \Psi}{0}{0}{2}^2+\snorm{\phi}{2}{2}{1}^2).
\end{equation*}
Here, $u$ and $v$ are such that $v-u = 2R_*$. From the estimates on the flux of $\phi$ (\ref{eq:decaydue}), (\ref{eq:decaylie}), and the Sobolev embedding \ref{lem:sobone} we finally obtain
\begin{equation}
|\alpha|(u,v,\omega) \leq C v^{-1}(\snorm{\overline \Psi}{0}{0}{2}+\snorm{\phi}{2}{2}{1}),
\end{equation}
where $(u,v,\omega) \in \{u \geq u_0\} \cap \{v \geq v_0\} \cap \{r \leq R\}$.
\subsection*{Step 3: estimates on $\alphabar$, region of bounded $r$}
We set 
\begin{equation*}
\widetilde{\alphabar}:= (1-\mu)^{-1}\alphabar.
\end{equation*}
It follows that
\begin{equation}\label{eq:abarnorm}
\begin{aligned}
\snabla_L((1-\mu)^{-2}r^2|\alphabar|^2)  = 
L ((1-\mu)^{-2}) r^2 |\alphabar|^2+ 2(1-\mu)^{-2} r \alphabar^A \snabla_L(r \alphabar_A )\\ = 
- 2 (1-\mu)^{-2} \frac{2M}{r^2} r^2 |\alphabar|^2 + 2(1-\mu)^{-2} r \alphabar^A \snabla_L(r \alphabar_A ).
\end{aligned}
\end{equation}
Recall the definition of $\phibar$:
$$
\phibar_A = \frac{r^2}{1-\mu}\snabla_L(r \alphabar_A).$$
This implies:
\begin{equation}
\begin{aligned}
\snabla_L(r^2|\widetilde \alphabar|^2) = 
- 4M |\widetilde \alphabar|^2 + 2 r^{-1} \widetilde{\alphabar}^A\phibar_A.
\end{aligned}
\end{equation}
From this, it follows that
\begin{equation}\label{eq7.1}
\snabla_L(r^2|\widetilde \alphabar|^2 ) + 4M |\widetilde \alphabar|^2 \leq 4M \varepsilon |\widetilde \alphabar|^2 + \frac {1} {4M r^2 \varepsilon} |\phibar|^2.
\end{equation}
Defining now $A^2(u,v) := \int_{\mathbb{S}^2} |\widetilde \alphabar|^2(u,v,\omega) \desphere(\omega)$, we have the same equation for $A^2$:
\begin{equation}
\snabla_L(r^2A^2) + 4M A^2 \leq 4M \varepsilon A^2 + \frac {1} {4M r^2 \varepsilon} \int_{\mathbb{S}^2}|\phibar|^2\desphere.
\end{equation}
We integrate this inequality on the interval $[v_0, v]$. We obtain
\begin{equation}\label{eq:togronwall}
\begin{aligned}
L \left(\exp\left(\int_{v_0}^v \frac {4M (1-\varepsilon)}{r(u,v')^2} \de v'\right) r(u,v)^2A(u,v)^2\right) \\ \leq
\frac{1}{4M r(u,v)^2 \varepsilon} \exp\left(\int_{v_0}^v \frac {4M (1-\varepsilon)}{r(u,v')^2} \de v'\right) \int_{\mathbb{S}^2}|\phibar|^2\desphere.
\end{aligned}
\end{equation}
Consider $u$ as fixed. Since we restrict to the region $\{r \leq R\}$, we have that the function
\begin{equation}
F(u,v) := \int_{v_0}^v \frac {4M (1-\varepsilon)}{r(u,v')^2} \de v'
\end{equation}
is monotonically increasing, and satisfies the inequalities
\begin{equation}\label{eq:estsonf}
\frac{4M (1-\varepsilon)}{R^2}(v-v_0) \leq F(u,v) \leq M^{-1}(1-\varepsilon)(v-v_0).
\end{equation}
By integrating inequality (\ref{eq:togronwall}), we obtain that there exists a constant $C_R$ such that
\begin{equation}\label{eq7.4}
\begin{aligned}
\exp(F(v))r(u,v)^2 A(u,v)^2  - r(u,v_0)^2 A(u,v_0)^2 \\
\leq C_R \int_{\mathbb{S}^2}\int_{v_0}^v (r(u,v'))^{-2} \exp(F(u,v'))|\phibar|^2 \de v' \desphere.
\end{aligned}
\end{equation}
We split the integral on the right hand side in two: 
\begin{equation}
\begin{aligned}
\int_{\mathbb{S}^2}\int_{v_0}^v \exp(F(u,v'))(r(u,v'))^{-2}|\phibar|^2 \de v'\desphere \\
= \int_{\mathbb{S}^2}\int_{v_0}^{v/{A_0}} \exp(F(u,v'))(r(u,v'))^{-2}|\phibar|^2 \de v'\desphere \\ + 
\int_{\mathbb{S}^2}\int_{v/{A_0}}^v \exp(F(u,v'))(r(u,v'))^{-2}|\phibar|^2 \de v'\desphere,
\end{aligned}
\end{equation}
with $A_0 > 1$.
We subsequently claim that:
\begin{equation}
\exp(-F(u,v))\int_{\mathbb{S}^2}\int_{v_0}^{v/{A_0}} \exp(F(u,v'))(r(u,v'))^{-2}|\phibar|^2 \de v' \desphere \leq C \snorm{\phibar}{0}{0}{0}^2 v^{-2},\label{eq:claimu}
\end{equation}
and
\begin{equation}\label{eq:claimdu}
\exp(-F(u,v))\int_{\mathbb{S}^2}\int_{v/{A_0}}^v \exp(F(u,v'))(r(u,v'))^{-2}|\phibar|^2 \de v' \desphere \leq C \snorm{\phibar}{2}{2}{1} v^{-2}.
\end{equation}

\begin{itemize}
\item In order to prove (\ref{eq:claimu}), we recall the inequalities (\ref{eq:estsonf}), which we write in the following way:
\begin{equation}
\begin{aligned}
A_1 v + A_2 \leq F(u,v) \leq A_3 v + A_4,
\end{aligned}
\end{equation}
with $A_1 < A_3$ positive constants. 
Choose $A_0 > 0$ so that 
\begin{equation*}
-A_1+A_3/{A_0} < 0.
\end{equation*}
Note that $A_0$ depends only on the value of $R$.
Then,
\begin{equation}
\begin{aligned}
\exp(-F(u,v))\int_{\mathbb{S}^2}\int_{v_0}^{v/{A_0}} \exp(F(u,v'))(r(u,v'))^{-2}|\phibar(u,v')|^2 \de v' \desphere  \\  \leq
\exp(-F(u,v) + F(u,v/{A_0})) \int_{\mathbb{S}^2}\int_{v_0}^\infty (r(u,v'))^{-2}|\phibar(u,v')|^2 \de v'  \desphere \\ \lesssim 
\exp((- A_1+A_3/{A_0})v ) \int_{\mathbb{S}^2}\int_{v_0}^\infty (r(u,v'))^{-2}|\phibar(u,v')|^2 \de v'  \desphere.
\end{aligned}
\end{equation}
We now use the fact that $\phibar$ satisfies the \fackip equation: we have the energy conservation statement (\ref{eq:enconse}). We obtain the first claim (\ref{eq:claimu}).

\item For the claim (\ref{eq:claimdu}), we first estimate, given that, for fixed $u$, $F(u,v)$ is non-decreasing in $v$,
\begin{equation}
\begin{aligned}
\exp(-(F(v)))\int_{\mathbb{S}^2}\int_{v/{A_0}}^v \exp(F(v'))(r(u,v'))^{-2}|\phibar|^2 \de v' \desphere \\ \leq  \int_{\mathbb{S}^2}\int_{v/{A_0}}^v(r(u,v'))^{-2}|\phibar|^2 \de v'\desphere.
\end{aligned}
\end{equation}
We consequently notice that, by the energy conservation statement (\ref{eq:encons}), the resulting flux satisfies
\begin{equation}
\int_{\mathbb{S}^2}\int_{v/{A_0}}^v(r(u,v'))^{-2}|\phibar|^2 \de v' \desphere \leq C \fara^{\infty}[\phibar](Q(v)).
\end{equation}
where the point $Q(v) = ((A_0)^{-1}v-2R_*,(A_0)^{-1}v)$.
We now use the bound in Lemma~\ref{lem:fluxdec} in order to obtain
\begin{equation}
\int_{\mathbb{S}^2}\int_{v/{A_0}}^v(r(u,v'))^{-2}|\phibar|^2 \de v' \desphere \leq C v^{-2} \snorm{\phibar}{2}{2}{1}^2.
\end{equation}
This is claim (\ref{eq:claimdu}).
\end{itemize}
We now use previous Equation~(\ref{eq7.4}), together with the fact that 
$$
r(u,v_0)^2 A(u,v_0)^2 \leq C_R \snorm{\widetilde{\alphabar}}{0}{0}{0}
$$
on the region $\{v \geq v_0\} \cap \{r \leq R\}$. We obtain finally that $A$ satisfies, in the region $\mathcal{J}^+(C_{u_0, v_0}) \cap\{r \leq R\}$,
\begin{equation*}
A \leq C v^{-1}(\snorm{\widetilde{\alphabar}}{0}{0}{0} + \snorm{\phibar}{2}{2}{0}).
\end{equation*}
We notice that Equation~(\ref{eq7.1}) holds with $\widetilde \alphabar$ replaced by $r^2 \snabla \snabla \widetilde \alphabar$ (this follows by taking the $r\snabla$ derivative of the defining relation of $\phibar$). Using the Sobolev embedding (Lemma~\ref{lem:sobone}), we obtain the decay for $\alpha$ in the region $\{r \leq R\}$:
\begin{equation}
|\alphabar| \leq C (1-\mu) v^{-1}(\snorm{\phibar}{2}{2}{1}+\snorm{(1-\mu)^{-1}\alphabar}{0}{0}{2}).
\end{equation}

\subsection*{Step 4: estimates on $\alphabar$, region of unbounded $r$.}

Setting $p=0$ in previous Equation~(\ref{eq:abarnorm}) we obtain, using Cauchy-Schwarz on the right:

\begin{equation}\label{eq5.2}
\snabla_L(r^2A^2) + 4M |\widetilde \alphabar|^2 \leq 4M \varepsilon |\widetilde \alphabar|^2 + \frac {1} {r^2 4M \varepsilon} |\phibar|^2.
\end{equation}
We integrate this equation on cones of constant $u$ coordinate, starting from $\{r = R\}$. We use the Sobolev embedding \ref{lem:sobone}, and the estimates (\ref{eq:decaydue}). We finally obtain:
\begin{equation}
|\alphabar| \leq \frac{C}{(|u|+1) r}(\snorm{\phibar}{2}{2}{1}+\snorm{(1-\mu)^{-1} \alphabar}{0}{0}{2}),
\end{equation}
on the region $\{r \geq R\}\cap \mathcal{J}^+(u_0, v_0)$. This concludes the proof of the Proposition.
\end{proof}

\section{Improved decay for $\alpha$} \label{sec:alphaimproved}

In this Section, we prove estimate (\ref{eq:da4}) of Theorem~\ref{prop:decayteu}. We state again the result, as it is of independent interest. It is essentially an application of the $r^p$ method to the spin $+1$ Teukolsky equation.

\begin{proposition}\label{prop:morealpha}
There exists $R_* > 0$ and a constant $C > 0$ such that the following holds. Let $\alpha$ satisfy the spin $+1$ Teukolsky Equation on $\{ u \geq u_0\} \cap \{v \geq v_0 \}$,  with $v_0 - u_0 = 2R_*$.
Let $\chi$ be a smooth cutoff function such that $\chi(r) = 1$ for $r \geq 3M$, and $\chi(r) = 0$ for $r \in [2M, 3/2M]$. 

We let $\overline \Psi := \chi(r) (1-\mu)^{-1}r^3 \alpha$. In these conditions, we have the following bound, valid in the region $\{r \geq R\}$:
\begin{equation*}
\begin{aligned}
|\alpha| \leq C (|u|+1)^{-\frac 1 2} r^{-3} (\snorm{\overline \Psi}{2}{0}{2}+\snorm{\phi}{2}{2}{2} ).
\end{aligned}
\end{equation*}
\end{proposition}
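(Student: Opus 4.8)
The plan is to run the $r^p$ hierarchy of Section~\ref{sec:rpmethod} directly on the spin $+1$ Teukolsky equation~(\ref{eq:teua}), but for the rescaled unknown $\overline\Psi = (1-\mu)^{-1}r^3\alpha$ rather than for $\phi$. First I would restrict to $\{r \geq R\}$ with $R > 4M$, where the cutoff $\chi \equiv 1$ and $\overline\Psi = (1-\mu)^{-1}r^3\alpha$ exactly. The key preliminary computation is to rewrite~(\ref{eq:teua}) as a closed wave equation for $\overline\Psi$. Using the defining relation~(\ref{eq:quantity}) one checks the transport identity $\snabla_\lbar\overline\Psi = -\frac{2}{r}\left(1-\frac{3M}{r}\right)\overline\Psi + \phi$ together with $\snabla_L\phi = (1-\mu)\sdelta\overline\Psi - \frac{1-\mu}{r^2}\overline\Psi$; applying $\snabla_L$ to the former and substituting the latter yields a Teukolsky-type equation $\snabla_\lbar\snabla_L\overline\Psi + \frac{2}{r}\left(1-\frac{3M}{r}\right)\snabla_L\overline\Psi - (1-\mu)\sdelta\overline\Psi + (\text{l.o.t.})\overline\Psi = 0$. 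The decisive structural point is that the first-order term now sits on $\snabla_L$ with the favorable sign for $r > 3M$, exactly the feature that makes the weighted energy method of Section~\ref{sec:rpmethod} applicable.

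Next I would transcribe the machinery of Section~\ref{sec:rpmethod} to this equation. The favorable sign of the first-order term makes the energy flux (super-)conserved and yields a Morawetz estimate as in Lemma~\ref{lem:moraw}; the weighted identity~(\ref{eq2.15})/(\ref{eq:pweiavg}), integrated over the outgoing regions $\duu$, then gives the $p$-hierarchy for $\overline\Psi$. From it I extract two outputs: (i) the top-weight bound $\int_{\conplus_u \cap \{r \geq R\}} r^2 |\snabla_L\overline\Psi|^2\,\de v\,\desphere \lesssim \text{data}$, which is merely \emph{bounded} (the $p=2$ rung does not decay); and (ii) the base-flux decay $\fara^\infty[\overline\Psi](P(u)) \lesssim (|u|+1)^{-2}\,\text{data}$, obtained by the dyadic-sequence argument of Lemma~\ref{lem:fluxdec} after commuting with $\snabla_T$ and with the angular Killing fields $\Omega_j$. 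Here is where the asymmetric norms in~(\ref{eq:da4}) arise: the $\snabla_T$-commuted initial fluxes of $\overline\Psi$ are re-expressed, through the transport relation $\snabla_\lbar\overline\Psi + \frac{2}{r}(1-\frac{3M}{r})\overline\Psi = \phi$, in terms of the $T$- and angularly-differentiated data of $\phi$ (giving $\snorm{\phi}{2}{2}{2}$) plus the undifferentiated $r^2$-weighted $L$-flux of $\overline\Psi$ (giving $\snorm{\overline\Psi}{2}{0}{2}$).

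With both flux statements in hand I would convert them to a pointwise bound. Along each outgoing cone $\conplus_u$ I integrate $\snabla_L|\overline\Psi|^2 = 2\,\overline\Psi\cdot\snabla_L\overline\Psi$ outward from $r = R$, seeding the integration with $|\overline\Psi| \lesssim (|u|+1)^{-1}$ near $r = R$ (which follows from the base-flux decay (ii) via Sobolev embedding in the bounded region $r \in [R, 2R]$, so that Lemma~\ref{lem:fluxa} is not needed). A Cauchy--Schwarz split $2|\overline\Psi\cdot\snabla_L\overline\Psi| \leq c\,r^{-2}|\overline\Psi|^2 + c^{-1} r^2 |\snabla_L\overline\Psi|^2$ balances the $(|u|+1)^{-2}$-decaying weighted energy $\int_{\conplus_u} r^{-2}|\overline\Psi|^2$ against the bounded top flux from (i); the geometric mean produces $\sup_v |\overline\Psi|^2 \lesssim (|u|+1)^{-1}$. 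Commuting the entire argument with $\Omega_1,\Omega_2,\Omega_3$ (which preserve the equation, as in the Corollary to Lemma~\ref{lem:fluxdec}) supplies the two angular derivatives required for the spherical Sobolev embedding of Lemma~\ref{lem:sobone}, which loses one power of $r$ and finally gives $|\alpha| = (1-\mu)r^{-3}|\overline\Psi| \lesssim (|u|+1)^{-1/2}r^{-3}$.

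I expect the main obstacle to be twofold. Technically, one must verify that the $r^p$ machinery of Section~\ref{sec:rpmethod} really transfers to the Teukolsky-type equation for $\overline\Psi$: the first-order term changes the coercive bulk and boundary terms, and one must confirm that its good sign leaves the admissible range of weights (in particular $p=2$, $k=1$) and the Morawetz coercivity intact, and control the error terms generated by the radial cutoff $f$ near $r = R$. Conceptually, the crux is the weight bookkeeping responsible for the relatively weak $(|u|+1)^{-1/2}$ rate: because the $r^{-3}$ peeling forces us onto the top, non-decaying rung of the hierarchy, the $u$-decay cannot be read off directly and must instead be squeezed out by interpolating that conserved flux against the $(|u|+1)^{-2}$-decaying base energy. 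Getting this exponent exactly right --- and matching it to the precise norms $\snorm{\overline\Psi}{2}{0}{2}$ and $\snorm{\phi}{2}{2}{2}$ on the right-hand side --- is the delicate part.
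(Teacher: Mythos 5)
Your far-region architecture coincides with the paper's: the closed Teukolsky-type equation for $\Psi=(1-\mu)^{-1}r^3\alpha$, the multiplier $f(u,v)\,\snabla_L\Psi$ with the weights $f=(1-\mu)^{-1}v^2$, $(1-\mu)^{-1}r^2$, $(1-\mu)^{-1}r$, $(1-\mu)^{-1}$, the Poincar\'e inequality (Lemma~\ref{lem:poinca}) to render the zeroth-order bulk nonnegative, the interpolation of the merely bounded $v^2$-weighted $L$-flux against the $(|u|+1)^{-2}$-decaying unweighted fluxes, and the cone Sobolev/Cauchy--Schwarz step (Lemma~\ref{lem:sobpsi}) giving $|\Psi|^2\lesssim (|u|+1)^{-1}$ --- all of this is Section~\ref{sec:alphaimproved}. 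The gap is at the inner boundary. You assert that the machinery of Section~\ref{sec:rpmethod} ``transfers'' to the $\overline\Psi$ equation, yielding a Morawetz estimate as in Lemma~\ref{lem:moraw} and the flux decay $\fara^\infty[\overline\Psi](P(u))\lesssim (|u|+1)^{-2}$ via the dyadic argument of Lemma~\ref{lem:fluxdec}. This is precisely what fails for Teukolsky-type equations (it is the raison d'\^etre of the transformation to $\phi$): the first-order coefficient $\frac 2r(1-\frac{3M}{r})$ has the favorable sign only for $r>3M$, so there is no analogue of the bounded-region estimate (\ref{eq:redsh}) for $\overline\Psi$; and the pigeonhole argument of Lemma~\ref{lem:fluxdec} needs exactly that bounded-region bulk ((\ref{gro1})--(\ref{gro2})) to upgrade dyadic smallness of outgoing fluxes to decay of the full $\fara^\infty$. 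In addition, the cutoff $\chi$ makes the $\overline\Psi$ equation inhomogeneous on $[3/2M,3M]$ with no decaying quantity available there to absorb the error. Consequently both your seed bound $|\overline\Psi|\lesssim(|u|+1)^{-1}$ near $r=R$ and, more importantly, the $u$-integrated timelike boundary terms $\int_{\{r=R\}}(|\Psi|^2+|\snabla\Psi|^2)\,\de\mathcal{T}$, which sit on the right-hand side of every one of your $r\geq R$ multiplier identities, are left without proof.

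The paper supplies this missing input by Lemma~\ref{lem:prevmoredec}, which your sketch has no counterpart for: one contracts the once-integrated Teukolsky relation (\ref{eq:teuaref}), $\snabla_L\phi_A = r^2\sdelta(r\alpha_A)-r\alpha_A$, with $(1-\mu)f(r)\phi^A$ for $f=(1-\mu)^{-1},\,r^{-1}(1-\mu)^{-1},\,r^{-2}(1-\mu)^{-1}$; the resulting identity (\ref{eq:tomonot}) couples weighted $\alpha$-fluxes to $\phi$-fluxes, so the already-established Morawetz estimate and flux decay for $\phi$ (Lemmas~\ref{lem:moraw} and~\ref{lem:fluxdec}) yield (\ref{eq:masterrconst}), i.e.\ $(1+|u_1|)^{-2}$ decay of the $\{r=R\}$ boundary terms and of $\int_{\conplus_{u_1}}(r^2|\alpha|^2+r^2|r\snabla\alpha|^2)$. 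Note that a pointwise interior seed borrowed from Theorem~\ref{prop:decayteu} would give only $u^{-1}$ for the integrated boundary term, losing half a power against the needed $u^{-2}$, so Lemma~\ref{lem:prevmoredec} cannot be bypassed this way. This also explains the norm bookkeeping you found delicate: the paper never commutes the $\Psi$-equation with $\snabla_T$ (hence $x=0$ in $\snorm{\overline \Psi}{2}{0}{2}$); all $T$-commutations land on $\phi$, inside $W[\phi]$ and $\snorm{\phi}{2}{2}{2}$, through Lemma~\ref{lem:prevmoredec}, rather than through a $T$-commuted $\overline\Psi$ energy decay as you propose.
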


In order to prove the Proposition, we make use of the following Lemma. The Lemma gives decay estimates on the boundary terms at $\{r = R\}$. 
\begin{lemma}\label{lem:prevmoredec}
Under the assumptions of Proposition~\ref{prop:morealpha}, we have the following inequality, valid for $u_1 \geq u_0$:
\begin{equation}\label{eq:masterrconst}
\begin{aligned}\int_{\conplus_{u_1}\cap \{r \geq R\}} (r^2|\alpha|^2 + r^2|r \snabla \alpha|^2)  \de v \desphere\\
\int_{\{r = R \} \cap \{ u \geq u_1\}} (|\snabla \alpha|^2+|\alpha|^2)\de \mathcal{T}
 + \int_{\mathfrak{D}_{u_1}^{\infty}} (|r \snabla \alpha|^2 + |\alpha|^2) \de u \de v \desphere\\
 \leq (1+|u_1|)^{-2} \sum_{i = 0}^2 \left( \fara^\infty [(\snabla_T)^i \phi](P(u_0))+\int_{\conplus_{u_0} \cap \{r \geq R\}}r^2 |\snabla_L (\snabla_T)^{\min\{i,1\}} \phi|^2\de v \desphere\right) \\
 + (1+|u_1|)^{-2}  \int_{\conplus_{u_0}\cap \{r \geq R\}} (r^4|\alpha|^2 + r^4|r \snabla \alpha|^2)  \de v \desphere.
\end{aligned}
\end{equation}
Here, as usual, $\de \mathcal{T}$ is the induced volume form on the hypersurface $\{r = R\}$. 

Recall now the definition of angular multi-indices and repeated Lie derivative of (\ref{eq:replie}). Letting $I \in \iota^\Omega_{\leq 2}$, the estimate (\ref{eq:masterrconst}) holds when all the occurrences of the symbol $\alpha$ in (\ref{eq:masterrconst}) are replaced with $\slie^I \alpha$, and all the occurrences of the symbol $\phi$ in (\ref{eq:masterrconst}) are replaced with $\slie^I \phi$.
\end{lemma}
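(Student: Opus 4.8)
The plan is to read (\ref{eq:masterrconst}) as an $r^p$-weighted energy estimate for the spin $+1$ Teukolsky equation, in which $\phi$ enters only as a forcing term whose flux decay is already known from Lemmas~\ref{lem:fluxdec} and~\ref{lem:fluxa}. The algebraic starting point is the transport relation (\ref{eq:opa}), $\snabla_\lbar(r^{1+p}\alpha_A) = -p(1-\mu)r^p\alpha_A + (1-\mu)r^{p-2}\phi_A$, together with the differential inequalities (\ref{eq:opaa}) and (\ref{eq:opb}) it produces after squaring and applying Cauchy--Schwarz. Since we work in $\{r \geq R\}$ with $R > 3M$, for $p > 0$ the coefficient $-(2p-\varepsilon)(1-\mu)r^{1+2p}$ of $|\alpha|^2$ is dissipative, so these inequalities are of exactly the $r^p$-method type: the total $\snabla_\lbar$-derivative produces boundary fluxes on the outgoing cones and on $\{r=R\}$, while the favourable bulk term furnishes a positive spacetime integral. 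First I would record the $\slie^I$-commuted versions for $I \in \iota^\Omega_{\leq 2}$, which are legitimate because $[r\snabla, \snabla_\lbar] = 0$ and $[\slie_{\Omega_i}, \snabla_\lbar] = 0$ (as in the proof of the Corollary giving (\ref{eq:decaylie})), so that $\slie^I\alpha$ and $\slie^I\phi$ obey the same relations and only angular Killing derivatives of the data ever appear.

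Next I would obtain the $(1+|u_1|)^{-2}$ decay of the outgoing flux $\int_{\conplus_{u_1}\cap\{r\geq R\}} r^2(|\slie^I\alpha|^2 + |r\snabla\slie^I\alpha|^2)\,\de v\,\desphere$ by a short $r^p$ hierarchy mirroring the proof of Lemma~\ref{lem:fluxdec}. Integrating the $p=1$ inequality over $\mathfrak{D}_{u_1}^{u_2}$ controls the $r^4$-weighted outgoing flux together with a spacetime bulk of weight $r^3$; a dyadic pigeonhole then yields a sequence of cones on which the $r^3$-flux decays like $u^{-1}$. Repeating the argument with $p=1/2$ (bulk weight $r^2$) and pigeonholing a second time upgrades this to $u^{-2}$ decay of the $r^2$-weighted flux, and the monotonicity (\ref{eq:enconse}) removes the restriction to the dyadic sequence. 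At each level the $\phi$-forcing $\int r^{2p-3}|\slie^I\phi|^2$ is absorbed using the incoming-flux bounds $\fluxob_{p}$ of Lemma~\ref{lem:fluxa} and the vertex-flux decay $\fara^\infty[\slie^I\phi](P(u)) \lesssim u^{-2}\snorm{\phi}{2}{2}{2}$ of Lemma~\ref{lem:fluxdec} and its Corollary; this is precisely what injects the $(1+|u_1|)^{-2}$ factor and produces the $\phi$-data term on the right of (\ref{eq:masterrconst}). Only the $r^4$-weighted $\alpha$-data on $\conplus_{u_0}$ is needed, because the decay rate is imported from $\phi$ rather than generated by the $\alpha$-hierarchy alone.

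Finally, the timelike-boundary flux on $\{r=R\}\cap\{u\geq u_1\}$ of $|\snabla\slie^I\alpha|^2 + |\slie^I\alpha|^2$ and the spacetime bulk $\int_{\mathfrak{D}_{u_1}^\infty}(|r\snabla\slie^I\alpha|^2 + |\slie^I\alpha|^2)$ are obtained by integrating the weighted transport identity over the whole region $\mathfrak{D}_{u_1}^\infty$: the outgoing flux on $\conplus_{u_1}$ (now known to decay like $(1+|u_1|)^{-2}$) and the $\phi$-forcing sit on the right, while the positive bulk and the $\{r=R\}$-flux come out on the left. For this to close at the rate $(1+|u_1|)^{-2}$ one needs the forcing integral $\int_{\mathfrak{D}_{u_1}^\infty} r^{2p-3}|\slie^I\phi|^2$ to decay at the same rate, which follows from the Morawetz and flux-decay estimates for $\phi$ (Lemmas~\ref{lem:moraw} and~\ref{lem:fluxdec}) together with the incoming-flux bounds of Lemma~\ref{lem:fluxa}. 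The main obstacle is exactly the treatment of the timelike boundary $\{r=R\}$ and the transition region: its flux must be shown to be genuinely controlled rather than appearing as an uncontrolled error, which I would do by cutting off near $\{r=R\}$ and absorbing the resulting boundary and spacetime errors into the estimates for $\phi$ that are valid across $r=R$, taking $R$ large (as in Lemma~\ref{lem:fluxa}, where $R > 4M$ is used) so as not to lose derivatives. The remaining task is the bookkeeping needed to carry the $\slie^I$-commutators through every step and reproduce exactly the norms on the right of (\ref{eq:masterrconst}).
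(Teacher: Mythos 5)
Your proposal takes a genuinely different route from the paper, and the route has two concrete gaps. The paper does not square the transport relation (\ref{eq:opa}); it rewrites the Teukolsky equation as $\snabla_L \phi_A - r^2\sdelta(r\alpha_A) + r\alpha_A = 0$ (equation (\ref{eq:teuaref})) and contracts it with the multiplier $(1-\mu)f(r)\phi^A$. In the resulting identity (\ref{eq:tomonot}) the quantity $\phi$ enters only through the perfect derivative $L\{(1-\mu)f|\phi|^2\}$ and the signed bulk term $L\{(1-\mu)f\}|\phi|^2$, which vanishes identically for $f=(1-\mu)^{-1}$ and has a favourable sign for the other choices; hence after integration the only $\phi$-error is localized to the strip $\{R \leq r \leq R+M\}$, where it is bounded by the Morawetz estimate together with the decay $\fara^\infty[\phi](P(u)) \lesssim u^{-2}$ of Lemma~\ref{lem:fluxdec} (via the quantity $W[\phi]$ coming from (\ref{gro2})). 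Your scheme instead spreads the $\phi$-error over all of $\{r \geq R\}$: at your ``$p=1$'' level the forcing is $\int_{\mathfrak{D}_{u_1}^{\infty}} r^{-1}|\phi|^2(1-\mu)\,\de u\,\de v\,\desphere$, and no estimate in the paper controls $|\phi|^2$ at spacetime weight $r^{-1}$ --- the Morawetz bulk carries $r^{-3}|\phi|^2$ and the hierarchy (\ref{pdue})--(\ref{puno}) at best $r^{-2}|\phi|^2$. Your proposed absorption via Lemma~\ref{lem:fluxa} fails quantitatively: $\fluxob_{-1}[\phi](v/2,v) \lesssim v^{-1}$ integrated in $v$ over an unbounded range is logarithmically divergent (and those bounds only cover the portion $u \geq v/2$ of each incoming cone), so the top step of your hierarchy does not close even as a boundedness statement, let alone at the rate $(1+|u_1|)^{-2}$. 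It is telling that the paper stresses, at the start of Section~\ref{sec:decayteu}, that the proof of (\ref{eq:da4}) is deliberately independent of Lemma~\ref{lem:fluxa}; the norms on the right of (\ref{eq:masterrconst}) are exactly those produced by Lemma~\ref{lem:fluxdec} and $W[\phi]$, not those of Lemma~\ref{lem:fluxa}.

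The second gap is that your method degenerates precisely where the lemma's conclusion lives. The flux the lemma decays is the $r^2$-weighted one, which in your parametrization is $p=0$; there the dissipative coefficient $-(2p-\varepsilon)(1-\mu)r^{1+2p}$ in (\ref{eq:opaa}) vanishes, so the squared transport inequality produces neither the positive spacetime bulk $\int_{\mathfrak{D}_{u_1}^{\infty}}(|r\snabla\alpha|^2 + |\alpha|^2)$ nor the timelike boundary flux on $\{r=R\}$ --- both of which (\ref{eq:masterrconst}) asserts, and both of which are consumed later in the proof of Proposition~\ref{prop:morealpha} (the $\{r=R\}$ flux feeds the $v^2$- and $r^p$-weighted estimates for $\Psi$). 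In the paper these terms are generated by the multiplier structure: the choices $f = (1-\mu)^{-1}$, $r^{-1}(1-\mu)^{-1}$, $r^{-2}(1-\mu)^{-1}$ in (\ref{eq:fortau}) yield, in (\ref{eq:alphahigh}), (\ref{eq:alphaprefin}) and (\ref{eq:alphafin}), simultaneously the outgoing fluxes, the $\{r=R\}$ term and the bulk, and the rate $(1+|u_1|)^{-2}$ then follows from two dyadic pigeonholes plus quasi-monotonicity of the left-hand sides. Your pigeonholing skeleton and the $\slie^I$-commutation remarks are correct as far as they go, but to repair the argument you would have to abandon the squared transport relation in favour of a multiplier identity of the form $\phi^A \times \text{(equation (\ref{eq:teuaref}))}$, i.e.\ essentially rediscover (\ref{eq:tomonot}).
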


\begin{proof}[Proof of Lemma~\ref{lem:prevmoredec}]
Let us notice that the quantity $\phi$ satisfies the following equation, upon substitution in the Teukolsky equation for $\alpha$ (\ref{eq:teua}):
\begin{equation}\label{eq:teuaref}
\snabla_L \phi_A - r^2 \sdelta (r\alpha_A) + r \alpha_A = 0.
\end{equation}
Let $f(r)$ be a smooth radial function. We now multiply Equation~(\ref{eq:teuaref}) by $(1-\mu) f(r) \phi^A$, and we obtain the identity, valid upon integration on $\mathbb{S}^2$:
\begin{equation}\label{eq:tomonot}
\begin{aligned}
\frac 1 2 L \left\{(1-\mu)f(r)|\phi|^2 \right\} - \frac 1 2 L \left\{(1-\mu)f(r) \right\}|\phi|^2\\
+\frac 1 2 \lbar \left\{f(r)r^2 |r^2 \snabla \alpha|^2   \right\} - \frac 1 2 \lbar\left(f(r)r^2 \right) |r^2 \snabla \alpha|^2\\
+ \frac 1 2 \lbar \left\{f(r)r^2 |r\alpha|^2   \right\} - \frac 1 2 \lbar \left(f(r) r^2 \right)|r \alpha|^2 \stackrel{\mathbb{S}^2}{=} 0.
\end{aligned}
\end{equation}
Let us then integrate with respect the form $\de u \de v$ in the spacetime region $\{r \geq R \} \cap \mathfrak{D}_{u_1}^{u_2} \cap \{v \leq v_{\max}\}$. We obtain, taking $v_\text{max} \to \infty$, and averaging in $\phi$, 
\begin{equation}\label{eq:fortau}
\begin{aligned}
\int_{\conplus_{u_2}\cap \{r \geq R\}} f(r)r^2(|r\alpha|^2 + |r^2 \snabla \alpha|^2)  \de v \desphere\\
-\int_{\conplus_{u_1}\cap \{r \geq R\}} f(r)r^2(|r\alpha|^2 + |r^2 \snabla \alpha|^2)  \de v \desphere\\
+ \int_{\{r = R \} \cap \{u_1 \leq u \leq u_2 \}} f(R)R^2 (|R^2 \snabla \alpha|^2+|R\alpha|^2)\de \mathcal{T}\\
 - \int_{\mathfrak{D}_{u_1}^{u_2}} \lbar(r^2 f(r) )(|r^2 \snabla \alpha|^2 + |r\alpha|^2) \de u \de v \desphere\\
 \leq C(f,R) \int_{\{R \leq r \leq R+M \} \cap \{u \geq u_1\}} (|\phi|^2 + |\snabla_L \phi|^2)\de u \de v \desphere \\
 + \int_{\mathfrak{D}_{u_1}^{u_2}} L((1-\mu)f(r))|\phi|^2 \de u \de v \desphere.
\end{aligned}
\end{equation}
Here, we supposed $f$ to be positive and smooth. $\de \mathcal{T}$ is, as before, the induced volume form on the hypersurface $\{r = R\}$.

We choose now $f(r) = (1-\mu)^{-1}$ in inequality (\ref{eq:fortau}), and we obtain
\begin{equation}\label{eq:alphahigh}
\begin{aligned}
\int_{\conplus_{u_2}\cap \{r \geq R\}} (r^4|\alpha|^2 + r^4|r \snabla \alpha|^2)  \de v \desphere\\
+ \int_{\{r = R \} \cap \{u_1 \leq u \leq u_2 \}} (| \snabla \alpha|^2+|\alpha|^2)\de \mathcal{T}
 + \int_{\mathfrak{D}_{u_1}^{u_2}} r(|r^2 \snabla \alpha|^2 + |r\alpha|^2) \de u \de v \desphere\\
\leq C \int_{\conplus_{u_1}\cap \{r \geq R\}} (r^4|\alpha|^2 + r^4|r \snabla \alpha|^2)  \de v \desphere + C \fara^\infty[\phi](P(u_1)).
\end{aligned}
\end{equation}
Recall that, here, $P(u) = (u, u + 2R_*)$. This implies that, along a dyadic sequence $u_n$, we have
\begin{equation}\label{eq:decalphamid}
\begin{aligned}
\int_{\conplus_{u_n}\cap \{ r \geq R\}} r(|r\alpha|^2 + |r^2 \snabla \alpha|^2)\de v \desphere \\
\leq C u_n^{-1} \int_{\conplus_{u_0}\cap \{r \geq R\}} (r^4|\alpha|^2 + r^4|r \snabla \alpha|^2)  \de v \desphere + u_n^{-1}\fara^\infty[\phi](P(u_0)).
\end{aligned}
\end{equation}
We choose now $f(r) = r^{-1}(1-\mu)^{-1}$ in inequality (\ref{eq:fortau}), and we obtain, discarding the last term in (\ref{eq:fortau}) (indeed, $L( (1-\mu) f) < 0$):
\begin{equation}\label{eq:alphaprefin}
\begin{aligned}
\int_{\conplus_{u_2}\cap \{r \geq R\}} (r^3|\alpha|^2 + r^3|r \snabla \alpha|^2)  \de v \desphere\\
+ \int_{\{r = R \} \cap \{u_1 \leq u \leq u_2 \}} (|\snabla \alpha|^2+|\alpha|^2)\de \mathcal{T}
 + \int_{\mathfrak{D}_{u_1}^{u_2}} (|r^2 \snabla \alpha|^2 + |r\alpha|^2) \de u \de v \desphere\\
\leq C\int_{\conplus_{u_1}\cap \{r \geq R\}} (r^3|\alpha|^2 + r^3|r \snabla \alpha|^2)  \de v \desphere \\
+ C\int_{\{R \leq r \leq R+M \} \cap \{u \geq u_1\}} (|\phi|^2 + |\snabla_L \phi|^2)\de u \de v \desphere
\end{aligned}
\end{equation}
From inequality (\ref{gro2}) we now have, along a dyadic sequence $\tilde u_n$, that
\begin{equation*}
\tilde u_n \fara^\infty[\phi](P(\tilde u_n)) \leq C \underbrace{ \left(\fara^\infty[\phi](P(u_0))+\fara^\infty[\snabla_T \phi](P(u_0)) +  \int_{\conplus_{u_0} \cap \{r \geq R\}} |\snabla_L \phi|^2 r \de v \desphere  \right)}_{:= W[\phi](P(u_0))}.
\end{equation*}
By the Morawetz estimate for $\phi$, then, we have (without loss of generality, we can assume $r > 3M$):
\begin{equation}
\begin{aligned}
\int_{\{R \leq r \leq R+M \} \cap \{u \geq u_1\}} (|\phi|^2 + |\snabla_L \phi|^2)\de u \de v \desphere \\
\leq (1+|u_1|)^{-1} W[\phi](P(u_0)).
\end{aligned}
\end{equation}
We therefore have the inequality, valid for $u_2 \geq u_n$:
\begin{equation}\label{eq:decalphalowpre}
\begin{aligned}
\int_{\conplus_{u_2}\cap \{r \geq R\}} (r^3|\alpha|^2 + r^3|r \snabla \alpha|^2)  \de v \desphere\\
+ \int_{\{r = R \} \cap \{u_n \leq u \leq u_2 \}} (|\snabla \alpha|^2+|\alpha|^2)\de \mathcal{T}
 + \int_{\mathfrak{D}_{u_n}^{u_2}} (|r^2 \snabla \alpha|^2 + |r\alpha|^2) \de u \de v \desphere\\
\leq u_n^{-1} W[\phi](P(u_0)) + C u_n^{-1} \int_{\conplus_{u_0}\cap \{r \geq R\}} (r^4|\alpha|^2 + r^4|r \snabla \alpha|^2)  \de v \desphere.
\end{aligned}
\end{equation}
From the last display, in particular, it follows that, for $u_1 \geq u_0$,
\begin{equation*}
\begin{aligned}
\int_{\conplus_{u_1}\cap \{r \geq R\}} (r^3|\alpha|^2 + r^3|r \snabla \alpha|^2)  \de v \desphere + \int_{\{r = R \} \cap \{u \geq u_1\}} (|\snabla \alpha|^2+|\alpha|^2)\de \mathcal{T}\\
\leq C u_1^{-1} W[\phi](P(u_0)) + C u_1^{-1} \int_{\conplus_{u_0}\cap \{r \geq R\}} (r^4|\alpha|^2 + r^4|r \snabla \alpha|^2)  \de v \desphere.
\end{aligned}
\end{equation*}
From (\ref{eq:decalphalowpre}) it furthermore follows, along a dyadic sequence $\bar u_n$,
\begin{equation}\label{eq:decalphlow}
\begin{aligned}
\int_{\conplus_{\bar u_n}\cap \{ r \geq R\}} (|r\alpha|^2 + |r^2 \snabla \alpha|^2)\de v \desphere \\
\leq C \bar u_n^{-2} \int_{\conplus_{u_0}\cap \{r \geq R\}} (r^4|\alpha|^2 + r^4|r \snabla \alpha|^2)  \de v \desphere + \bar u_n^{-2}W[\phi](P(u_0)).
\end{aligned}
\end{equation}
We choose now $f(r) = r^{-2}(1-\mu)^{-1}$ in inequality (\ref{eq:fortau}), and we obtain finally
\begin{equation}\label{eq:alphafin}
\begin{aligned}
\int_{\conplus_{u_2}\cap \{r \geq R\}} (r^2|\alpha|^2 + r^2|r \snabla \alpha|^2)  \de v \desphere\\
+ \int_{\{r = R \} \cap \{u_1 \leq u \leq u_2 \}} (|\snabla \alpha|^2+|\alpha|^2)\de \mathcal{T}
 + \int_{\mathfrak{D}_{u_1}^{u_2}} (|r \snabla \alpha|^2 + |\alpha|^2) \de u \de v \desphere\\
\leq \int_{\conplus_{u_1}\cap \{r \geq R\}} (r^2|\alpha|^2 + r^2|r \snabla \alpha|^2)  \de v \desphere \\
+ \int_{\{R \leq r \leq R+M \} \cap \{u \geq u_1\}} (|\phi|^2 + |\snabla_L \phi|^2)\de u \de v \desphere
\end{aligned}
\end{equation}
We now use the Morawetz estimate for $\phi$, as well as the flux decay in Lemma~\ref{lem:fluxdec}, to obtain
\begin{equation}\label{eq:prelprel}
\begin{aligned}
\int_{\{r = R \} \cap \{u \geq \bar u_n\}} (|\snabla \alpha|^2+|\alpha|^2)\de \mathcal{T}
 + \int_{\mathfrak{D}_{\bar u_n}^{\infty}} (|r \snabla \alpha|^2 + |\alpha|^2) \de u \de v \desphere\\
 \leq \bar u_n^{-2} \sum_{i = 0}^2 \left( \fara^\infty [(\snabla_T)^i \phi](P(u_0))+\int_{\conplus_{u_0} \cap \{r \geq R\}}r^2 |\snabla_L (\snabla_T)^{\min\{i,1\}} \phi|^2\de v \desphere\right) \\
 + \bar u_n^{-2} \int_{\conplus_{u_0}\cap \{r \geq R\}} (r^4|\alpha|^2 + r^4|r \snabla \alpha|^2)  \de v \desphere.
\end{aligned}
\end{equation}
It is trivial to remove the restriction to the dyadic sequence, due to the monotonicity of the fluxes considered on the left hand side of (\ref{eq:prelprel}).

Similarly, it is straightforward to deduce the decay estimate, valid for all $u_1 \geq u_0$:
\begin{equation}\label{eq:prelfin}
\begin{aligned}
\int_{\conplus_{u_1}\cap \{r \geq R\}} (r^2|\alpha|^2 + r^2|r \snabla \alpha|^2)  \de v \desphere\\
\leq (1+|u_1|)^{-2} \sum_{i = 0}^2 \left( \fara^\infty [(\snabla_T)^i \phi](P(u_0))+\int_{\conplus_{u_0} \cap \{r \geq R\}}r^2 |\snabla_L (\snabla_T)^{\min\{i,1\}} \phi|^2\de v \desphere\right) \\
 + (1+|u_1|)^{-2} \int_{\conplus_{u_0}\cap \{r \geq R\}} (r^4|\alpha|^2 + r^4|r \snabla \alpha|^2)  \de v \desphere.
\end{aligned}
\end{equation}
The Lemma is thus proved by combining the $\slie^I$-commuted versions of displays (\ref{eq:prelprel}) and (\ref{eq:prelfin}).
\end{proof}

\begin{proof}[Proof of Proposition~\ref{prop:morealpha} and of (\ref{eq:da4})]
Let us consider the Teukolsky Equation for $\alpha$ (\ref{eq:teua}), and write it in the following way:
\begin{equation}
\begin{aligned}
 &\snabla_L \left(\frac{r^2}{1-\mu}\snabla_\lbar(r \alpha_A) \right) - r^2 \sdelta (r \alpha_A) + r \alpha_A = 0 \\
 \iff &\snabla_L \snabla_\lbar \left(\frac{r^2}{1-\mu}r \alpha_A \right) - \snabla_L\left(\lbar\left(\frac{r^2}{1-\mu} \right) r \alpha_A\right) - r^2 \sdelta(r \alpha_A) + r \alpha_A = 0 \\
 \iff &\snabla_L \snabla_\lbar \left(\frac{r^2}{1-\mu}r \alpha_A \right) - \snabla_L\left(\frac{1-\mu}{r^2}\lbar\left(\frac{r^2}{1-\mu} \right) \frac{r^2}{1-\mu}r \alpha_A\right) \\
 &- r^2 \sdelta(r \alpha_A) + r \alpha_A = 0.
\end{aligned}
\end{equation}
Letting $\Psi_A := (1-\mu)^{-1}r^3 \alpha_A$, we have the following equation for $\Psi_A$:
\begin{equation}
\snabla_L \snabla_\lbar \Psi_A - \snabla_L \left(\frac{1-\mu}{r^2} \lbar\left(\frac{r^2}{1-\mu}\right)\Psi_A \right) - (1-\mu) \sdelta \Psi_A + \frac{1-\mu}{r^2} \Psi_A = 0,
\end{equation}
which implies
\begin{equation}
\begin{aligned}
&\snabla_\lbar \snabla_L \Psi_A - L \left(\frac{1-\mu}{r^2} \lbar\left(\frac{r^2}{1-\mu}\right)\right)\Psi_A - \left(\frac{1-\mu}{r^2} \lbar\left(\frac{r^2}{1-\mu}\right)\right)\snabla_L \Psi_A \\
&- (1-\mu) \sdelta \Psi_A + \frac{1-\mu}{r^2} \Psi_A = 0.
\end{aligned}
\end{equation}

Let $f = f(u,v)$ be a smooth function. Multiply the last display through by $f(u,v) \snabla_L \Psi^A$. We obtain
\begin{align*}
&\hspace{15pt} f(u,v) \snabla_L \Psi^A \snabla_\lbar \snabla_L \Psi_A = \frac 1 2 \lbar(f(u,v) |\snabla_L \Psi|^2) - \frac 1 2 \lbar (f(u,v)) |\snabla_L \Psi|^2,\\[10pt]
& - f(u,v) \snabla_L \Psi^A  L \left(\frac{1-\mu}{r^2} \lbar\left(\frac{r^2}{1-\mu}\right)\right)\Psi_A  = - f(u,v)\frac 1 2  L \left(\frac{1-\mu}{r^2} \lbar\left(\frac{r^2}{1-\mu}\right)\right) L |\Psi|^2 \\
&= - \frac 1 2 L \left\{ f(u,v) L \left(\frac{1-\mu}{r^2} \lbar\left(\frac{r^2}{1-\mu}\right)\right) |\Psi|^2  \right\} \\ 
&\hspace{15pt}+ \frac 1 2 L \left\{ f(u,v) L \left(\frac{1-\mu}{r^2} \lbar\left(\frac{r^2}{1-\mu}\right)\right)  \right\}|\Psi|^2,\\[10pt]
&- f(u,v) \left(\frac{1-\mu}{r^2} \lbar\left(\frac{r^2}{1-\mu}\right)\right)\snabla_L \Psi_A \snabla_L \Psi^A = - f(u,v) \left(\frac{1-\mu}{r^2} \lbar\left(\frac{r^2}{1-\mu}\right)\right)|\snabla_L \Psi|^2,\\[10pt]
&- (1-\mu) f(u,v) \sdelta \Psi^A \snabla_L \Psi^A = -f(u,v) \frac{1-\mu}{r^2} r^2 \sdelta \Psi^A \snabla_L \Psi^A \\
&\stackrel{\mathbb{S}^2}{=} f(u,v) \frac{1-\mu}{r^2} \snabla_L (r \snabla_B \Psi_A ) (r \snabla^B \Psi^A)\\
 &= \frac 1 2 L \left\{ f(u,v) (1-\mu) |\snabla \Psi|^2 \right\} - \frac 1 2 r^2 L\left( f(u,v) \frac{1-\mu}{r^2}\right) |\snabla \Psi|^2,
\end{align*}
\begin{align*}
& f(u,v)  \frac{1-\mu}{r^2} \Psi^A \snabla_L \Psi_A = \frac 1 2 f(u,v) \frac{1-\mu}{r^2} L|\Psi|^2 \\
&= \frac 1 2 L \left( f(u,v) \frac{1-\mu}{r^2} |\Psi|^2 \right) - \frac 1 2 L \left(f(u,v) \frac{1-\mu}{r^2} \right)|\Psi|^2.
\end{align*}

We now proceed to integrate the resulting identity on the region $\mathfrak{D}_{u_1}^{u_2} \cap \{v \leq v_{\text{max}}\}$.
We notice the following, from the Poincar\'e inequality for one-forms (Lemma~\ref{lem:poinca}):
\begin{equation*}
\begin{aligned}
&- f(u,v) L\left\{\frac{1-\mu}{r^2}\lbar \left(\frac{r^2}{1-\mu} \right) \right\} |\Psi|^2+ f(u,v) \frac{1-\mu}{r^2} |\Psi|^2 + f(u,v) (1-\mu) |\snabla \Psi|^2  \\
&\geq f(u,v) \left\{  - L\left\{\frac{1-\mu}{r^2}\lbar \left(\frac{r^2}{1-\mu} \right) \right\} + 2 \frac{1-\mu}{r^2}  \right\}|\Psi|^2= 12 M r^{-3} f(u,v)(1-\mu) |\Psi|^2 \geq 0.
\end{aligned}
\end{equation*}
Hence, if $f$ is a positive function, we can discard the corresponding incoming null flux on $v_{\text{max}}$. 

We now choose $f(u,v) = (1-\mu)^{-1}v^2$. It is easy to verify that there exists a value $u_{\text{in}}$ such that the following holds, for $u \geq u_{\text{in}}$ and $u+v \geq 2R_*$:
\begin{equation*}
- L \left(\frac{v^2}{r^2}\right) \geq 0, \qquad - L \left(\frac{v^2}{r^3}\right) \geq 0
\end{equation*}
We now use the previous display, along with the Poincar\'e estimate for one-forms (Lemma~\ref{lem:poinca}) to obtain positivity of the bulk terms in $\Psi$, for $u \geq u_{\text{in}}$:
\begin{equation*}
\begin{aligned}
\frac 1 2 L \left\{ f(u,v) L \left(\frac{1-\mu}{r^2} \lbar\left(\frac{r^2}{1-\mu}\right)\right)  \right\} |\Psi| ^2 
- \frac 1 2 L \left(\frac{v^2}{r^2}\right) |\Psi|^2 - \frac 1 2 L \left(\frac{v^2}{r^2}\right) |\snabla \Psi|^2 \\
\geq \frac 1 2 L \left\{-2 \frac{v^2}{r^2}+ f(u,v) L \left(\frac{1-\mu}{r^2} \lbar\left(\frac{r^2}{1-\mu}\right)\right)  \right\} |\Psi| ^2 = \frac 1 2 L \left\{- 12 M \frac{v^2}{r^3} \right\} \geq 0.
\end{aligned}
\end{equation*}
We therefore obtain the following estimate, valid for $u_1, u_2 \geq u_\text{in}$:
\begin{equation}\label{eq:vweight}
\begin{aligned}
\int_{\conplus_{u_2} \cap \{r \geq R\}} v^2 |\snabla_L \Psi|^2 \de v \desphere 
- \int_{\conplus_{u_1} \cap \{r \geq R\}} v^2 |\snabla_L \Psi|^2 \de v \desphere \\
\leq C \int_{\{r = R\}\cap \{u_1 \leq u \leq u_2\}} (|\Psi|^2 + |\snabla \Psi|^2) \de \mathcal{T}.
\end{aligned}
\end{equation}
Similarly, we have the commuted version of the previous bound:
\begin{equation}\label{eq:vweightcom}
\begin{aligned}
\sum_{I \in \iota^\Omega_{\leq 2}} \int_{\conplus_{u_2} \cap \{r \geq R\}} v^2 |\snabla_L \slie^I \Psi|^2 \de v \desphere 
- \sum_{I \in \iota^\Omega_{\leq 2}} \int_{\conplus_{u_1} \cap \{r \geq R\}} v^2 |\snabla_L \slie^I \Psi|^2 \de v \desphere \\
\leq C \sum_{I \in \iota^\Omega_{\leq 2}} \int_{\{r = R\}\cap \{u_1 \leq u \leq u_2\}} (|\slie^I \Psi|^2 + |\snabla \slie^I \Psi|^2) \de \mathcal{T}.
\end{aligned}
\end{equation}
We now choose $f(u,v) = (1-\mu)^{-1}r^2$.
We notice that the only spacetime term remaining in either $|\Psi|^2$ or $|\snabla \Psi|^2$ is
\begin{equation}\label{eq:poszeroth}
\int_{\mathfrak{D}_{u_1}^{u_2}} L \left\{ f(u,v) L \left(\frac{1-\mu}{r^2} \lbar\left(\frac{r^2}{1-\mu}\right)\right)  \right\} |\Psi|^2\de u \de v \desphere.
\end{equation}
With our choice of $f$, we have
\begin{equation}
 L \left\{ f(u,v) L \left(\frac{1-\mu}{r^2} \lbar\left(\frac{r^2}{1-\mu}\right)\right)  \right\} = 12 M \frac{1-\mu}{r^2}
\end{equation}
Hence we obtain the following estimate, possibly restricting to $R$ large enough,
\begin{equation}\label{eq:rpalphad}
\begin{aligned}
\int_{\conplus_{u_2} \cap \{r \geq R\}} r^2 |\snabla_L \Psi|^2 \de v \desphere 
- \int_{\conplus_{u_1} \cap \{r \geq R\}} r^2 |\snabla_L \Psi|^2 \de v \desphere\\
+ \int_{\mathfrak{D}_{u_1}^{u_2}} r |\snabla_L \Psi|^2 \de u \de v \desphere 
+ \int_{\mathfrak{D}_{u_1}^{u_2}} r^{-2} |\Psi|^2 \de u \de v \desphere \\
\leq C_R \int_{\{r = R\}\cap \{u_1 \leq u \leq u_2\}} (|\Psi|^2 + |\snabla \Psi|^2) \de \mathcal{T}.
\end{aligned}
\end{equation}
Similarly, we obtain the commuted estimate
\begin{equation}\label{eq:rpalphadcom}
\begin{aligned}
\sum_{I \in \iota^\Omega_{\leq 2}} \left\{\int_{\conplus_{u_2} \cap \{r \geq R\}} r^2 |\snabla_L \slie^I\Psi|^2 \de v \desphere 
- \int_{\conplus_{u_1} \cap \{r \geq R\}} r^2 |\snabla_L \slie^I\Psi|^2 \de v \desphere \right.\\
\left.+ \int_{\mathfrak{D}_{u_1}^{u_2}} r |\snabla_L\slie^I \Psi|^2 \de u \de v \desphere 
+ \int_{\mathfrak{D}_{u_1}^{u_2}} r^{-2} |\slie^I\Psi|^2 \de u \de v \desphere \right\} \\
\leq C_R \sum_{I \in \iota^\Omega_{\leq 2}}\int_{\{r = R\}\cap \{u_1 \leq u \leq u_2\}} (|\slie^I\Psi|^2 + |\snabla \slie^I\Psi|^2) \de \mathcal{T}.
\end{aligned}
\end{equation}
By an analogous reasoning, we obtain the following estimate, choosing $f(r) = (1-\mu)^{-1} r$:
\begin{equation}\label{eq:rpalphau}
\begin{aligned}
\int_{\conplus_{u_2} \cap \{r \geq R\}} r |\snabla_L \Psi|^2 \de v \desphere 
- \int_{\conplus_{u_1} \cap \{r \geq R\}} r |\snabla_L \Psi|^2 \de v \desphere\\
+ \int_{\mathfrak{D}_{u_1}^{u_2}} |\snabla_L \Psi|^2 \de u \de v \desphere 
+ \int_{\mathfrak{D}_{u_1}^{u_2}}  r^{-3} |\Psi|^2 \de u \de v \desphere \\
\leq C_R \int_{\{r = R\}\cap \{u_1 \leq u \leq u_2\}} (|\Psi|^2 + |\snabla \Psi|^2) \de \mathcal{T}.
\end{aligned}
\end{equation}
We also obtain the corresponding commuted estimate:
\begin{equation}\label{eq:rpalphaucom}
\begin{aligned}
\sum_{I \in \iota^\Omega_{\leq 2}}\int_{\conplus_{u_2} \cap \{r \geq R\}} r |\snabla_L \slie^I \Psi|^2 \de v \desphere 
- \sum_{I \in \iota^\Omega_{\leq 2}}\int_{\conplus_{u_1} \cap \{r \geq R\}} r |\snabla_L \slie^I \Psi|^2 \de v \desphere\\
+ \sum_{I \in \iota^\Omega_{\leq 2}}\int_{\mathfrak{D}_{u_1}^{u_2}} |\snabla_L \slie^I \Psi|^2 \de u \de v \desphere 
+ \sum_{I \in \iota^\Omega_{\leq 2}}\int_{\mathfrak{D}_{u_1}^{u_2}}  r^{-3} |\slie^I \Psi|^2 \de u \de v \desphere \\
\leq C_R \sum_{I \in \iota^\Omega_{\leq 2}} \int_{\{r = R\}\cap \{u_1 \leq u \leq u_2\}} (|\slie^I \Psi|^2 + |\snabla \slie^I \Psi|^2) \de \mathcal{T}.
\end{aligned}
\end{equation}

We now choose $f(u,v) = (1-\mu)^{-1}$. We look again at the resulting combination of spacetime terms in either $|\Psi|^2$ or $|\snabla \Psi|^2$, and use the Poincar\'e inequality for one-forms:
\begin{equation}
\begin{aligned}
&\frac 1 2 L \left\{ (1-\mu)^{-1} L \left(\frac{1-\mu}{r^2} \lbar\left(\frac{r^2}{1-\mu}\right)\right) -r^{-2}\right\}|\Psi|^2 - \frac 1 2 r^2L(r^{-2})|\snabla \Psi|^2\\
&\geq \frac 1 2 L \left\{ (1-\mu)^{-1} L \left(\frac{1-\mu}{r^2} \lbar\left(\frac{r^2}{1-\mu}\right)\right) - 2 r^{-2}\right\}|\Psi|^2.
\end{aligned}
\end{equation}
We now notice:
\begin{equation}
\begin{aligned}
 L \left\{ (1-\mu)^{-1} L \left(\frac{1-\mu}{r^2} \lbar\left(\frac{r^2}{1-\mu}\right)\right) - 2 r^{-2}\right\} \\
 = (1-\mu)\partial_r \left\{-2 r^{-2} - \partial_r \left(\frac{(1-\mu)^2}{r^2} \partial_r \left(\frac{r^2}{1-\mu}\right)\right)  \right\}= (1-\mu)\frac{36M}{r^4}.
 \end{aligned}
\end{equation}
Hence, we obtain the following estimate:
\begin{equation}\label{eq:rpalphaz}
\begin{aligned}
\int_{\conplus_{u_2} \cap \{r \geq R\}} |\snabla_L \Psi|^2 \de v \desphere 
- \int_{\conplus_{u_1} \cap \{r \geq R\}} |\snabla_L \Psi|^2 \de v \desphere\\
+ \int_{\mathfrak{D}_{u_1}^{u_2}}r^{-1} |\snabla_L \Psi|^2 \de u \de v \desphere 
+ \int_{\mathfrak{D}_{u_1}^{u_2}} r^{-4} |\Psi|^2 \de u \de v \desphere \\
\leq C_R \int_{\{r = R\}\cap \{u_1 \leq u \leq u_2\}} (|\Psi|^2 + |\snabla \Psi|^2) \de \mathcal{T} .
\end{aligned}
\end{equation}
Also, we obtain the following commuted version of the previous estimate:
\begin{equation}\label{eq:rpalphazcom}
\begin{aligned}
\sum_{I \in \iota^\Omega_{\leq 2}} \left\{\int_{\conplus_{u_2} \cap \{r \geq R\}} |\snabla_L \slie^I \Psi|^2 \de v \desphere 
- \int_{\conplus_{u_1} \cap \{r \geq R\}} |\snabla_L \slie^I \Psi|^2 \de v \desphere\right. \\
\left.+ \int_{\mathfrak{D}_{u_1}^{u_2}}r^{-1} |\snabla_L \slie^I \Psi|^2 \de u \de v \desphere 
+ \int_{\mathfrak{D}_{u_1}^{u_2}} r^{-4} |\slie^I \Psi|^2 \de u \de v \desphere \right\}\\
\leq C_R \sum_{I \in \iota^\Omega_{\leq 2}} \int_{\{r = R\}\cap \{u_1 \leq u \leq u_2\}} (|\slie^I \Psi|^2 + |\snabla \slie^I \Psi|^2) \de \mathcal{T} .
\end{aligned}
\end{equation}
Notice that, in particular, from Lemma~\ref{lem:prevmoredec}, we have the following estimate:
\begin{equation}\label{eq:tfluxfirst}
\begin{aligned}
\sum_{I \in \iota^\Omega_{\leq 2}} \int_{\{u \geq u_0\} \cap \{ r = R \}}  (|\snabla \slie^I \alpha|^2 + |\slie^I \alpha|^2)\de v \desphere \\
\leq \sum_{I \in \iota^\Omega_{\leq 2}} \int_{\conplus_{u_0} \cap\{r \geq R\}} (r^{-4}|\slie^I\Psi|^2+ r^{-2}|\snabla \slie^I \Psi|^2) \de v \desphere + \sum_{I \in \iota^\Omega_{\leq 2}} \fara^\infty[\slie^I \phi](P(u_0)).
\end{aligned}
\end{equation}
Now, using inequalities (\ref{eq:vweightcom}), (\ref{eq:rpalphadcom}) and (\ref{eq:tfluxfirst}), we have the following uniform bound for the flux in $r^2 |\snabla_L \Psi|$:
\begin{equation} \label{eq:unibddue}
\sum_{I \in \iota^\Omega_{\leq 2}} \int_{\conplus_{u}\cap \{r \geq R\} } v^2 |\snabla_L \slie^I \Psi|^2 \de v \desphere \leq C( \snorm{\overline \Psi}{2}{0}{2}^2 + \snorm{\phi}{0}{0}{2}^2).
\end{equation}
Here, we used the definition $\overline \Psi := \chi(r) \Psi$, where $\chi$ is a smooth cutoff function as in the statement of the Proposition, i.e. such that $\chi(r) = 0$ for $r \in [2M, 5/2M]$, and $\chi(r) =1$ for $r \in [3M, \infty)$.

Now, from (\ref{eq:rpalphadcom}) it follows that there exists a dyadic sequence $u_n$ such that
\begin{equation}\label{eq:decpsiun}
\begin{aligned}
\sum_{I \in \iota^\Omega_{\leq 2}}\int_{\conplus_{u_n} \cap \{r \geq R\}} \left(r |\snabla_L \slie^I \Psi|^2 + r^{-2}|\slie^I \Psi|^2\right) \de v \desphere \\
 \leq C u_n^{-1}\left\{ \sum_{I \in \iota^\Omega_{\leq 2}} \int_{\conplus_{u_0} \cap\{r \geq R\}} (r^{-4}|\slie^I\Psi|^2+ r^{-2}|\snabla \slie^I \Psi|^2) \de v \desphere \right. \\
 \left.+  \sum_{I \in \iota^\Omega_{\leq 2}}\int_{\conplus_{u_0} \cap \{r \geq R\}} r^2 |\snabla_L \slie^I\Psi|^2 \de v \desphere + \sum_{I \in \iota^\Omega_{\leq 2}} \fara^\infty[\slie^I \phi](P(u_0))\right\}.
\end{aligned}
\end{equation}
Recall that, from inequality (\ref{eq:alphahigh}), the following bound holds for all $u \geq u_0$:
\begin{equation}\label{eq:decafin}
\begin{aligned}
 &\sum_{I \in \iota^\Omega_{\leq 2}} \int_{\conplus_{u} \cap \{r \geq R\}}(|\snabla \slie^I\Psi|^2 +  r^{-2}|\slie^I \Psi|^2) \de v \desphere \\
\leq & C (1+|u|)^{-1} \sum_{I \in \iota^\Omega_{\leq 2}} \left( W[\slie^I \phi](P(u_0)) \phantom{\int}\right.\\
&\left. + \int_{\conplus_{u_0}\cap \{r \geq R\}} (r^{-2}|\slie^I \Psi|^2 + |\snabla \slie^I \Psi|^2+r^2|\snabla_L \slie^I \Psi|^2)  \de v \desphere\right)
\end{aligned}
\end{equation}

We plug the sequence $\{u_n\}$ in estimate (\ref{eq:rpalphaucom}), use (\ref{eq:decafin}) to bound the terms on the right hand side, and we obtain that there exists a second dyadic sequence $\{\bar u_n\}$ such that there holds
\begin{equation}\label{eq:decpsidu}
\begin{aligned}
\sum_{I \in \iota^\Omega_{\leq 2}}\int_{\conplus_{\bar u_n} \cap \{r \geq R\}} \left( |\snabla_L \slie^I \Psi|^2 + r^{-3}|\slie^I \Psi|^2\right) \de v \desphere \\
 \leq C (|\bar u_n|+1)^{-2}\left\{ \sum_{I \in \iota^\Omega_{\leq 2}} \int_{\conplus_{u_0} \cap\{r \geq R\}} (r^{-4}|\slie^I\Psi|^2+ r^{-2}|\snabla \slie^I \Psi|^2) \de v \desphere \right. \\
 \left.+  \sum_{I \in \iota^\Omega_{\leq 2}}\int_{\conplus_{u_0} \cap \{r \geq R\}} r^2 |\snabla_L \slie^I\Psi|^2 \de v \desphere + \sum_{I \in \iota^\Omega_{\leq 2}} \fara^\infty[\slie^I \phi](P(u_0))\right.\\
\left. \phantom{\int} +\sum_{I \in \iota^\Omega_{\leq 2}} W[\slie^I \phi](P(u_0))\right\}.
\end{aligned}
\end{equation}
We now wish to remove the restriction to the dyadic sequence on the integral
\begin{equation*}
\int_{\conplus_{\bar u_n} \cap \{r \geq R\}} \left( |\snabla_L \slie^I \Psi|^2 + r^{-3}|\slie^I \Psi|^2\right) \de v \desphere.
\end{equation*}
Concerning the term in $\snabla_L \Psi$, we have that, from inequality (\ref{eq:decpsidu}), (\ref{eq:rpalphaz}) and (\ref{eq:masterrconst}), the following bound holds for all $u\geq u_0$:
\begin{equation}\label{eq:declfin}
\begin{aligned}
\int_{\conplus_{u} \cap \{r \geq R\}} |\snabla_L \Psi|^2 \de v \desphere \\
\leq (1+|u|)^{-2} \sum_{i = 0}^2 \left( \fara^\infty [(\snabla_T)^i \phi](P(u_0))+\int_{\conplus_{u_0} \cap \{r \geq R\}}r^2 |\snabla_L (\snabla_T)^{\min\{i,1\}} \phi|^2\de v \desphere\right) \\
+ C (1+ |u|)^{-2} \int_{\conplus_{u_0}\cap \{r \geq R\}} (r^{-2}|\Psi|^2 + |\snabla \Psi|^2+r^2|\snabla_L \Psi|^2)  \de v \desphere
\end{aligned}
\end{equation}
Similarly, considering the corresponding commuted estimates, if $I \in \iota^\Omega_{\leq 2}$, we have, for $u \geq u_0$,
\begin{equation}\label{eq:declfincom}
\begin{aligned}
\int_{\conplus_{u} \cap \{r \geq R\}} |\snabla_L \slie^I \Psi|^2 \de v \desphere \leq C(|u|+1)^{-2} (\snorm{\Psi}{2}{0}{2}^2+\snorm{\phi}{2}{2}{2}^2).
\end{aligned}
\end{equation}
Now, from the commuted version of (\ref{eq:masterrconst}), we obtain, if $I \in \iota^\Omega_{\leq 2}$ and $u \geq u_0$:
\begin{equation}\label{eq:decaypsifin}
\begin{aligned}\int_{\conplus_{u}\cap \{r \geq R\}} (r^{-4}|\slie^I\Psi|^2 + r^{-2}|\snabla \slie^I \Psi|^2)  \de v \desphere\\
 \leq C(|u|+1)^{-2} (\snorm{\Psi}{2}{0}{2}^2+\snorm{\phi}{2}{2}{2}^2).
\end{aligned}
\end{equation}

We now have, using Lemma~\ref{lem:sobpsi} in the Appendix, letting $(u,v,\omega)$ a point in $(u,v)$-coordinates:
\begin{equation*}
\begin{aligned}
|\Psi(u,v,\omega)|^2 \leq C (|u|+1)^{-1} (\snorm{\Psi}{2}{0}{2}^2+\snorm{\phi}{2}{2}{2}^2),
\end{aligned}
\end{equation*}
if $v-u \geq 2R_*$. This implies the claim.
\end{proof}

\section{Decay estimates for $\sigma$ and $\rho$}\label{sec:decaymid}
In this section, we suppose that $F \in \Lambda^2(\mathcal{S}_e)$ is a solution to the full Maxwell system, and we prove Theorem~\ref{prop:decaymax}.

\begin{proof}[Proof of Theorem~\ref{prop:decaymax}]
Let $F \in \Lambda^2(\mathcal{S}_e)$ satisfy the Maxwell Equations (\ref{mw1}) to (\ref{mw6}). Then, the extreme components $\alpha$ and $\alphabar$ satisfy the spin $\pm 1$ Teukolsky Equations ((\ref{eq:teua}) and (\ref{eq:teuabar})), and hence we have the required decay rates for $\alpha$ and $\alphabar$ from Theorem~\ref{prop:decayteu}.

Hence the proof reduces to proving decay for the middle components $\sigma$ and $\rho$. As noticed in Remark~\ref{rm:secret}, we have
\begin{align*}
\phi_A &= r^3(\snabla_A \rho + \svol_{AB} \snabla^B \sigma), \\
\phibar_A &= r^3(-\snabla_A \rho + \svol_{AB} \snabla^B \sigma).
\end{align*}
Hence it is clear that estimates (\ref{eq:decaydue}) and (\ref{eq:decaylie}) still hold with $\phi$ replaced by either $r^3 \snabla \rho$ or $r^3 \snabla \sigma$.

Let us now define $\rho_s(u,v)$ and $\sigma_s(u,v)$:
\begin{equation}\label{eq:rhosdefge}
\rho_s(u,v) := \frac {R^2} {4 \pi} \int_{\omega \in \mathbb{S}^2} \rho(u, v, \omega) \desphere(\omega), \qquad \sigma_s(u,v) := \frac {R^2} {4 \pi} \int_{\omega \in \mathbb{S}^2} \sigma(u, v, \omega) \desphere(\omega).
\end{equation}
We notice that, by integrating each of the Maxwell Equations (\ref{mw3}) -- (\ref{mw6}) on $\mathbb{S}^2$, we have, for all $(u,v) \in \{u \geq u_0\} \cap \{v \geq v_0\}$,
\begin{equation}
\sigma_s (u,v) = \sigma_s (u_0, v_0), \qquad \rho_s(u,v) = \rho_s(u_0, v_0).
\end{equation}
Let us restrict our attention to the estimates for $\rho$. The estimates for $\sigma$ can be obtained in a very analogous manner.

\subsection*{Step 1: region of bounded $r$}
We first consider the region $\{r \leq R\} \cap \mathcal{J}^+(C_{u_0, v_0})$.
Estimates (\ref{eq:decaydue}) and (\ref{eq:decaylie}) imply:
\begin{equation}\label{eq:rhofdec}
\fara^\infty[\snabla \rho](P(u))+ \sum_{i=1}^3\fara^\infty[\snabla \slie_{\Omega_i} \rho](P(u)) \leq C u^{-2} (M_{\rho,\sigma})^2.
\end{equation}
Here, $M_{\rho,\sigma}$ is as in Equation~(\ref{eq:mrhosdef}).

Let $2M < r_c < R$. We now use the Sobolev Lemma in the Appendix \ref{lem:sobforrho} to obtain, if $\tilde v \geq v_0$,
\begin{equation}
\begin{aligned}
\sup_{( u,  \tilde v, \omega) \in \conminus_{\tilde v} \cap \{r_c \leq r \leq R\}}|\rho(u, \tilde v, \omega)-(r(u, \tilde v))^{-2}\rho_s ( u, \tilde v)|^2  \\ \leq C \int_{\conminus_{\tilde v} \cap \{r_c \leq r \leq R\}}(|\snabla \snabla \rho|^2 +|(1-\mu)^{-1}\snabla_{\lbar} \snabla \snabla \rho|^2) (1-\mu)\desphere \de u.
\end{aligned}
\end{equation}
By writing the expression of the Lie derivative, we obtain that there exist positive constants $C_0$, $C_1$, $C_2$ such that, in the region $\{r \leq R\}$,
\begin{equation}
\sum_{i=1}^3 |(1-\mu)^{-1}\snabla_{\lbar}\slie_{\Omega_i} \snabla \rho|^2 \geq |(1-\mu)^{-1}\snabla_\lbar \snabla \snabla \rho| - C_1 |\snabla \rho|^2 - C_2 |\snabla \snabla \rho|^2
\end{equation}

This implies that, possibly renaming $C_1$ and $C_2$, the following inequality holds (recall: $[\slie_{\Omega_i}, \snabla] = 0$):
\begin{equation*}
\begin{aligned}
\int_{\conminus_{\tilde v} \cap \{r_c \leq r \leq R\}}(1-\mu)(|\snabla \snabla \rho|^2 + |(1-\mu)^{-1}\snabla_{\lbar} \snabla \snabla \rho|^2) \desphere \de u \\
 \leq C_1 \fara^\infty[\snabla \rho](P(\tilde u))+ C_2\sum_{i=1}^3\fara^\infty[\snabla \slie_{\Omega_i} \rho]P(\tilde u),
\end{aligned}
\end{equation*}
with $\tilde v - \tilde u = 2R_*$. (\ref{eq:rsnear}) now easily follows from the previous display, along with (\ref{eq:rhofdec}), in the region $\{u \geq u_0\} \cap \{v \geq v_0\}$.
\subsection*{Step 2: region of unbounded $r$}
Let $v \geq v_1 \geq v_0$, $u = u_1 \geq u_0$, and let $v_1 - u_1 = 2R_*$, and $v-u = 2r_*$.
We begin by noticing, by Lemma~\ref{lem:sobsphere} and the definition of $\phi$:
\begin{equation}
\begin{aligned}
|\rho- r^{-2}\rho_s|^2  \lesssim \int_{\mathbb{S}^2} r^4 |\snabla \snabla \rho|^2 \de \mathbb{S}^2 = \int_{\mathbb{S}^2} r^{-2} |\snabla \phi|^2 \de \mathbb{S}^2.
\end{aligned}
\end{equation}
Now, the definition of Lie derivative yields:
\begin{equation*}
\slie_{\Omega_i} \phi_A = \snabla_{\Omega_i} \phi_A + \phi(\snabla_A \Omega_i).
\end{equation*}
This implies the pointwise bound:
\begin{equation}\label{eq:liespez}
\begin{aligned}
|\snabla_L \snabla_{\Omega_i}\phi|^2 = \snabla_L \snabla_{\Omega_i}\phi_A \snabla_L \snabla_{\Omega_i}\phi^A \\
= \snabla_L (\slie_{\Omega_i} \phi_A - \phi(\snabla_A \Omega_i)) \snabla_L (\slie_{\Omega_i} \phi^A - \phi(\snabla^A \Omega_i)) \lesssim |\snabla_L \slie_{\Omega_i}\phi|^2 +  |\snabla_L \phi|^2.
\end{aligned}
\end{equation}
Furthermore, we have
\begin{equation}\label{eq:mixder}
|\snabla_L \snabla \phi|^2 \lesssim \sum_{i=1}^3 \frac 1 {r^2} |\snabla_L \snabla_{\Omega_i}\phi^2|+ \frac 1 {r^2}|\snabla \phi|^2.
\end{equation}
We now use the Cauchy--Schwarz inequality, as well as the computation in Equation~(\ref{eq:liespez}) to obtain the following chain of estimates.
\begin{equation*}
\begin{aligned}
r\int_{\mathbb{S}^2} |\snabla \phi|^2(u,v,\omega) \de \mathbb{S}^2(\omega) \\ \lesssim 
\int_{\conplus_{u}\cap \{r \geq R\}} |\snabla \phi|^2(u,\tilde v,\omega) \de \mathbb{S}^2 \de \tilde v + 
\int_{\conplus_{u}\cap \{r \geq R\}}r(u, \tilde v) \left(|\snabla \phi| |\snabla_L \snabla \phi| \right)(u,\tilde v,\omega)\de \mathbb{S}^2(\omega)\de \tilde v \\ 
\stackrel{(\ref{eq:mixder})}{\lesssim}
\int_{\conplus_{u}\cap \{r \geq R\}}  \sum_{i=1}^3 \left(|\snabla \phi| |\snabla_L \snabla_{\Omega_i} \phi| \right)\de \mathbb{S}^2(\omega) \de \tilde v + \int_{\conplus_{u}\cap \{r \geq R\}} |\snabla \phi|^2\de \mathbb{S}^2(\omega) \de \tilde v 
\\ \stackrel{(\ref{eq:liespez})}{\lesssim}
(M_{\rho,\sigma})^2(|u|+1)^{-2}
+  \left( \int_{\conplus_{u}\cap \{r \geq R\}} |\snabla \phi|^2 \de \mathbb{S}^2 \de \tilde v \right)^{\frac 1 2}  \left( \int_{\conplus_{u}\cap \{r \geq R\}} \sum_{i=1}^3 |\snabla_L  \slie_{\Omega_i }\phi|^2 \de \mathbb{S}^2 \de \tilde v \right)^{\frac 1 2}\\
+  \left( \int_{\conplus_{u}\cap \{r \geq R\}}|\snabla \phi|^2 \de \mathbb{S}^2 \de \tilde v \right)^{\frac 1 2}  \left( \int_{\conplus_{u}\cap \{r \geq R\}}  |\snabla_L  \phi|^2 \de \mathbb{S}^2 \de \tilde v \right)^{\frac 1 2}
\\
\lesssim (M_{\rho,\sigma})^2(|u|+1)^{-2}.
\end{aligned}
\end{equation*}
\begin{remark}
Notice that the flux $\fara^\infty$ allows us to estimate not only the $\snabla_L \phi$ term, but also the $\snabla \phi$ term.
\end{remark}
This implies finally that
\begin{equation}\label{eq:decayrhot}
|\rho(u,v,\omega) - r^{-2}\rho_s(u_0,v_0)| \leq C r^{- \frac 3 2} (|u|+1)^{-1}  M_{\rho,\sigma}.
\end{equation}
Similarly, we compute
\begin{equation}\label{eq:phifar}
\begin{aligned}
r^2\int_{\mathbb{S}^2} |\snabla \phi|^2(u,v,\omega) \de \mathbb{S}^2(\omega) \lesssim \int_{\mathbb{S}^2} \sum_{i = 1}^3 \gbar\left(\snabla_{\Omega_i}\phi,\snabla_{\Omega_i}\phi \right)(u,v, \omega)\desphere(\omega) \\\lesssim 
 \int_{\conplus_{u}\cap \{r \geq R\}} |\snabla\phi|^2(u,\tilde v,\omega)\de \tilde v \desphere(\omega) \\
 +  \int_{\conplus_{u}\cap \{r \geq R\}} \snabla_L( \sum_{i=1}^3\gbar\left(\snabla_{\Omega_i}\phi,\snabla_{\Omega_i}\phi \right)) (u,\tilde v,\omega) \de \tilde v \desphere(\omega).
\end{aligned}
\end{equation}
The first term in the right hand side of the last display is estimated by (\ref{eq:decaydue}).
We again use Equation~(\ref{eq:liespez}) to obtain, for the second term in (\ref{eq:phifar}), the following chain of estimates.
\begin{equation*}
\begin{aligned}
\left| \int_{\conplus_{u}\cap \{r \geq R\}} \snabla_L( \sum_{i=1}^3\gbar\left(\snabla_{\Omega_i}\phi,\snabla_{\Omega_i}\phi \right)) \de \tilde v\desphere(\omega)\right| \\ \lesssim 
\sum_{i=1}^3 \left( \int_{\conplus_{u}\cap \{r \geq R\}} r^{-2}|\snabla_{\Omega_i}\phi|^2 \desphere(\omega)\de \tilde  v\right)^{\frac 1 2} \times \left( \int_{\conplus_{u}\cap \{r \geq R\}} r^2|\snabla_L \snabla_{\Omega_i}\phi|^2 \de \tilde v\desphere(\omega)\right)^{\frac 1 2} \\ \lesssim 
\sum_{i=1}^3 \left(  \int_{\conplus_{u}\cap \{r \geq R\}} |\snabla \phi|^2 \desphere(\omega)\de \tilde v\right)^{\frac 1 2} \times \left( \int_{\conplus_{u}\cap \{r \geq R\}} r^2 ( |\snabla_L \slie_{\Omega_i}\phi|^2 +  |\snabla_L \phi|^2)\de \tilde v\desphere(\omega)\right)^{\frac 1 2} \\
\lesssim  (|u|+1)^{-1} (M_{\rho,\sigma})^2.
\end{aligned}
\end{equation*}
The last inequality follows by the decay estimates (\ref{eq:decaylie}), (\ref{eq:decaydue}), and the uniform boundedness estimate (\ref{pdue}) applied to the flux containing the $L$-derivative.

This implies the decay rate for $\rho$:
\begin{equation}
|\rho(u,v,\omega)-r^{-2}\rho_s(u_0,v_0)| \leq C (|u|+1)^{-\frac 1 2} r^{-2}M_{\rho,\sigma}.
\end{equation}
This concludes the proof of the Theorem.
\end{proof}

\appendix

\section{Derivation of the null decomposition of the Maxwell system and of the spin $\pm$1 Teukolsky equations}\label{sec:dernulldec}

\begin{proof}[Proof of Proposition~\ref{prop:mwnull}] The proof is by calculation in the null frame and Hodge dualization.
\subsection*{Step 1: the full Maxwell system}
Recall: $L:=\partial_{r_\star} + \partial_t = \partial_u$, $\lbar:= \partial_t - \partial_{r_\star} = \partial_u$.
In the following calculations, when an uppercase letter appears, it signifies contraction with one of the basis elements $\partial_{\theta^A}$, $\partial_{\theta^B}$, where $\theta^A$ and $\theta^B$ are local coordinates for $\mathbb{S}^2$.

Now set $e_A := \partial_{\theta^A}$, and $e_B := \partial_{\theta^B}$. Let us now calculate
\begin{equation}\label{eq:projected}
\begin{aligned}
	\nabla_{e_A} e_B &= \slashed{\nabla}_{e_A} e_B - \frac 1 2 (1-\mu)^{-1}\left(g(\nabla_{e_A} e_B, \lbar)L + g(\nabla_{e_A} e_B, L)\lbar \right)\\
	&= \slashed{\nabla}_{e_A} e_B + \frac 1 2 (1-\mu)^{-1}\left(g(e_B, \nabla_{e_A}\lbar)L + g(e_B, \nabla_{e_A}L)\lbar \right)\\
	&= \slashed{\nabla}_{e_A} e_B + \frac 1 2 (1-\mu)^{-1}\left(g(e_B, -\frac{1-\mu}{r}e_A)L + g(e_B, \frac{1-\mu}{r}e_A)\lbar \right) \\
	&= \slashed{\nabla}_{e_A} e_B + \frac 1 {2r} (\lbar - L) \slashed{g}_{AB}.
\end{aligned}
\end{equation}
We begin by calculating:
\begin{equation}\label{eq:abl}
\begin{aligned}
 \nabla_A F_{BL} &= e_A F(e_B,L)-F(\nabla_{e_A} e_B, L)-F(e_B, \nabla_{e_A} L) \\ &=
  e_A F(e_B,L)-F(\snabla_{e_A} e_B, L) -\frac 1 {2r} F((\lbar-L)\slashed{g}_{AB}, L) -F(e_B, \frac{1-\mu}{r}e_A) \\&=
  \snabla_A \alpha_B - \frac{1-\mu}{r} \rho \slashed{g}_{AB}+ \frac{1-\mu}{r}\sigma \slashed{\varepsilon}_{AB}.
\end{aligned}
\end{equation}
Furthermore, we have:
\begin{equation}\label{eq:ablbar}
\begin{aligned}
 \nabla_A F_{B\lbar} &= e_A F(e_B,\lbar)-F(\nabla_{e_A} e_B, \lbar)-F(e_B, \nabla_{e_A} \lbar) \\ &=
  e_A F(e_B,\lbar)-F(\snabla_{e_A} e_B, \lbar) -\frac 1 {2r} F((\lbar-L)\slashed{g}_{AB}, \lbar) -F(e_B, -\frac{1-\mu}{r}e_A) \\&=
  \snabla_A \alphabar_B - \frac{1-\mu}{r} \rho \slashed{g}_{AB}- \frac{1-\mu}{r}\sigma \slashed{\varepsilon}_{AB}.
\end{aligned}
\end{equation}
By the null decomposition of the Hodge dual of $F$, it follows that
\begin{equation*}
\begin{aligned}
 \nabla_A \fdual_{BL} &= \slashed{\varepsilon}_{CB} \snabla_A \alpha^C - \frac{1-\mu}{r} \sigma \slashed{g}_{AB}-\frac{1-\mu}{r}\rho \slashed{\varepsilon}_{AB}\\
 \nabla_A \fdual_{B\lbar} &= -\slashed{\varepsilon}_{CB} \snabla_A \alphabar^C - \frac{1-\mu}{r} \sigma \slashed{g}_{AB}+\frac{1-\mu}{r}\rho \slashed{\varepsilon}_{AB}.
\end{aligned}
\end{equation*}
We also have:
\begin{equation}\label{eq:llbar}
\begin{aligned}
	\frac 1 2 \nabla_A F_{\lbar L} &= \frac 1 2 \nabla_A(F(\lbar, L)) - \frac 1 2 F(\nabla_A \lbar, L) - \frac 1 2 F(\lbar, \nabla_A L)\\
	 &=
	\snabla_A \rho + \frac 1 2 \frac{1-\mu}{r} (\alphabar_A + \alpha_A).
\end{aligned}
\end{equation}
Again, by taking the Hodge dual,
\begin{equation*}
	\frac 1 2 \nabla_A \fdual_{\lbar L} = \snabla_A \rho + \frac 1 2 \frac{1-\mu}{r} (- \slashed{\varepsilon}_{CA }\alphabar^C + \svol_{CA }\alpha^C)
\end{equation*}
Now, use Equations (\ref{eq:projected}) and (\ref{eq:llbar}) to get:
\begin{equation} \label{premw1}
\begin{aligned}
	\nabla_L F_{\lbar A} &+ \nabla_A F_{L\lbar} + \nabla_\lbar F_{AL} = 
	-\nabla_L F_{A\lbar} + \nabla_A F_{L\lbar} + \nabla_{\lbar} F_{AL} \\&=
	- \snabla_L \alphabar_A + \snabla_\lbar \alpha_A - 2(1-\mu) \snabla_A \rho -F(\nabla_A L, \lbar) - F(L, \nabla_A \lbar) \\&=
	- \snabla_L \alphabar_A + \snabla_\lbar \alpha_A - 2(1-\mu) \snabla_A \rho -\frac {1-\mu}{r} \alpha_A - \frac {1-\mu}{r} \alphabar_A  \\ &=
	-\frac 1 r \snabla_L(r \alphabar_A)+\frac 1 r \snabla_\lbar(r \alpha_A)-2(1-\mu) \snabla_A \rho = 0.
\end{aligned}
\end{equation}
By taking the dual of the last equation, we obtain
\begin{equation}
\begin{aligned}
	\nabla_L \fdual_{\lbar A} + \nabla_A \fdual_{L\lbar} + \nabla_\lbar \fdual_{AL} \\
	 = 
	\frac 1 r \snabla_L(r \svol_{BA} \alphabar^B)+ \frac 1 r \snabla_{\lbar}(r \svol_{BA} \alpha^B) - 2 (1 -\mu)\snabla_A \sigma = 0
\end{aligned}
\end{equation}
The last display is equivalent to:
\begin{equation}
\begin{aligned}
	 -\frac 1 r \snabla_L(r \alphabar_A)-\frac 1 r \snabla_\lbar(r \alpha_A)+2(1-\mu)\svol_{AB} \snabla^B \sigma = 0.
\end{aligned}
\end{equation}
We therefore obtain Equations (\ref{mw1}) and (\ref{mw2}):
\begin{equation} \boxed{
\begin{aligned}
	\frac 1 r \snabla_L (r\alphabar_A) + (1-\mu) (\snabla_A \rho - \svol_{AB} \snabla^B \sigma) = 0,\\
	\frac 1 r \snabla_\lbar (r\alpha_A) - (1-\mu) (\snabla_A \rho + \svol_{AB} \snabla^B \sigma) = 0.
\end{aligned}}
\end{equation}
Now, let us calculate, with the aid of (\ref{eq:ablbar}), the following expression:
\begin{equation}
\begin{aligned}
	&\nabla_A F_{B\lbar} + \nabla_\lbar F_{AB} + \nabla_B F_{\lbar A}\\ &=
	\snabla_A \alphabar_B - \frac{1-\mu}{r} \rho \gbar_{AB} - \frac{1-\mu}{r} \sigma \svol_{AB}+\\
	&-\snabla_B \alphabar_A + \frac{1-\mu}{r} \rho \gbar_{AB} - \frac{1-\mu}{r} \sigma \svol_{AB} + \svol_{AB} \snabla_\lbar \sigma=0.
\end{aligned}	
\end{equation}
Contracting the last display with $\svol^{AB}$ we obtain the following equation:
\begin{equation}\boxed{
\begin{aligned}
 	\curl \alphabar - 2 \frac{1-\mu}{r} \sigma + \snabla_\lbar \sigma = 0.
\end{aligned}}
\end{equation}
By taking the dual of the last equation, we obtain furthermore
\begin{equation} \boxed{
\begin{aligned}
 	-\dive \alphabar + 2 \frac{1-\mu}{r} \rho - \snabla_\lbar \rho = 0.
\end{aligned}}
\end{equation}
We finally compute, with the aid of (\ref{eq:abl}), the following expression:
\begin{equation}
\begin{aligned}
	\nabla_A F_{BL} + \nabla_L F_{AB} + \nabla_B F_{LA}=
	\snabla_A \alpha_B - \frac{1-\mu}{r} \rho \gbar_{AB} + \frac{1-\mu}{r} \sigma \svol_{AB}+\\
	-\snabla_B \alpha_B + \frac{1-\mu}{r} \rho \gbar_{AB} + \frac{1-\mu}{r} \sigma \svol_{AB} + \snabla_L \sigma = 
 	\curl \alpha + 2 \frac{1-\mu}{r} \sigma + \snabla_L \sigma = 0.
\end{aligned}
\end{equation}
We obtain:
\begin{equation*}
\boxed{\curl \alpha + 2 \frac{1-\mu}{r} \sigma + \snabla_L \sigma = 0.}
\end{equation*}
Taking the dual of the last expression yields
\begin{equation}\boxed{
\begin{aligned}
 	\dive \alpha - 2 \frac{1-\mu}{r} \rho - \snabla_L\rho= 0.
\end{aligned}}
\end{equation}
This concludes the derivation of the null decomposition of the Maxwell system.

\subsection*{Step 2: spin $\pm$1 Teukolsky equations}
We now turn to the derivation of the spin $\pm$1 Teukolsky equations.
Recall the following facts, which can be checked by explicit calculation:
\begin{equation*}
\begin{aligned}
&\snabla_L \svol_{AB} = 0, \qquad \snabla_\lbar \svol_{AB} = 0,\qquad \snabla_L \gbar_{AB} = 0, \qquad \snabla_\lbar \gbar_{AB} = 0, \\ &[r\snabla_A, \snabla_L] = [r\snabla_A, \snabla_\lbar] = 0.
\end{aligned}
\end{equation*}
Operate now on Equation~(\ref{mw1}) with $\lbar$, in order to obtain
\begin{align*}
\snabla_\lbar \snabla_L (r \alphabar_A) + \lbar(1-\mu) r (\snabla_A \rho - \svol_{AB}\snabla^B \sigma) + \underbrace{(1-\mu) r (\snabla_A \snabla_\lbar \rho - \svol_{AB}\snabla^B \snabla_\lbar \sigma)}_{(*)} = 0.
\end{align*}
Using the expression for $\lbar \rho$ (\ref{mw4}) and $\lbar \sigma$ (\ref{mw3}), we obtain
\begin{align*}
(*) = (1-\mu) r \left(2 \frac{1-\mu}{r}\snabla_A \rho -2 \frac{1-\mu}{r}\svol_{AB}\snabla^B \sigma\right)+(1-\mu)r\underbrace{(-\snabla_A \dive \alphabar + \svol_{AB}\snabla^B \curl \alphabar)}_{:= -(\text{ANG})}.
\end{align*}
Hence, 
\begin{align*}
\snabla_\lbar \snabla_L (r \alphabar_A) -(1-\mu)\frac{2M}{r^2} r (\snabla_A \rho - \svol_{AB} \snabla^B \sigma) + 2(1-\mu)^2 (\snabla_A \rho - \svol_{AB} \snabla^B \sigma)\\
+(1-\mu)r \cdot \text{ANG} = 0.
\end{align*}
This implies, upon substitution using (\ref{mw1}) again,
\begin{align*}
\snabla_\lbar \snabla_L (r \alphabar_A) -\frac 2 r \left(1-\frac{3M}{r}\right)\snabla_L(r \alphabar_A)-(1-\mu)r \cdot \text{ANG} = 0.
\end{align*}
Using now Lemma~\ref{lem:angular}, we obtain the claim for $\alphabar$. The reasoning for $\alpha$ is analogous.
\end{proof}

\begin{lemma}\label{lem:angular}
Let $\mathbb{S}^2$ be endowed with the standard metric $g_{\mathbb{S}^2}$ and denote by $D$ the Levi-Civita connection associated to such metric on $\mathbb{S}^2$. Let $\varepsilon_{AB}$ be the standard volume form. Let $\omega$ be a smooth 1-form on $\mathbb{S}^2$, let $\daiv$ and $\cherl$ the associated covariant divergence and curl: 
$$
\daiv \omega := g^{AB}D_A \omega_B  \qquad \cherl \omega := \varepsilon^{AB} D_A \omega_B.
$$
Let $\Delta:= g_{\mathbb{S}^2}^{AB} D_A D_B$ be the covariant Laplacian. Then, we have:
\begin{equation}\label{eq:angulteu}
D_A \daiv \omega - \varepsilon_{AB} D^B \cherl \omega = \Delta \omega_A - \omega_A.
\end{equation}
\end{lemma}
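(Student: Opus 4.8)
The plan is to reduce the identity to a single application of the Ricci commutation formula, after using the two-dimensionality of $\mathbb{S}^2$ to convert the curl term into second covariant derivatives of $\omega$. First I would expand the two terms on the left-hand side. Since $D$ is the Levi-Civita connection it annihilates both $g_{\mathbb{S}^2}$ and $\varepsilon$, so every derivative falls on $\omega$ and
\[
D_A \daiv\omega = g^{BC} D_A D_B \omega_C, \qquad
\varepsilon_{AB}D^B\cherl\omega = \varepsilon_A{}^{E}\,\varepsilon^{CD}\, D_E D_C \omega_D ,
\]
where $g^{BC}$ denotes $g_{\mathbb{S}^2}^{BC}$.

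The key algebraic input is the identity $\varepsilon_{AB}\varepsilon^{CD} = \delta_A^C\delta_B^D - \delta_A^D\delta_B^C$, valid in two dimensions, equivalently $\varepsilon_A{}^E\varepsilon^{CD} = g^{ED}\delta_A^C - g^{EC}\delta_A^D$. Substituting this into the curl term and relabelling indices collapses it to
\[
\varepsilon_{AB}D^B\cherl\omega = g^{BC} D_B D_A\omega_C - \Delta\omega_A,
\]
where $\Delta\omega_A = g^{BC}D_B D_C\omega_A$ is the connection Laplacian from the statement. Subtracting, the two pure second-derivative terms differ only in the order of $D_A$ and $D_B$, so they combine into a commutator:
\[
D_A\daiv\omega - \varepsilon_{AB}D^B\cherl\omega = g^{BC}\bigl(D_A D_B - D_B D_A\bigr)\omega_C + \Delta\omega_A .
\]

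Finally I would invoke the Ricci identity $[D_A,D_B]\omega_C = -R^{D}{}_{CAB}\omega_D$ together with the constant-curvature form of the Riemann tensor on the unit round sphere, $R_{ABCD} = g_{AC}g_{BD} - g_{AD}g_{BC}$ (Gauss curvature $1$). A short contraction gives $g^{BC}R^{D}{}_{CAB} = \delta^D_A$, hence $g^{BC}[D_A,D_B]\omega_C = -\omega_A$, and the desired $\Delta\omega_A - \omega_A$ follows at once.

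The computation is short, and the only real obstacle is bookkeeping: one must fix a single sign convention for the Riemann tensor and for the Ricci identity on $1$-forms and carry it consistently through the $\varepsilon$-contraction, since a slip there flips the sign of the zeroth-order term. As a consistency check one may compare with the Weitzenb\"ock formula: the left-hand side equals $-\Delta_{\mathrm{Hodge}}\omega_A$, while on $\mathbb{S}^2$ one has $\Delta_{\mathrm{Hodge}}\omega = -\Delta\omega + \mathrm{Ric}(\omega) = -\Delta\omega + \omega$, reproducing the claim. The normalization $K=1$ is precisely what produces the coefficient $-1$ of $\omega_A$; this is the term responsible, after the rescaling in the proof of Proposition~\ref{prop:mwnull}, for the zeroth-order coefficient $(1-\mu)/r^2$ appearing in the spin $\pm1$ Teukolsky equations.
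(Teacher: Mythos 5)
Your proof is correct, and it takes a genuinely different route from the paper's. The paper proves Lemma~\ref{lem:angular} by brute force: it fixes a point $p$, adapts spherical coordinates $(\theta,\varphi)$ so that $p=(\pi/2,0)$ and $V=\partial_\theta$, and explicitly expands $D_A \daiv\omega$ and $\varepsilon_{AB}D^B\cherl\omega$ term by term, recognizing the covariant Laplacian in the surviving combination. You instead stay covariant throughout: the two-dimensional identity $\varepsilon_A{}^E\varepsilon^{CD}=g^{ED}\delta_A^C-g^{EC}\delta_A^D$ converts the curl term into $g^{BC}D_BD_A\omega_C-\Delta\omega_A$, so the whole left-hand side collapses to $\Delta\omega_A+g^{BC}[D_A,D_B]\omega_C$, and the Ricci identity with the constant-curvature form of the Riemann tensor gives $g^{BC}[D_A,D_B]\omega_C=-\omega_A$ (I checked the contraction: $g^{BC}R^D{}_{CAB}=2\delta^D_A-\delta^D_A=\delta^D_A$, and your curvature normalization is self-consistent since it yields $\mathrm{Ric}=g$ on the unit sphere, in agreement with the paper's Appendix~B conventions). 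What each approach buys: the paper's computation is elementary and requires no curvature identities, but is tied to a specific coordinate frame and point, with nontrivial Christoffel bookkeeping; your argument is shorter, manifestly frame-independent, and generalizes verbatim to any oriented surface, where the same steps produce $\Delta\omega_A-K\omega_A$ with $K$ the Gauss curvature, making transparent that the zeroth-order term in the spin $\pm1$ Teukolsky equations is a curvature term. Your Weitzenb\"ock cross-check is also consonant with the paper's own Appendix~B identity $\Delta\omega=-\Delta_H\omega+\mathrm{Ric}(\omega)$ (modulo the orientation convention entering $\star$ on one-forms, which you rightly flag but which does not affect the main computation, as both occurrences of $\varepsilon$ appear paired there).
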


\begin{proof}
Let $p \in \mathbb{S}^2$. Let us fix a vector $V \in T_p \mathbb{S}^2$, and let us set up coordinates $(\theta, \varphi)$ such that the coordinates of $p$ are $(\pi/2, 0)$, $\partial_\theta|_p = V$, and finally the metric in these local coordinates is represented by the two-form $\de \theta \otimes \!\!\de \theta + \sin^2 \theta \de \varphi \otimes \! \!\de \varphi$.
Let
\begin{equation}
\mathcal{T}_A := D_A \daiv \omega - \varepsilon_{AB} D^B \cherl \omega.
\end{equation} Then, since the left hand side of (\ref{eq:angulteu}) is a tensor, we have
\begin{align*}
\mathcal{T}(\partial_\theta)|_p = \mathcal{T}(V)= \underbrace{D_{\partial_\theta} \daiv \omega}_{(1)} - \underbrace{\varepsilon_{\theta B} D^B \cherl \omega}_{(2)}.
\end{align*}
\begin{align*}(1) = \left.
\partial_\theta \left(\partial_\theta \omega (\partial_\theta)-\omega(D_{\partial_\theta}\partial_\theta)+\frac{\partial_\varphi}{\sin \theta}\omega \left(\frac{\partial_\varphi}{\sin \theta}\right)-\omega\left(D_{\frac{\partial_\varphi }{\sin \theta}} \frac{\partial_\varphi }{\sin \theta}\right)\right)\right|_{(\pi/2, 0)} \\
= \left.\left(\partial_\theta \partial_\theta \omega(\partial_\theta) + \partial_\theta \left(\frac{\partial_\varphi}{\sin \theta}\omega \left(\frac{\partial_\varphi}{\sin \theta}\right) \right) - \omega(\partial_\theta)\right)\right|_{(\pi/2, 0)}
\end{align*}
\begin{align*}-(2) = \left.-\frac{\partial_\varphi}{\sin \theta} \left((D_{\partial_\theta} \omega)\left(\frac{\partial_\varphi}{\sin \theta} \right) -(D_\frac{\partial_\varphi}{\sin \theta} \omega)\left({\partial_\theta} \right)\right) \right|_{(\pi/2, 0)}  \\
= - \left.\frac{\partial_\varphi}{\sin \theta } \left(\partial_\theta \omega \left(\frac{\partial_\varphi}{\sin \theta}\right) + \frac {\cos \theta} {\sin^2 \theta } \omega(\partial_\varphi) \right)\right|_{(\pi/2, 0)} +\left. \frac{\partial_\varphi}{\sin \theta }\frac{\partial_\varphi}{\sin \theta } \omega(\partial_\theta)\right|_{(\pi/2, 0)}.
\end{align*}
Upon summation, 
\begin{equation*}
\mathcal{T}(\partial_\theta)|_p =
\left.\left(\partial_\theta \partial_\theta \omega(\partial_\theta)- \omega(\partial_\theta)\right)\right|_{(\pi/2, 0)}
 +\left. \frac{\partial_\varphi}{\sin \theta }\frac{\partial_\varphi}{\sin \theta } \omega(\partial_\theta)\right|_{(\pi/2, 0)}.
\end{equation*}
A calculation of the covariant Laplacian in coordinates $(\theta, \varphi)$ at $(\pi/2, 0)$ yields the claim.
\end{proof}

\section{A Poincar\'e lemma}
Let us first set some notation for this Section.
\begin{itemize}
\item Consider the sphere $\mathbb{S}^2$, with the standard metric $g_{\mathbb{S}^2}$ and the standard volume form $\varepsilon$.
\item Denote by $D$ the Levi-Civita connection on the sphere $\mathbb{S}^2$ related to the standard metric $g_{\mathbb{S}^2}$.
\item Denote by $\Delta$ the covariant Laplacian associated to $D$, $\Delta := g_{\mathbb{S}^2}^{AB} D_A D_B$.
\item Denote by $\Delta_H$ the Hodge-de-Rham laplacian:
$$
\Delta_H := d \delta + \delta d,
$$
where $\delta := - \star d \star$ is the codifferential. Here, if $\omega$ is a one-form, the Hodge dual $(\star \omega)$ is still a one-form, defined to satisfy
$$
(\star \omega)_A := \varepsilon_{AB} \omega^B.
$$ 
\item Let our convention on the Riemann tensor be
$$
R(X,Y)Z := (D_X D_Y - D_Y D_X - D_{[X,Y]})Z,
$$
where $X, Y, Z$ are smooth vectorfields. Consequently we denote,
$$
\text{Rm}_{ABCD} := g_{\mathbb{S}^2}(R(\partial_{\theta^A}, \partial_{\theta^B})\partial_{\theta^C} , \partial_{\theta^D}).
$$
Here, $(\theta^1, \theta^2)$ is a local coordinate system on $\mathbb{S}^2$, and the previous equation defines uniquely the 4-covariant tensor $\text{Rm}$.

\item We define the Ricci tensor so that $\text{Ric}_{AB} := \text{Rm}\indices{_A_C^C_B}$.

\item Under these conventions, if $\omega_A$ is a one-form, the commutation relation holds:
\begin{equation*}
D_A D_B \omega_C - D_B D_A \omega_C = - \text{Rm}\indices{_{ABC}^D} \omega_D.
\end{equation*}
\end{itemize}
The main purpose of this section is to give a proof of the following elementary inequality.

\begin{lemma}[Poincar\'e inequality for 1-forms on $\mathbb{S}^2$]\label{lem:poinca}
Let $\omega$ be a smooth $1$-form on $\mathbb{S}^2$. We have the inequality
\begin{equation}
\int_{\mathbb{S}^2} |D \omega|^2 \desphere \geq \int_{\mathbb{S}^2} |\omega|^2 \desphere.
\end{equation}
\end{lemma}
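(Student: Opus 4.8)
The plan is to combine the Bochner--Weitzenb\"ock identity with the spectral gap of the scalar Laplacian on $\mathbb{S}^2$. Since the unit sphere has Gaussian curvature $1$, its Ricci tensor satisfies $\text{Ric} = g_{\mathbb{S}^2}$, so the Weitzenb\"ock formula for $1$-forms reads $\Delta_H \omega = D^* D \omega + \omega$, where $D^* D$ is the connection (rough) Laplacian. This is the structural input that makes the zeroth-order term appear with exactly the right constant, and it is what ties the statement to the geometry of the round metric.

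First I would integrate this identity against $\omega$ over $\mathbb{S}^2$. Using that $d$ and $\delta$ are formal adjoints on the closed manifold $\mathbb{S}^2$, the pairing of the left-hand side becomes $\int_{\mathbb{S}^2}(|d\omega|^2 + |\delta\omega|^2)\desphere$, while $\int_{\mathbb{S}^2}\langle D^*D\omega,\omega\rangle\desphere = \int_{\mathbb{S}^2}|D\omega|^2\desphere$. This yields the Bochner identity $\int_{\mathbb{S}^2}|D\omega|^2\desphere = \int_{\mathbb{S}^2}(|d\omega|^2 + |\delta\omega|^2)\desphere - \int_{\mathbb{S}^2}|\omega|^2\desphere$. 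I note that the very same identity is available directly from Lemma~\ref{lem:angular}: pairing the relation $D_A\daiv\omega - \varepsilon_{AB}D^B\cherl\omega = \Delta\omega_A - \omega_A$ with $\omega^A$ and integrating by parts gives $\int_{\mathbb{S}^2}|D\omega|^2\desphere + \int_{\mathbb{S}^2}|\omega|^2\desphere = \int_{\mathbb{S}^2}((\daiv\omega)^2 + (\cherl\omega)^2)\desphere$, which is the same statement once one identifies $\daiv\omega = -\delta\omega$ and $\cherl\omega = \star d\omega$.

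The remaining task is to prove the spectral bound $\int_{\mathbb{S}^2}((\daiv\omega)^2 + (\cherl\omega)^2)\desphere \geq 2\int_{\mathbb{S}^2}|\omega|^2\desphere$. Since $H^1(\mathbb{S}^2) = 0$, Hodge theory provides functions $f,g$ with $\omega = df + \star dg$ and no harmonic remainder. The exact and coexact pieces are $L^2$-orthogonal, so $\int_{\mathbb{S}^2}|\omega|^2\desphere = \int_{\mathbb{S}^2}(|df|^2 + |dg|^2)\desphere$, while $\daiv\omega$ and $\cherl\omega$ reduce to the scalar Laplacians of $f$ and $g$. Expanding $f$ and $g$ in spherical harmonics and using that the first nonzero eigenvalue of $-\Delta$ on functions is $\lambda_1 = 2$ (the $\ell=1$ mode), each eigenvalue satisfies $\lambda_\ell^2 \geq 2\lambda_\ell$, which produces the factor $2$. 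Feeding this into the Bochner identity gives $\int_{\mathbb{S}^2}|D\omega|^2\desphere \geq 2\int_{\mathbb{S}^2}|\omega|^2\desphere - \int_{\mathbb{S}^2}|\omega|^2\desphere = \int_{\mathbb{S}^2}|\omega|^2\desphere$, as claimed.

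The computational heart, the Bochner identity, has effectively been prepared by Lemma~\ref{lem:angular}, so the integration by parts is routine. The genuinely delicate points are the bookkeeping of sign conventions (the geometers' $\Delta_H = d\delta + \delta d$ versus the analysts' $\Delta = g_{\mathbb{S}^2}^{AB}D_AD_B$, together with the sign in the Weitzenb\"ock formula) and the two facts special to the round sphere: $\text{Ric} = g_{\mathbb{S}^2}$ and the spectral gap $\lambda_1(-\Delta) = 2$ combined with $H^1(\mathbb{S}^2) = 0$. I expect the spectral-gap step, rather than the integration by parts, to be the main obstacle to present cleanly and self-containedly.
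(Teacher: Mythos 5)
Your proof is correct and follows essentially the same route as the paper: the Bochner--Weitzenb\"ock identity with $\text{Ric} = g_{\mathbb{S}^2}$, the Hodge decomposition $\omega = df + \star dg$ (no harmonic part since $H^1(\mathbb{S}^2)=0$), and the spectral gap $\lambda_1 = 2$ of the scalar Laplacian. The only cosmetic difference is that you reduce directly to the scalar inequality $\int_{\mathbb{S}^2}(\Delta f)^2\,\desphere \geq 2\int_{\mathbb{S}^2}|df|^2\,\desphere$ via Parseval, whereas the paper routes the same spectral input through its Lemma~\ref{lem:spec} on eigenforms of $\Delta_H$ together with an $H^2$ spherical-harmonic approximation.
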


We first need a classical result.
\begin{lemma}[Hodge]\label{lem:hodge}
Let $\omega$ be a smooth one-form on $\mathbb{S}^2$, then there exist smooth functions $f$ and $g$ $\in \mathcal{C}^\infty(\mathbb{S}^2)$ such that
\begin{equation*}
\omega = d f + \star d g.
\end{equation*}
\end{lemma}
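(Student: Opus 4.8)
The plan is to realize the decomposition by solving two scalar Poisson equations on $\mathbb{S}^2$ and then showing that the leftover ``harmonic'' piece must vanish because the sphere is simply connected. Concretely, I would look for $f,g \in \mathcal{C}^\infty(\mathbb{S}^2)$ so that the residual $\eta := \omega - df - \star dg$ is simultaneously closed and coclosed, and then argue $\eta = 0$. First I would fix the bookkeeping forced by the conventions of this section: with $(\star\omega)_A = \varepsilon_{AB}\omega^B$ one checks $\star\star = -\mathrm{id}$ on one-forms, whence for a scalar $h$ the standard identity $\delta\,dh = -\Delta h$ holds (i.e.\ $\Delta_H = -\Delta$ on functions), and, using $d^2 = 0$, also $\delta(\star dh) = 0$ and $\star\,d\star dh = \Delta h$. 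Computing with these, $\delta\eta = \delta\omega - \delta df - \delta(\star dg) = \delta\omega + \Delta f$, while applying $\star$ to the two-form $d\eta$ gives $\star d\eta = \star d\omega - \star d(\star dg) = \star d\omega - \Delta g$. Thus $\eta$ is coclosed provided $f$ solves $\Delta f = -\delta\omega$, and closed provided $g$ solves $\Delta g = \star d\omega$.

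The second step is solvability of these two Poisson problems. On the compact manifold $\mathbb{S}^2$ the operator $\Delta = g_{\mathbb{S}^2}^{AB} D_A D_B$ has kernel equal to the constants and $L^2$-range equal to the mean-zero functions (this is the spectral theory of the Laplacian on the round sphere, with eigenvalues $-\ell(\ell+1)$), so each equation is solvable exactly when its right-hand side integrates to zero. For $f$ the compatibility condition holds because $\delta$ is the $L^2$-adjoint of $d$, so $\int_{\mathbb{S}^2}\delta\omega\,\desphere = \langle \omega, d(1)\rangle_{L^2} = 0$. For $g$ one notes that $(\star d\omega)\,\desphere = d\omega$ as two-forms, hence $\int_{\mathbb{S}^2}\star d\omega\,\desphere = \int_{\mathbb{S}^2} d\omega = 0$ by Stokes' theorem (the sphere has no boundary). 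Both conditions being met, elliptic regularity yields smooth $f,g$, unique up to additive constants; this indeterminacy is harmless since only $df$ and $\star dg$ enter the decomposition.

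With such $f,g$ in hand, the residual $\eta = \omega - df - \star dg$ is by construction both closed and coclosed. Since $\mathbb{S}^2$ is simply connected, the topological Poincar\'e lemma (vanishing of $H^1_{dR}(\mathbb{S}^2)$) gives $\eta = dh$ for some $h \in \mathcal{C}^\infty(\mathbb{S}^2)$; coclosedness then reads $\delta dh = -\Delta h = 0$, forcing $h$ to be constant and hence $\eta = 0$. Equivalently, one integrates by parts: $\int_{\mathbb{S}^2}|\eta|^2\,\desphere = \langle dh, \eta\rangle_{L^2} = \langle h, \delta\eta\rangle_{L^2} = 0$, so $\eta \equiv 0$. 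This gives $\omega = df + \star dg$, which is the claim. I emphasize that the topological Poincar\'e lemma invoked here is unrelated to the analytic spectral-gap inequality of Lemma~\ref{lem:poinca}, so there is no circularity (indeed Lemma~\ref{lem:poinca} is proved \emph{using} the present decomposition).

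The only genuine subtlety, and the step I would be most careful about, is keeping the Hodge-star and codifferential sign conventions consistent throughout --- in particular verifying $\delta df = -\Delta f$, $\delta(\star dg) = 0$, and $\star d\star dg = \Delta g$ under the normalization $(\star\omega)_A = \varepsilon_{AB}\omega^B$ and $\delta = -\star d\star$ fixed at the start of this section. Once these three identities are pinned down, the reduction to the two scalar Poisson equations, the verification of their compatibility conditions, and the vanishing of the harmonic remainder are all routine.
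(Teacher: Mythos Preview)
Your proof is correct and is the standard argument for the Hodge decomposition on $\mathbb{S}^2$: reduce to two scalar Poisson equations, verify the compatibility conditions via Stokes and the $L^2$-adjointness of $d$ and $\delta$, and kill the harmonic remainder using $H^1_{dR}(\mathbb{S}^2)=0$. The sign bookkeeping you flag is indeed the only delicate point, and you have it right under the paper's conventions.

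The paper itself, however, does not prove this lemma at all: it is stated as a ``classical result'' and used as a black box in the subsequent derivation of the Bochner--Weitzenb\"ock identity and the spectral characterization of $\Delta_H$ on one-forms. So there is nothing to compare your argument against; you have supplied a complete proof where the paper gives none.
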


We deduce a simple case of the Bochner--Weitzenb\"ock identity.
\begin{lemma}
For any one-form $\omega$ on the sphere $\mathbb{S}^2$, we  have
\begin{equation}\label{eq:second}
\int_{\mathbb{S}^2} |D \omega|^2 \desphere = \int_{\mathbb{S}^2}(\Delta_H \omega)_A \omega^A \desphere-\int_{\mathbb{S}^2}|\omega|^2\desphere.
\end{equation}
\end{lemma}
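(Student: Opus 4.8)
The plan is to establish the Bochner--Weitzenb\"ock formula pointwise and then integrate. First I would expand the Hodge--de Rham Laplacian $\Delta_H = d\delta + \delta d$ acting on the one-form $\omega$ in terms of the Levi-Civita connection $D$. Using $\delta\omega = -D^A\omega_A$ and $(d\omega)_{AB} = D_A\omega_B - D_B\omega_A$, a direct computation gives
\begin{equation*}
(\Delta_H\omega)_A = -D^B D_B\omega_A - D_A D^B\omega_B + D^B D_A\omega_B.
\end{equation*}
The first term is the (positive) connection Laplacian $(D^\star D\omega)_A$, while the combination of the remaining two is a commutator of second covariant derivatives.

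Next I would invoke the commutation relation recorded in the preamble, $D_A D_B\omega_C - D_B D_A\omega_C = -\text{Rm}\indices{_{ABC}^D}\omega_D$, to rewrite $D^B D_A\omega_B - D_A D^B\omega_B$ as a single curvature contraction. Tracing with $g^{BC}_{\mathbb{S}^2}$ and using the symmetries of the Riemann tensor together with the definition $\text{Ric}_{AB} := \text{Rm}\indices{_A{}_C{}^C{}_B}$ identifies this contraction with $\text{Ric}\indices{_A^D}\omega_D$, yielding the Weitzenb\"ock identity $\Delta_H\omega = D^\star D\omega + \text{Ric}(\omega)$. The one subtlety here is bookkeeping the index placements and sign conventions so that the curvature term enters with the correct sign; this is the step I expect to require the most care.

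I would then specialize to the round sphere. Since $\mathbb{S}^2$ has constant Gauss curvature equal to $1$, its Ricci tensor satisfies $\text{Ric}_{AB} = (g_{\mathbb{S}^2})_{AB}$, so $\text{Ric}(\omega)_A = \omega_A$ and the formula becomes $\Delta_H\omega = D^\star D\omega + \omega$. Finally, contracting against $\omega^A$, integrating over $\mathbb{S}^2$, and integrating the connection-Laplacian term by parts (there are no boundary contributions, $\mathbb{S}^2$ being closed) gives
\begin{equation*}
\int_{\mathbb{S}^2}(\Delta_H\omega)_A\omega^A\,\desphere = \int_{\mathbb{S}^2}|D\omega|^2\,\desphere + \int_{\mathbb{S}^2}|\omega|^2\,\desphere,
\end{equation*}
which is precisely the claimed identity after rearrangement. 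The only analytic input is the divergence theorem on the closed manifold $\mathbb{S}^2$; everything else is algebra with the curvature tensor.
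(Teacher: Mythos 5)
Your proof is correct, but it follows a genuinely different route from the paper's. You establish the Weitzenb\"ock identity $\Delta_H\omega = D^\star D\omega + \text{Ric}(\omega)$ pointwise for an arbitrary one-form, by expanding $d\delta + \delta d$ in covariant derivatives and converting the commutator $D^B D_A \omega_B - D_A D^B \omega_B$ into a Ricci contraction via the stated commutation relation; with the paper's conventions ($\text{Rm}_{ABCD} = g_{BC}g_{AD}-g_{AC}g_{BD}$ on the unit sphere, so $\text{Ric}=g_{\mathbb{S}^2}$) the curvature term indeed comes out as $+\omega_A$, and contracting with $\omega^A$ and integrating by parts gives exactly the claimed identity. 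The paper instead sidesteps the tensorial commutator computation: it invokes the Hodge decomposition $\omega = df + \star dg$ (its Lemma on Hodge), verifies the identity $\Delta(df) = -\Delta_H(df) + \text{Ric}(df)$ by commuting derivatives only on the third derivatives of a \emph{function}, observes that the same holds for $\star dg$ since $D\varepsilon = 0$, and then assembles the general case. The trade-off: the paper's argument keeps the index bookkeeping minimal (commutation on $D_C f$ only) but leans on the Hodge theorem, which it needs anyway for the subsequent spectral lemma; your argument is self-contained at this step, requires no decomposition, and generalizes verbatim to any closed Riemannian manifold with $\omega_A$ replaced by $\text{Ric}\indices{_A^B}\omega_B$ --- at the price of the more delicate sign and index bookkeeping in the curvature contraction, which, as you note, is the one step demanding care, and which checks out under the paper's conventions.
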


\begin{proof}
We compute
\begin{equation*}
\begin{aligned}
g^{A B} D_A D_B D_C f = g^{A B} D_A D_C D_B f = g^{A B}D_C  D_A D_B f - g^{A B} \text{Rm}\indices{_{ACB}^E} D_E f.
\end{aligned}
\end{equation*}
Now, on functions, $\Delta_H = - \Delta$, and $\text{Rm}\indices{_A_C^A^E} = -\text{Ric}\indices{_C^E}$. Therefore we have, considering $\text{Ric}$ as a map from one-forms to one-forms:
\begin{equation*}
\Delta (d f) = - \Delta_H (d f) + \text{Ric} (d f),
\end{equation*}
for $f \in \mathcal{C}^\infty(\mathbb{S}^2)$.
Since $D_A \varepsilon_{BC} = 0$, the same holds for the dual:
\begin{equation*}
\Delta (\star d g) = - \Delta_H (\star d g) + \text{Ric} (\star d g),
\end{equation*}
for $g \in \mathcal{C}^\infty(\mathbb{S}^2)$.

Consider now a one-form $\omega$, by the Hodge decomposition (Lemma~\ref{lem:hodge}), we then have
\begin{equation}
(\Delta \omega)_A = - (\Delta_H \omega)_A + \text{Ric}\indices{_A^B}\omega_B.
\end{equation}
Now, $\text{Ric} = g_{\mathbb{S}^2}$. We now contract the previous display with $\omega^A$. We integrate by parts on $\mathbb{S}^2$ in order to obtain the claim.
\end{proof}

We then characterize the spectrum of $\Delta_H$ on one-forms.

\begin{lemma}\label{lem:spec}
The smooth one-form $\omega$ is an eigenvector for $\Delta_H$ if and only if it is of the form $\omega = d f + \star d g$, with $f$ and $g$ smooth eigenfunctions of $\Delta_H$ with the same eigenvalue.
\end{lemma}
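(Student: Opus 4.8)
The plan is to reduce the eigenvalue problem for $\Delta_H$ on one-forms to the eigenvalue problem on functions, which is already understood (on functions $\Delta_H = -\Delta$), by combining the Hodge decomposition of Lemma~\ref{lem:hodge} with the fact that $\Delta_H$ commutes with $d$, $\delta$ and $\star$. First I would record these commutation facts: since $d^2 = 0$ and $\delta^2 = 0$ one has $\Delta_H d = d \delta d = d \Delta_H$ and $\Delta_H \delta = \delta d \delta = \delta \Delta_H$, and on an oriented Riemannian surface $\Delta_H$ commutes with the Hodge star, $\star \Delta_H = \Delta_H \star$. I would also note that on functions $\Delta_H f = \delta d f$ (as $\delta f = 0$), so that $\Delta_H(df) = d \Delta_H f$ and, using the star, $\Delta_H(\star d g) = \star d \Delta_H g$.

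With these in hand the backward implication is immediate: if $\Delta_H f = \lambda f$ and $\Delta_H g = \lambda g$, then
\begin{equation*}
\Delta_H(df + \star d g) = d \Delta_H f + \star d \Delta_H g = \lambda (df + \star d g),
\end{equation*}
so $\omega = df + \star d g$ is an eigenvector with eigenvalue $\lambda$. For the forward implication, given an eigenvector $\Delta_H \omega = \lambda \omega$, I would apply Lemma~\ref{lem:hodge} to write $\omega = df + \star d g$; since $H^1(\mathbb{S}^2) = 0$ this is the orthogonal splitting of $\omega$ into an exact and a coexact part. Because $\Delta_H$ preserves both the exact and the coexact subspaces (it commutes with $d$ and with $\star$), and these intersect only in $0$, the eigenvalue equation separates into $\Delta_H(df) = \lambda\, df$ and $\Delta_H(\star d g) = \lambda \star d g$. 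Via the commutation relations these read $d(\Delta_H f - \lambda f) = 0$ and, after applying $\star$ and using $\star\star = -1$ on one-forms, $d(\Delta_H g - \lambda g) = 0$.

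The step that requires genuine care — and the only real obstacle — is passing from these two relations to the statement that $f$ and $g$ are honest eigenfunctions. Since $\mathbb{S}^2$ is connected, $d(\Delta_H f - \lambda f) = 0$ only gives $\Delta_H f - \lambda f = c$ for a constant $c$, and likewise for $g$. For $\lambda \neq 0$ the constant is absorbed by replacing $f$ with $f + c/\lambda$, which changes neither $df$ nor the decomposition of $\omega$, yielding $\Delta_H f = \lambda f$; the same adjustment works for $g$. For the degenerate case $\lambda = 0$ one cannot divide by $\lambda$, and must instead integrate $\Delta_H f = c$ over the sphere and use $\int_{\mathbb{S}^2} \Delta_H f \, \desphere = 0$ to force $c = 0$, so that $f$ (and similarly $g$) is already a zero-eigenfunction. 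Carrying out this constant bookkeeping, and invoking $\star \Delta_H = \Delta_H \star$ so that the coexact part is treated on exactly the same footing as the exact part, completes the proof.
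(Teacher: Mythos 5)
Your proof is correct and follows essentially the same route as the paper's: the Hodge decomposition of Lemma~\ref{lem:hodge}, commutation of $\Delta_H$ with $d$ and $\star$, and absorption of the integration constants into $f$ and $g$. The only differences are cosmetic --- you split the eigenvalue equation using orthogonality of the exact and coexact subspaces instead of the paper's ``only harmonic functions are constants'' step, and you explicitly treat $\lambda = 0$ (which the paper's proof elides by dividing by $\lambda$), a welcome extra bit of care even though that case is vacuous here, as $\mathbb{S}^2$ carries no nonzero harmonic one-forms.
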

\begin{proof}
If $\Delta_H f = \lambda f$ and $\Delta_H g = \lambda g$, then
\begin{equation*}
\Delta_H (df) = d \delta d f + 0 =\lambda d f,
\end{equation*}
also
\begin{equation*}
\Delta_H (\star dg) = d \delta d (\star dg) + 0 =\lambda \star d g.
\end{equation*}
On the other hand, if $\omega$ is a one-form on the sphere, we have, by the Hodge Theorem, that there exist $f$ and $g$ such that $\omega = d f + \star d g$. Then, imposing the eigenvalue condition, we have
\begin{equation}
\Delta_H (d f + \star d g) = \lambda (d f + \star d g) \implies d (\Delta_H f-\lambda f) = \star d (\Delta_H g-\lambda g).
\end{equation}
This implies, since the only harmonic functions on the sphere are the constant functions, $\Delta_H f-\lambda f = c_1$, $\Delta_H g-\lambda g=c_2$. By considering $f_1 := f+ c_1/\lambda$ and $g_1 := g + c_2/\lambda$, we obtain two functions $f_1$ and $g_1$ in the conditions.
\end{proof}
We are now ready to prove Lemma~\ref{lem:poinca}.

\begin{proof}[Proof of Lemma~\ref{lem:poinca}]
By Lemma~\ref{lem:spec}, the spectrum of $\Delta_H$ on $\mathbb{S}^2$ on one-forms is the same as the spectrum on functions. 

Let us now take the Hodge decomposition of $\omega$, $\omega = df + \star dg$, and write $f$, $g$ in the spherical harmonic decomposition. We obtain sequences $f_k$, $g_k$ such that
\begin{equation*}
f_k \stackrel{L^2(\mathbb{S}^2)}{\to} f, \qquad g_k \stackrel{L^2(\mathbb{S}^2)}{\to} g.
\end{equation*}
Since $f$, $g$ are smooth, and hence in particular belong to $H^2(\mathbb{S}^2)$, the approximation by spherical harmonics is in $H^2(\mathbb{S}^2)$:
\begin{equation*}
f_k \stackrel{H^2(\mathbb{S}^2)}{\to} f, \qquad g_k \stackrel{H^2(\mathbb{S}^2)}{\to} g.
\end{equation*}
Letting now $\omega^{(k)}_A := d f_k + \star d g_k$, we have, from Equation~(\ref{eq:second}), and from the fact that the smallest eigenvalue of $\Delta_H$ is $2$,
\begin{equation*}
\int_{\mathbb{S}^2} |D \omega^{(k)}|^2 \desphere \geq \int_{\mathbb{S}^2} (2|\omega^{(k)}|^2-|\omega^{(k)}|^2 ) \desphere = \int_{\mathbb{S}^2} |\omega^{(k)}|^2 \desphere.
\end{equation*}
Taking the limit $k \to \infty$, and using the continuity of the norm, we obtain the claim.
\end{proof}

\section{Sobolev lemmas}
These Sobolev lemmas are used throughut the note. The proofs being standard, we do not provide them.
\begin{lemma}[Sobolev estimate for scalar functions on the sphere]\label{lem:sobsphere}
Let $(\mathbb{S}^2, g_{\mathbb{S}^2})$ be the two-sphere with the standard metric, let $D$ be the associated Levi-Civita connection, and let $f$ be a smooth function $f: \mathbb{S}^2 \to \R$. Let $\bar f := \frac 1 {4 \pi} \int_{\mathbb{S}^2} f\desphere$ be the spherical average of $f$. There exists a universal constant $C$ such that
\begin{equation}
\sup_{\mathbb{S}^2}|f - \bar f|^2 \leq C \int_{\mathbb{S}^2} |DD f|^2\desphere.
\end{equation}
\end{lemma}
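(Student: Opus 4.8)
The plan is to reduce everything to the spherical-harmonic decomposition of $f$ and to exploit the fact that two derivatives sit just above the Sobolev threshold on a surface. Write $f = \sum_{\ell \geq 0} f_\ell$, where $f_\ell$ is the $L^2(\mathbb{S}^2)$-orthogonal projection of $f$ onto the eigenspace of $-\Delta$ with eigenvalue $\lambda_\ell = \ell(\ell+1)$; since the $\ell=0$ eigenspace consists of the constants, we have $f_0 = \bar f$, and hence $f - \bar f = \sum_{\ell \geq 1} f_\ell$.

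First I would control the Hessian energy by the stated right-hand side spectrally. Applying the Bochner--Weitzenb\"ock identity (\ref{eq:second}) to the exact one-form $\omega = df$, and using $\Delta_H(df) = d(\Delta_H f) = -d(\Delta f)$ together with one integration by parts, gives
\[
\int_{\mathbb{S}^2} |DD f|^2 \desphere = \int_{\mathbb{S}^2} (\Delta f)^2 \desphere - \int_{\mathbb{S}^2} |D f|^2 \desphere .
\]
Expanding both terms on the right in the eigenbasis yields
\[
\int_{\mathbb{S}^2} |DD f|^2 \desphere = \sum_{\ell \geq 1} \lambda_\ell(\lambda_\ell - 1)\,\|f_\ell\|_{L^2}^2 \ \geq\ \frac 1 2 \sum_{\ell \geq 1}\lambda_\ell^2\,\|f_\ell\|_{L^2}^2 ,
\]
where the last inequality uses $\lambda_\ell \geq 2$ for $\ell \geq 1$. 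Thus the Hessian energy controls the weighted norm $\sum_{\ell \geq 1}\lambda_\ell^2\|f_\ell\|_{L^2}^2$.

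Next I would convert this into a pointwise bound by means of the addition theorem for spherical harmonics. Fixing $p \in \mathbb{S}^2$ and writing $f_\ell = \sum_{m=-\ell}^{\ell} c_{\ell m} Y_{\ell m}$ in an $L^2$-orthonormal basis of the $\ell$-th eigenspace, the Cauchy--Schwarz inequality together with the identity $\sum_{m=-\ell}^{\ell}|Y_{\ell m}(p)|^2 = (2\ell+1)/4\pi$ gives $|f_\ell(p)| \leq \|f_\ell\|_{L^2}\big((2\ell+1)/4\pi\big)^{1/2}$, uniformly in $p$. Summing over $\ell$ and applying Cauchy--Schwarz once more,
\[
|f(p) - \bar f| \leq \sum_{\ell \geq 1}|f_\ell(p)| \leq \left(\sum_{\ell \geq 1}\lambda_\ell^2\|f_\ell\|_{L^2}^2\right)^{1/2}\left(\sum_{\ell \geq 1}\frac{2\ell+1}{4\pi\,\lambda_\ell^2}\right)^{1/2}.
\]
Since $\lambda_\ell^2 \sim \ell^4$ while $2\ell+1\sim 2\ell$, the second factor is a convergent numerical series, hence a universal constant; combining this with the Hessian bound from the previous step and taking the supremum over $p$ proves the claim.

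The genuine obstacle is the borderline character of the estimate: on a two-dimensional manifold $H^1$ does not embed into $L^\infty$, so one full additional derivative is indispensable, and this is exactly what the convergence of $\sum_{\ell}(2\ell+1)/\lambda_\ell^2 \sim \sum_\ell \ell^{-3}$ encodes, whereas with only one derivative the analogous series would be $\sum_\ell \ell^{-1}$, which diverges. The Bochner step plays a double role: it produces the Hessian $|DDf|^2$, rather than merely $(\Delta f)^2$, on the left, and the Ricci contribution, equal to $\int |Df|^2$ because $\text{Ric} = g_{\mathbb{S}^2}$, is a strictly lower-order correction that for $\ell \geq 1$ is absorbed at the cost of only a universal factor (here $\tfrac12$), since the bottom of the spectrum satisfies $\lambda_1 = 2$.
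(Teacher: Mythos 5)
The paper itself offers no proof of Lemma~\ref{lem:sobsphere}: the Sobolev lemmas of the Appendix are explicitly dismissed as standard, so there is no internal argument to compare yours against. Judged on its own, your proof is correct and complete. The Bochner step is a legitimate reuse of the paper's identity (\ref{eq:second}) with $\omega = df$, and the sign bookkeeping ($\Delta_H(df) = -\,d(\Delta f)$, one integration by parts) is right, yielding $\int_{\mathbb{S}^2}|DDf|^2\desphere = \int_{\mathbb{S}^2}(\Delta f)^2\desphere - \int_{\mathbb{S}^2}|Df|^2\desphere$. The spectral lower bound $\lambda_\ell(\lambda_\ell-1)\geq \tfrac12\lambda_\ell^2$ is valid precisely because $\lambda_1 = 2$, and the addition-theorem bound $|f_\ell(p)|\leq \|f_\ell\|_{L^2}\bigl((2\ell+1)/4\pi\bigr)^{1/2}$ is the standard reproducing-kernel estimate for the $\ell$-th eigenspace (and is basis-independent, so no harm in choosing any orthonormal $Y_{\ell m}$). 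The one point you use silently is the pointwise identity $f(p)-\bar f = \sum_{\ell\geq 1}f_\ell(p)$: the expansion is a priori only an $L^2$ limit, but your own estimate shows $\sum_{\ell\geq 1}\sup_{p}|f_\ell(p)| < \infty$, so the series converges absolutely and uniformly, and its uniform limit must agree with the $L^2$ limit $f$; a sentence to this effect would close the argument airtight. For contrast, the ``standard'' route the authors presumably had in mind is elliptic regularity plus the borderline embedding $H^2(\mathbb{S}^2)\hookrightarrow L^\infty(\mathbb{S}^2)$, combined with the same Bochner identity in the form $\|\Delta f\|_{L^2}^2 = \|DDf\|_{L^2}^2 + \|Df\|_{L^2}^2 \leq 2\,\|DDf\|_{L^2}^2$ (the last step again spectral, since $\lambda_\ell \leq \tfrac12\lambda_\ell^2$ for $\ell \geq 1$); your spherical-harmonics version buys self-containedness, an explicit universal constant, and a transparent display of why exactly two derivatives are needed, via the convergence of $\sum_\ell (2\ell+1)/\lambda_\ell^2$.
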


\begin{lemma}\label{lem:sobone}
There is a universal constant $C$ such that, for any one-form $\eta$ on $\mathbb{S}^2$, the following inequality holds:
\begin{equation}
\sup_{\mathbb{S}^2} |\eta|^2 \leq C \int_{\mathbb{S}^2} (|D D \eta|^2 +|\eta|^2)\desphere.
\end{equation}
\end{lemma}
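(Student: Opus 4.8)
The plan is to recognize the estimate as the Sobolev embedding $W^{2,2}(\mathbb{S}^2)\hookrightarrow C^0(\mathbb{S}^2)$ for sections of the cotangent bundle, and to reduce it — as far as possible — to the scalar statement of Lemma~\ref{lem:sobsphere}. First I would dispose of the fact that the right-hand side omits the first-order term $\int_{\mathbb{S}^2}|D\eta|^2\desphere$. On the closed surface $\mathbb{S}^2$ there are no boundary terms, so integrating by parts and using that the rough Laplacian $g^{BC}D_BD_C$ is the trace of $DD$ (whence $|g^{BC}D_BD_C\eta|\leq \sqrt 2\,|DD\eta|$ pointwise),
\begin{equation*}
\int_{\mathbb{S}^2}|D\eta|^2\desphere = -\int_{\mathbb{S}^2}\eta^A\,(g^{BC}D_BD_C\eta)_A\,\desphere \leq \Big(\int_{\mathbb{S}^2}|\eta|^2\desphere\Big)^{1/2}\Big(\sqrt2\int_{\mathbb{S}^2}|DD\eta|^2\desphere\Big)^{1/2}.
\end{equation*}
By Young's inequality $\int_{\mathbb{S}^2}|D\eta|^2\desphere\leq C\int_{\mathbb{S}^2}(|DD\eta|^2+|\eta|^2)\desphere$, so the right-hand side of the Lemma is comparable to the full $H^2$ energy $\int_{\mathbb{S}^2}(|DD\eta|^2+|D\eta|^2+|\eta|^2)\desphere$, and it suffices to bound $\sup_{\mathbb{S}^2}|\eta|^2$ by the latter.

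For the embedding itself I would argue by localization. Fix a finite atlas $\{\mathcal{U}_i\}$ of coordinate charts covering $\mathbb{S}^2$, together with a subordinate partition of unity $\{\chi_i\}$. In each chart write $\eta=\eta_A\,d\theta^A$; since the Christoffel symbols and the metric coefficients are smooth and bounded on the (compact) chart closures, the flat coordinate derivatives of the components $\eta_A$ up to second order are controlled pointwise by $|\eta|+|D\eta|+|DD\eta|$. Hence each $\chi_i\eta_A$ is a compactly supported $W^{2,2}$ function on a planar domain with $\|\chi_i\eta_A\|_{W^{2,2}(\R^2)}\leq C\,\|\eta\|_{H^2(\mathbb{S}^2)}$. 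Applying the flat two-dimensional Sobolev embedding $W^{2,2}(\R^2)\hookrightarrow C^0(\R^2)$ (valid since $2\cdot 2>2$) and summing over the finite cover yields $\sup_{\mathbb{S}^2}|\eta|\leq C\,\|\eta\|_{H^2(\mathbb{S}^2)}$, which is the claim. As an alternative consistent with the treatment of Lemma~\ref{lem:poinca}, one may instead invoke the Hodge decomposition (Lemma~\ref{lem:hodge}) to write $\eta=df+\star dg$; since $\star$ is a fibrewise isometry one has $|\eta|^2\leq 2(|df|^2+|dg|^2)$, and $f,g$ are recovered from $\mathrm{div}\,\eta$ and $\mathrm{curl}\,\eta$ by inverting $\Delta$, reducing matters to a $C^1$-bound for the scalar potentials.

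The main point to be careful about — rather than a genuine difficulty — is twofold. The embedding is borderline in dimension two: the first-order energy $\int|D\eta|^2$ alone does \emph{not} control $\sup|\eta|$, which is why two derivatives are genuinely needed and why the absorption carried out in the first step is the correct normalization rather than a gain. And $T^*\mathbb{S}^2$ admits no global frame by the hairy-ball theorem, so one cannot apply Lemma~\ref{lem:sobsphere} to global ``components'' of $\eta$; this is precisely what forces either the partition-of-unity localization or the reduction to scalar Hodge potentials. Neither step is delicate once set up, which is why the statement is standard.
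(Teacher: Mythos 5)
Your proof is correct; note that the paper itself offers no proof of this lemma (the appendix explicitly declares the Sobolev lemmas standard and omits their proofs), so there is no in-paper argument to compare against, and your write-up is a sound rendition of the standard one. The interpolation step is right: integration by parts plus the pointwise bound $|g^{BC}D_BD_C\eta|\leq\sqrt2\,|DD\eta|$ and Young's inequality do absorb the missing first-order term, and the partition-of-unity reduction to $W^{2,2}(\R^2)\hookrightarrow C^0(\R^2)$ is complete as stated. Two small remarks. First, your alternative via Hodge potentials is only a sketch: to close it you need elliptic regularity, $\|f\|_{H^3(\mathbb{S}^2)}\lesssim\|\Delta f\|_{H^1}+\|f\|_{L^2}$ with $\Delta f=\operatorname{div}\eta$ (and similarly for $g$ via the curl), before the scalar embedding applies to $df$ and $\star dg$; as written, ``reducing matters to a $C^1$-bound for the scalar potentials'' hides exactly the step where derivatives are counted. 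Second, your claim that the hairy-ball obstruction \emph{forces} either localization or Hodge potentials is overstated: a third route, arguably the one most consistent with this paper's conventions (where angular norms are defined through the rotation fields $\Omega_1,\Omega_2,\Omega_3$), is to apply the scalar Lemma~\ref{lem:sobsphere} to the three functions $\eta(\Omega_i)$, which span the values of $\eta$ pointwise even though no two of them form a global frame; since the $\Omega_i$ are smooth with bounded derivatives, $|DD(\eta(\Omega_i))|\lesssim|DD\eta|+|D\eta|+|\eta|$, and your interpolation step then finishes the proof without any atlas. None of this affects the validity of your main argument.
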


\begin{lemma}[1-d Sobolev embedding]\label{lem:1dsob}
Let $(a,b) \subset \R$, with $- \infty <a < b < \infty$. Let $f:(a,b) \to \R$, $f \in W^{1,1}(\R)$. Then there holds:
\begin{equation}
\norm{f(x)-(b-a)^{-1}\int_a^b f(y) \de y}_{L^\infty(\R)}\leq \norm{f'}_{L^1(\R)}.
\end{equation}
\end{lemma}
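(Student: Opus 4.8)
The plan is to reduce the whole statement to the fundamental theorem of calculus. First I would invoke the one-dimensional Sobolev regularity fact that any $f \in W^{1,1}((a,b))$ admits an absolutely continuous representative, which I continue to denote by $f$; for this representative the identity $f(x) - f(y) = \int_y^x f'(t)\,\de t$ holds for all $x, y \in (a,b)$. This is the only structural input needed, and in one dimension it is entirely standard.

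Next I would express the deviation of $f$ from its mean as an average of pointwise differences. Writing $\bar f := (b-a)^{-1}\int_a^b f(y)\,\de y$, I have the identity
\[
f(x) - \bar f = \frac{1}{b-a}\int_a^b \big(f(x) - f(y)\big)\,\de y.
\]
For the integrand I would estimate, uniformly in $x$ and $y$,
\[
|f(x) - f(y)| = \left| \int_y^x f'(t)\,\de t \right| \le \int_a^b |f'(t)|\,\de t = \norm{f'}_{L^1(\R)},
\]
where the middle inequality uses only that the interval with endpoints $x,y$ is contained in $(a,b)$. Substituting this bound into the averaging identity, the factor $(b-a)^{-1}$ cancels against the length of the interval of integration, yielding $|f(x) - \bar f| \le \norm{f'}_{L^1(\R)}$ for every $x$. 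Taking the essential supremum over $x$ gives the claim.

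The hard part, such as it is, is purely a matter of regularity bookkeeping: one must justify the fundamental theorem of calculus for $W^{1,1}$ functions, which is legitimate precisely because in one dimension $W^{1,1}$ embeds continuously into the space of absolutely continuous functions. Once this representative is fixed, the estimate follows immediately from the triangle inequality for integrals, with no need for compactness, spectral, or interpolation arguments.
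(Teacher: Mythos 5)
Your proof is correct, and since the paper omits the argument entirely (the appendix states ``The proofs being standard, we do not provide them''), your write-up is precisely the standard proof being alluded to: absolute continuity of the $W^{1,1}$ representative, the averaging identity $f(x)-\bar f = (b-a)^{-1}\int_a^b (f(x)-f(y))\,\de y$, and the fundamental theorem of calculus. You also implicitly (and correctly) repaired the statement's typographical slips, reading $W^{1,1}(\R)$, $L^\infty(\R)$ and $L^1(\R)$ as the corresponding spaces over $(a,b)$, which is the only reading under which the lemma is true as used in Lemma~\ref{lem:sobforrho}.
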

A straightforward application of Lemma~\ref{lem:1dsob} yields the following Lemma.
\begin{lemma}[Sobolev inequality involving only certain derivatives]\label{lem:sobforrho}
 Let $R > 2M$. Let $f: \mathcal{S}_e \to \R$ be a smooth function. Let $v_0 \geq 0$. Let $\bar f$ be the mean of $f$ over the spheres:
\begin{equation}
\bar f(u,v) := \frac 1 {4 \pi} \int_{\mathbb{S}^2} f(u,v,\omega) \desphere(\omega).
\end{equation}
Then, there exists a constant $C = C(R)$ such that, for any $\tilde v \geq v_0$,
\begin{equation}\label{eq:sobosigma}
\begin{aligned}
\sup_{( u,  \tilde v, \omega) \in \conminus_{\tilde v} \cap \{2M \leq r \leq R\}}|f(u, \tilde v, \omega)-\bar f ( u, \tilde v)|^2  \\ 
\leq C \int_{\conminus_{\tilde v} \cap \{2M \leq r \leq R\}}(|\snabla \snabla f|^2 + |(1-\mu)^{-1}\snabla_{\lbar} \snabla \snabla f|^2)(1-\mu) \desphere \de u
\end{aligned}
\end{equation}
\end{lemma}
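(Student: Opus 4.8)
The plan is to localise the estimate on each individual sphere using the two--angular--derivative Sobolev inequality (Lemma~\ref{lem:sobsphere}), and then to propagate the resulting pointwise bound transversally along the ingoing cone $\conminus_{\tilde v}$ by the one--dimensional embedding (Lemma~\ref{lem:1dsob}). The one genuinely non--trivial decision — and the source of the weights $(1-\mu)$ on the right--hand side of \eqref{eq:sobosigma} — is the choice of transversal variable. In the coordinate $u$ the region $\{2M\le r\le R\}$ on $\conminus_{\tilde v}$ is the \emph{half--line} $[u_R,\infty)$ (the horizon $r=2M$ sitting at $u=+\infty$), to which the finite--interval embedding of Lemma~\ref{lem:1dsob} cannot be applied. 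Instead I would parametrise the cone by $r\in[2M,R]$, a compact interval, noting that $\de u=-2(1-\mu)^{-1}\de r$ so that $\partial_r=-(1-\mu)^{-1}\lbar$ is exactly the horizon--regular vectorfield from the preliminaries.

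First I would freeze $(u,\tilde v)$ with $2M\le r(u,\tilde v)\le R$ and apply Lemma~\ref{lem:sobsphere} on the sphere $S_{u,\tilde v}$, whose metric is $\gbar=r^2 g_{\mathbb{S}^2}$. Since $r$ is bounded above and below on the region, all powers of $r$ comparing the round metric to $\gbar$ are absorbed into $C(R)$, giving, uniformly in $\omega$,
\[
|f(u,\tilde v,\omega)-\bar f(u,\tilde v)|^2\le C(R)\,G(r),\qquad G(r):=\int_{\mathbb{S}^2}|\snabla\snabla f|^2(u(r),\tilde v,\omega)\,\desphere(\omega).
\]
Then I would apply Lemma~\ref{lem:1dsob} to $G$ on $[2M,R]$; this is legitimate because $(1-\mu)^{-1}\lbar$ is smooth up to $r=2M$ and $f$ is smooth, so $G$ and $\partial_r G$ are bounded on the closed interval and $G\in W^{1,1}([2M,R])$. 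This yields
\[
\sup_{[2M,R]}G\le (R-2M)^{-1}\int_{2M}^R G\,\de r+\int_{2M}^R|\partial_r G|\,\de r.
\]

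Next I would compute $\partial_r G$ using $\partial_r=-(1-\mu)^{-1}\lbar$, the $r$--independence of the round volume form, and the commutation $[r\snabla,\snabla_\lbar]=0$; the commutator terms carry coefficients bounded on $\{2M\le r\le R\}$ and are reabsorbed into $G$ after Poincar\'e (Lemma~\ref{lem:poinca}) applied to the one--form $\snabla f$. By Cauchy--Schwarz on $\mathbb{S}^2$ this gives
\[
|\partial_r G|\le C(R)\,G(r)+C\Big(\textstyle\int_{\mathbb{S}^2}|\snabla\snabla f|^2\desphere\Big)^{1/2}\Big(\textstyle\int_{\mathbb{S}^2}(1-\mu)^{-2}|\snabla_\lbar\snabla\snabla f|^2\desphere\Big)^{1/2}.
\]
Finally I would change variables back via $\de r=\tfrac12(1-\mu)\de u$. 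The term $\int_{2M}^R G\,\de r$ becomes $\tfrac12\int(1-\mu)|\snabla\snabla f|^2\desphere\,\de u$, which is precisely the first term on the right of \eqref{eq:sobosigma}; applying Cauchy--Schwarz in $r$ to the derivative term and converting both factors produces $\tfrac12\int(1-\mu)|\snabla\snabla f|^2$ and $\tfrac12\int(1-\mu)^{-1}|\snabla_\lbar\snabla\snabla f|^2$, whose geometric mean is controlled by their sum (AM--GM). Combining the four steps and taking the supremum over $\omega$ and over the cone gives \eqref{eq:sobosigma}.

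The main obstacle is entirely concentrated in the horizon degeneration: one must realise that the finite--interval one--dimensional embedding forces $r$ (equivalently the regular field $(1-\mu)^{-1}\lbar$) as the transversal parameter, and that the Jacobian $\de u=2(1-\mu)^{-1}\de r$ is exactly what manufactures the $(1-\mu)$ weights in the statement. Threading these weights correctly through the two Cauchy--Schwarz steps so that they land on the two terms of the right--hand side is the only delicate point; everything else is bounded--coefficient bookkeeping on the compact set $\{2M\le r\le R\}$, which is why the authors can rightly call this ``a straightforward application'' of Lemma~\ref{lem:1dsob}.
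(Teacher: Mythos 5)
Your argument is correct and is precisely the route the paper intends: the paper gives no written proof beyond calling the lemma ``a straightforward application of Lemma~\ref{lem:1dsob}'', and your combination of the spherical Sobolev estimate (Lemma~\ref{lem:sobsphere}) with the one-dimensional embedding in the compact variable $r \in [2M,R]$ — so that $\de u = -2(1-\mu)^{-1}\de r$ manufactures the $(1-\mu)$ weights, with Cauchy--Schwarz in $r$ and AM--GM splitting them onto the two terms — fills that in exactly. One small simplification: since the right-hand side of (\ref{eq:sobosigma}) contains $\snabla_\lbar \snabla \snabla f$ directly and $\snabla_\lbar \gbar = 0$, differentiating $G$ requires no commutation and hence no Poincar\'e step; Cauchy--Schwarz on the sphere alone suffices, so that part of your argument can be deleted.
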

Finally, we state the following Lemma, which we need in the proof of Proposition~\ref{prop:morealpha}.

\begin{lemma}\label{lem:sobpsi}
For any $R >0$ there exists a constant $C >0$ such that the following holds. Let $\Psi \in \Lambda^1(\mathcal{B})$ be a $1$-form tangent to the spheres of constant $r$-coordinate.
Then, we have
\begin{equation}\label{eq:sobpsi}
\begin{aligned}
|\Psi|^2(\bar u,\bar v,\omega) \leq C \sum_{I \in \iota^\Omega_{\leq 2}}\int_{\conplus_{\bar u} \cap \{R \leq r \leq 2 R \}} |\slie^I \Psi|^2 \de v \desphere \\
+\sum_{I \in \iota^\Omega_{\leq 2}} \left(\int_{\conplus_{\bar u} \cap \{R \leq  r \leq 2R\}} |\slie^I \Psi|^2 \de v \desphere+ \int_{\conplus_{\bar u} \cap \{ r \geq R\}} |\snabla_L \slie^I \Psi|^2 \de v \desphere \right)^{\frac 1 2} \\
\times\left(\int_{\conplus_{\bar u} \cap \{ r \geq R\}} v^2 |\snabla_L \slie^I \Psi|^2 \de v \desphere \right)^{\frac 1 2}
\end{aligned}
\end{equation}
Here, $\bar u$ and $\bar v$ are such that
\begin{equation*}
\bar v - \bar u \geq 2 R_* = 2(R+2M\log(R-2M)-3M-2M\log(M)).
\end{equation*}
\end{lemma}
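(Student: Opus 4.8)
The plan is to combine a Sobolev embedding on the spheres with a one-dimensional estimate along the outgoing cone $\conplus_{\bar u}$, the latter powered by a Hardy inequality. Throughout I use that on $\{u=\bar u\}$ one has $\partial_v=\tfrac12 L$ on the $(t,r_*)$ variables, that $r\sim v$ as $r\to\infty$ (since $r_*=(v-\bar u)/2$), and that the hypothesis $\bar v-\bar u\ge 2R_*$ guarantees $r(\bar u,\bar v)\ge R$, so the point lies on the part of the cone over which the fluxes in (\ref{eq:sobpsi}) are computed.

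First I would reduce to a sphere-by-sphere statement. Fixing $\bar v$ and viewing $\Psi(\bar u,\bar v,\cdot)$ as a one-form on $\mathbb{S}^2$, the Sobolev estimate for one-forms (Lemma~\ref{lem:sobone}) gives $|\Psi|^2(\bar u,\bar v,\omega)\le\sup_{\mathbb{S}^2}|\Psi|^2\lesssim\int_{\mathbb{S}^2}(|\snabla\snabla\Psi|^2+|\Psi|^2)\desphere$. Since the rotation fields $\Omega_i$ span $\mathcal{B}$ at each point and satisfy $[\slie_{\Omega_i},\snabla]=0$, two angular covariant derivatives are controlled, up to lower-order terms, by the $\slie^I\Psi$ with $I\in\iota^\Omega_{\le 2}$, so that
\[
|\Psi|^2(\bar u,\bar v,\omega)\lesssim\sum_{I\in\iota^\Omega_{\le 2}}m_I(\bar v),\qquad m_I(v):=\int_{\mathbb{S}^2}|\slie^I\Psi|^2(\bar u,v,\omega')\desphere(\omega').
\]
It then suffices to bound each $m_I(\bar v)$ by the right-hand side of (\ref{eq:sobpsi}).

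The second step is the one-dimensional argument in $v$. Let $v_a<v_b$ be the $v$-values at which $r=R$ and $r=2R$, so $v_b-v_a=O(R)$ is fixed. By averaging I can pick $v_\star\in[v_a,v_b]$ with $m_I(v_\star)\lesssim\int_{\conplus_{\bar u}\cap\{R\le r\le 2R\}}|\slie^I\Psi|^2\de v\desphere$. Writing $m_I(\bar v)=m_I(v_\star)+\int_{v_\star}^{\bar v}m_I'(v)\de v$ and using $\snabla_L\gbar=0$ to compute $m_I'(v)=\int_{\mathbb{S}^2}\gbar(\snabla_L\slie^I\Psi,\slie^I\Psi)\desphere$, a weighted Cauchy--Schwarz split ($v\cdot v^{-1}$) gives
\[
\int_{v_\star}^{\bar v}|m_I'|\de v\lesssim\Big(\int_{\conplus_{\bar u}\cap\{r\ge R\}}v^2|\snabla_L\slie^I\Psi|^2\de v\desphere\Big)^{1/2}\Big(\int_{\conplus_{\bar u}\cap\{r\ge R\}}v^{-2}|\slie^I\Psi|^2\de v\desphere\Big)^{1/2},
\]
whose first factor is exactly the weighted flux in (\ref{eq:sobpsi}).

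To finish I must control the $v^{-2}$-weighted integral, and this Hardy step is the main technical point. Setting $h(v):=\big(\int_{\mathbb{S}^2}|\slie^I\Psi|^2\desphere\big)^{1/2}$, one has $|h'(v)|\le\big(\int_{\mathbb{S}^2}|\snabla_L\slie^I\Psi|^2\desphere\big)^{1/2}$, and integrating by parts on $[v_\star,\infty)$ from a suitable point $v_\star$ of the strip yields
\[
\int_{v_a}^\infty v^{-2}h^2\de v\lesssim m_I(v_\star)+\int_{v_\star}^\infty|h'|^2\de v\lesssim\int_{\conplus_{\bar u}\cap\{R\le r\le 2R\}}|\slie^I\Psi|^2\de v\desphere+\int_{\conplus_{\bar u}\cap\{r\ge R\}}|\snabla_L\slie^I\Psi|^2\de v\desphere,
\]
the boundary term being absorbed into the strip average just as for $m_I(v_\star)$, and the residual $\int_{v_a}^{v_\star}v^{-2}h^2$ being harmless since the weight is bounded on the finite strip. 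This right-hand side is precisely the square of the first factor of the second term in (\ref{eq:sobpsi}). Combining the three displays, summing over $I\in\iota^\Omega_{\le 2}$, and invoking the spherical reduction of the first step gives (\ref{eq:sobpsi}). The only genuine obstacles are the endpoint handling in the Hardy inequality and the bookkeeping between the angular norm (normalized by $\Omega/r$) and the unnormalized operators $\slie_{\Omega_i}$; the equivalence $r\sim v$ on the cone renders the latter routine.
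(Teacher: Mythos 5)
Your proposal is correct and follows essentially the same route as the paper, whose own proof is only a sketch citing exactly the three ingredients you use: the Sobolev inequality for one-forms on spheres (Lemma~\ref{lem:sobone}), Cauchy--Schwarz with the $v\cdot v^{-1}$ splitting, and the one-dimensional Hardy inequality to trade the $v^{-2}$-weighted flux of $\slie^I\Psi$ for the strip term plus the unweighted $\snabla_L$-flux. Your write-up in fact supplies the details the paper omits (the averaging choice of $v_\star$ in $\{R\leq r\leq 2R\}$, the boundary term in the Hardy step, and the $r\sim v$ bookkeeping), all of which are handled correctly.
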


\begin{proof}[Sketch of proof]
The lemma is a straightforward consequence of the Sobolev inequality on spheres (Lemma~\ref{lem:sobone}), the Cauchy--Schwarz inequality, as well as the $1$-dimensional Hardy inequality:
\begin{equation}
\int_0 ^\infty x^{-2} (F(x))^2 \de x \leq C \int_{0}^\infty \left(\desu{f}{x}\right)^2\de x.
\end{equation}
Here, $f$ is any smooth real-valued function on $\R$, and $F$ is its primitive (in $x$) which vanishes at $0$.
\end{proof}

\section{Comparison with the work by Andersson, B\"ackdahl and Blue.}\label{sec:compare}

Here, we compare the present approach with the work in \cite{newblue}. We don't want to enter deeply into the spinor formalism used in the relevant paper. Let us focus on the result and compare it to what we obtain.

The authors obtain, in paper \cite{newblue}, an estimate of the form
\begin{equation*}
\begin{aligned}
\int_{\future(\{t=t_1\})}\frac{(r-3M)^2}{r^3}(|\beta_{\hat X}|^2+|\beta_{\hat Y}|^2)\\+M \frac{M(r-2M)}{r^3}|\beta_{\hat Z}|^2 + \frac{M(r-3M)^2(r-2M)}{r^5}|\beta_{\hat T}|^2 \de\text{Vol}\\ \lesssim E(\{t= t_1\})
\end{aligned}
\end{equation*}
Here,
\begin{equation*}
E(\{t=t_1\}) := \int_{\{t=t_1\}} (|\beta_{\hat X}|^2+|\beta_{\hat Y}|^2+|\beta_{\hat Z}|^2+|\beta_{\hat T}|^2)r^2 \de r \sin \theta \de \theta \de \phi.
\end{equation*}
And
\begin{align*}
\beta_{\hat T} &= (1-\mu)^{-\frac 1 2}\nabla_{\partial_t}(r\rho+ir\sigma)\\
\beta_{\hat X} &= \nabla_{\partial_\theta}(\rho+i\sigma) \\
\beta_{\hat Y} &= \csc \theta \nabla_{\partial_\phi}(\rho+i \sigma)\\
\beta_{\hat Z} &= \frac 1 r \partial_r (r^2 (\rho + i\sigma)).
\end{align*}
In this case, a first-order Morawetz estimate is achieved which does not ``see'' the non-decaying modes (indeed, $\beta$ vanishes on the non-radiating modes).

If we compare with our Morawetz estimate, see Equation~(\ref{eq:fullen}), we see that we achieve a Morawetz estimate at the level of two derivatives of the field. Hence, in our approach, we require control on more derivatives in the initial data.

\bibliography{max.bib}
\bibliographystyle{plain}

\end{document}